\pgfplotsset{compat=1.10}
\newcommand{\PP}{\mathbf{P}}
\newcommand{\calR}{\mathcal{R}}
\newcommand{\calB}{\mathcal{B}}
\newcommand{\calP}{\mathcal{P}}
\newcommand{\calT}{\mathcal{T}}
\newcommand{\RR}{\mathbb{R}}
\newcommand{\NN}{\mathbb{N}}
\newcommand{\N}{\mathbb{N}}
\newcommand{\EE}{\mathbf{E}}
\newcommand{\E}{\mathbf{E}}
\newcommand{\apx}{\normalfont{\text{apx}}}
\newcommand{\aux}{\mu}
\newtheorem{theorem}{Theorem}
\newtheorem{lemma}{Lemma}
\newtheorem{corollary}{Corollary}
\newtheorem{proposition}{Proposition}
\newtheorem{definition}{Definition}
\newtheorem{claim}{Claim}
\def\prn#1{\left( #1 \right)}
\def\f#1#2{\ensuremath{\nicefrac{#1}{#2}}}
\newcommand{\cD}{\mathcal{D}}
\newcommand{\Rp}{\Real_{+}}
\newcommand{\ACCMmax}{\normalfont{\mathcal{C}({F})}}
\newcommand{\Gmax}[1]{\normalfont{G_{{#1}}(F)} }
\newcommand{\rva}[1]{\normalfont{\text{RV}_{#1}}}
\def\set#1{\left\{ #1 \right\}}
\def\Real{\mathbb{R}}
\def\dif#1{\mathrm{d} #1}
\let\R\Real
\newcommand{\inv}{-1}
\def\bigO#1{\operatorname{O}\prn{#1}}
\def\bigOm#1{\operatorname{\Omega}\prn{#1}}
\newlength{\algofontsize}
\begin{document}
	\algrenewcommand\algorithmicrequire{\textbf{Input:}}
	\algrenewcommand\algorithmicensure{\textbf{Output:}}
	
	\title{Posted Pricing and Competition in Large Markets
 \vspace{.5cm}
 }
	
	\author{José Correa
	\thanks{Department of Industrial Engineering, Universidad de Chile.}
    \and Vasilis Livanos
	\thanks{Center for Mathematical Modeling,
		Universidad de Chile.}
	\and Dana Pizarro
	\thanks{TBS Business School, Department of Information, Operations and Management Sciences.}
	\and 	Victor Verdugo		
	\thanks{Institute for Mathematical and Computational Engineering, Pontificia Universidad Católica de Chile}
 \thanks{Department of Industrial and Systems Engineering, Pontificia Universidad Católica de Chile}
	}
\date{\vspace{-1em}}
\maketitle

\begin{abstract}

%Fixed price policies constitute one of the predominant practices for allocating goods in online marketplaces. We consider a basic setting where buyers' valuations are independent and identically distributed, and there is a single unit on sale.  Motivated by the surge of massive online marketplaces, we study optimal guarantees under the assumption that the underlying market is very large. 
Posted price mechanisms are prevalent in allocating goods within online marketplaces due to their simplicity and practical efficiency. We explore a fundamental scenario where buyers' valuations are independent and identically distributed, focusing specifically on the allocation of a single unit. 
Inspired by the rapid growth and scalability of modern online marketplaces, we investigate optimal performance guarantees under the assumption of a significantly large market.
%In this setting, Kennedy and Kertz [Ann. Probab. 1991] break the 0.745 fraction achievable in general using a mechanism with dynamic prices. 
We show a large market benefit when using fixed prices, improving the known guarantee of $1-1/e\approx 0.632$ to $0.712$. 
We then study the case of selling $k$ identical units, and we prove that the optimal fixed price guarantee approaches $1-1/\sqrt{2k \pi}$, which implies that the large market advantage vanishes as $k$ grows.
We use real-world auction data to test our fixed price policies in the large market regime.
Next, under the large market assumption, we show that the competition complexity for the optimal posted price mechanism is constant, and we identify precise scaling factors for the number of bidders that enable it to match benchmark performance. Remarkably, our findings break previously established worst-case impossibility results, underscoring the practical robustness and efficiency of posted pricing in large-scale marketplaces.
%\noindent
%\textbf{Keywords}: Pricing, Prophet Inequalities, Large Markets.
\end{abstract}

%\vlcom{
%
%TODO:
%
%- Write exact expression for best single-threshold vs Prophet in terms of competitive ratio for Max (it depends on Lambert's W function, etc.)
%
%\[
%\lambda_{ST}(\gamma) = \frac{\prn{-W_{-1}\prn{-\gamma \cdot e^{-\gamma}} - \gamma}^{1-\gamma}}{- \Gamma(2-\gamma) \cdot W_{-1}\prn{-\gamma \cdot e^{-\gamma}}} \qquad \text{for } \gamma \in (0,1).
%\]
%
%- Analyze best single-threshold vs Prophet in terms of competition complexity for both Max and Min.
%
%\hspace{1em} For Min, the competition complexity of the best single-threshold with $\lambda(n)$ random variables vs the Prophet with $n$ random variables is $\frac{\lambda(n)}{n} = \Theta(\log{n} \cdot \log{\log{n}})$.
%
%\hspace{1em} For Max, the competition complexity of the best single-threshold with $c n$ random variables vs the Prophet with $n$ random variables is
%\[
%c(\gamma) = \prn{\frac{- \Gamma(2-\gamma) \cdot W_{-1}\prn{-\gamma \cdot e^{-\gamma}}}{\prn{-W_{-1}\prn{-\gamma \cdot e^{-\gamma}} - \gamma}^{1-\gamma}}}^{\f{1}{\gamma}}, \qquad \text{ with } \lim_{\gamma \to 0^+} c(\gamma) = +\infty \text{ (I think)}.
%\]
%}

\thispagestyle{empty}
\newpage

\section{Introduction}

Understanding the worst-case welfare obtained by simple pricing mechanisms is a fundamental question in Economics and Computation \cite{alaei2013simple,alaei2019optimal,cohen2016invisible,feng2019optimal,hartline2015non,JJLZ21,jin2020tight,CFHOV17}. 
One of the most fundamental settings corresponds to selling a single item to $n$ buyers with valuations given by independent and identically distributed random variables. 
In this context, the simplest mechanism is to set a fixed price (a.k.a. anonymous price) for the item. 
This problem has been extensively studied in the literature, and a result of Ehsani et al. \cite{Ehsani2018} establishes that the performance of a fixed price policy when facing samples from i.i.d. random variables is at least a fraction $1-1/e$ of that of the optimal mechanism, and the bound is best possible. 
%\footnote{This result was proved in the prophet inequalities context. In Section~\ref{sec:litrev} we explore the connection between posted pricing mechanisms and prophet inequalities. Ehsani et al. \cite{Ehsani2018} actually prove a more general prophet inequality, namely, that the bound of $1-1/e$ holds even if the distributions are nonidentical. However, this more general result does not translate into a fixed price policy (if the distributions are not identical, neither are the virtual values and then this single threshold will be mapped to different prices for different distributions).}
Another natural family of mechanisms considered in the literature is the one of posted price mechanisms (PPM) in which the seller offers sequentially a possibly different price to each buyer~\cite{CEGMM10,Chawla2010,CFHOV17,jin2020tight, Y11,CFHOV17,CFPV19}. 
Hill and Kertz studied the i.i.d. setting of this problem \cite{HK82}, providing a family of worst possible instances, from which Kertz \cite{K86} proved that no PPM can extract more than a fraction of roughly $0.745$ of the optimal mechanism.
%\footnote{The results by Hill and Kertz \cite{HK82} and Kertz \cite{K86} are stated originally in the context of prophet inequalities.} 
Later, Correa et al. \cite{CFHOV17} proved this value is tight. 
%We note, however, that it cannot be achieved by a fixed price policy. 

In this work, we study optimal welfare guarantees of fixed price policies in the i.i.d. setting under an additional \emph{large market} assumption, relevant to most modern online marketplaces.
The essential difference with the classic setting is that, rather than considering the (classic) worst distribution for each possible market size $n$, we fix the distribution and then take $n$ growing large.
Furthermore, we also explore the interplay between the market size and the optimal dynamic pricing policy through the {\it competition complexity} measure, exhibiting fundamental differences with respect to the classic setting.
\begin{comment}
Under general PPMs, this setting was studied by Kennedy and Kertz~\cite{KK91} in the context of prophet inequalities. Still, we can translate that into our pricing setting due to the well-established connection between posted pricing mechanisms and profit inequalities; Section~\ref{sec:litrev} further describes this connection. 
More specifically, they prove that the optimal PPM recovers at least a $0.776$ fraction of the optimal mechanism, establishing that there is a sensible advantage when compared to the $0.745$ bound of the classic i.i.d. setting \cite{HK82,K86}. 
Then, a natural question that arises is whether Kennedy and Kertz's result for the optimal PPM also holds for fixed-price policies. 
We answer this question positively by proving that the large market assumption allows us to obtain a guarantee of 0.712, significantly improving the bound of $1-1/e$.

We further consider the case of selling $k$ items at a fixed price. This problem has been already studied in the classical setting without the large market assumption \cite{alaei2019optimal,H13,JJLZ21,Y11}. 
We study this setting under the large market assumption and we prove the surprising result that the large market advantage vanishes as $k$ grows.
\end{comment}
\subsection{Our Results and Contribution}

In what follows, we detail our contribution on tight guarantees for to the design fixed price policies in large markets and the competition complexity of the optimal dynamic pricing policy.
At the end of this section, we summarize our work's main results in Table~\ref{tab:results}.
%\victor{describir esta seccion en terminos de welfare-prophets}

\paragraph{Fixed Price Approximation Guarantees.} For every positive integer $n$, consider i.i.d. valuations $X_1,X_2,\ldots,X_n$ with $X_j$ distributed according to $F$ for every $j\in \{1,\ldots,n\}$, where $F$ is a distribution over the non-negative reals. Given a value $T$, consider the simple policy given by selling the item to any buyer with valuation exceeding $T$.
Then, for each distribution $F$, we are interested in understanding the limit ratio between the welfare of this simple policy, which is simply given by the probability of having an $X_i$ above $T$, $1-F^n(T)$ times the expected value of $X_i$ conditioned on it being larger than $T$,
%\footnote{Here we make no particular assumption on who gets the item if there exists a valuation surpassing $T$. Therefore, $X_i$ is just a generic random variable distributed according to $F$.} 
and the welfare of the optimal mechanism, which is the expectation of the maximum valuation, and it is denoted as $M_n$. Our quantity of interest is then
\begin{equation}
\label{eq:ratio2}
\apx(F)=\liminf_{n\to \infty}\sup_{T\in \RR_+}\frac{1-F^n(T)}{\EE(M_n)}\left(T+\frac{1}{1-F(T)}\int_{T}^{\infty}(1-F(s))\mathrm ds\right),
\end{equation}
and we call this value {\it welfare guarantee}.
Our first main result shows that $0.712$ is a tight lower bound for $\apx(F)$ when the distribution satisfies the {\it extreme value} condition.
%\footnote{This is a classic condition in extreme value theory, and it is satisfied by essentially any distribution that may have a practical use. The characterization of this condition is known as the Fisher-Tippett-Gnedenko Theorem.} 
This condition is, essentially, the equivalent of a central limit theorem for the maximum of an i.i.d. sample, and it is the cornerstone of the extreme value theory. The $0.712$ value is substantially better than the known bound of $1-1/e$ by Ehsani et al. \cite{Ehsani2018} and thus represents a significant advantage for the large markets setting. We remark that we are mainly interested in the case of distributions $F$ with unbounded support, since one can show that $\apx(F)=1$ when $F$ is of bounded support.

A natural and practically relevant extension of the problem of selling a single item is to consider the setting in which we can sell up to $k$ items. We call this problem $k$-unit problem and we study whether the large market advantage continues to be significant beyond the single item case. 
To this end, we provide a lower bound for the approximation factor achievable by the best fixed price policy, again under the extreme value condition. More specifically, for each value of $k$, the approximation factor is bounded by a (computationally) simple optimization problem.
%To this end,  we provide a full characterization of the approximation factor achievable by the best single threshold policy, again under the extreme value condition. More specifically, for each value of $k$, the approximation factor is obtained by means of a (computationally) simple optimization problem.
In particular, the bound presented when $k=1$ follows from obtaining the exact solution of the optimization problem.
The performance obtained by our characterization represents an advantage for the $k$-unit problem. 

However, we show that the advantage vanishes as $k\to \infty$. Indeed, we prove that for each integer $k$, the approximation factor is more than $1-1/\sqrt{2k\pi}$, but there exists $F$ such that this lower bound is asymptotically tight in $k$. 
This tightness, together with the $1-1/\sqrt{2k\pi}$ lower bound on the fixed-price approximation ratio of the $k$-unit problem (see, e.g., \cite{DFK16,arnostma2020})
%\footnote{Slightly weaker bounds are also known for the case in which the random variables are just independent but not necessarily identical \cite{CDL20,Alaei2014}.} \footnote{The bound by Duetting et al. \cite{DFK16} is given in terms of revenue gaps but it can be translated to welfare due to the equivalence between PPMs and Prophet Inequalities.} 
implies that the large market advantage vanishes as $k\to \infty$; for an illustration, Figure~\ref{fig:minmaxk}
depicts the bound obtained by our optimization problem and compares it with $1-1/\sqrt{2k\pi}$.
\begin{figure}[t]
\centering
\includegraphics[scale=0.4]{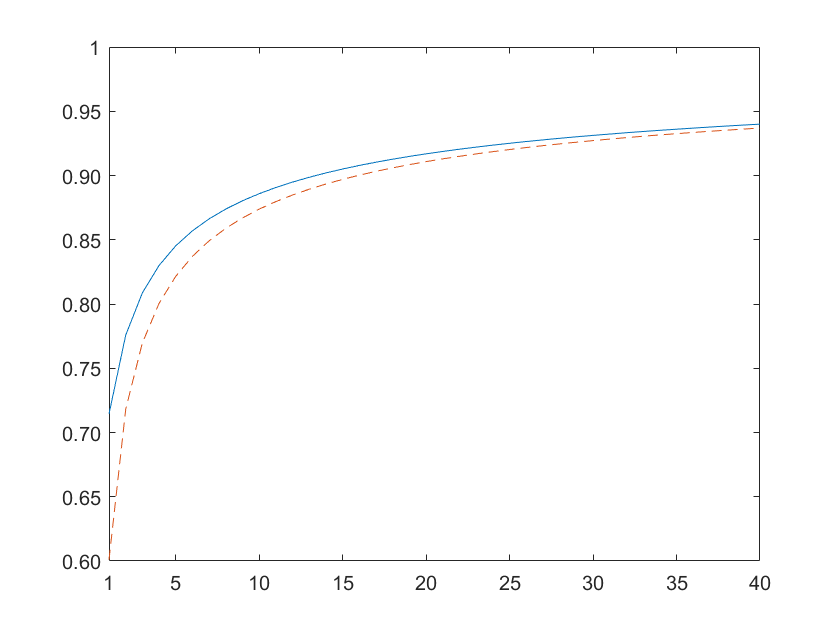}
\caption{\small Our optimal welfare guarantee over $k$ (continuous line) vs. $1-1/\sqrt{2k\pi}$ (dashed line).}
\label{fig:minmaxk}
\end{figure}
We note that as a direct corollary, when $F$ satisfies the extreme value condition and for large markets, we can derive 
the {\it adaptivity gap}; the worst case ratio between the value obtained by the optimal fixed price policy and that of the optimal PPM of Kennedy and Kertz. This value is at most 1.105.

Finally, we give sufficient conditions under which our main result for the $k$-unit problem can be translated into a revenue guarantee for a fixed price policy. More specifically, while by standard arguments this works as long as the distribution of the virtual values of $F$, call it $G$, satisfies the extreme value condition, we investigate the following question: When $F$ satisfies the extreme value condition, can we guarantee that the distribution of the virtual valuation $G$ also does? And, if this is the case, do $G$ and $F$ fall in the same extreme value family? We answer these questions positively under some mild assumptions.
The formal statements of our results in this line can be found in Section \ref{sec:single-selection-proof}.

In \Cref{sec:numerical}, we use real data from an eBay auction to explore the real-world deployment of our findings, and we calculate how much the seller can guarantee if they set a fixed price for a Cartier watch instead of using the optimal mechanism to sell it. 

\paragraph{Competition Complexity.} Due to the inherent complexity of optimal pricing mechanisms, researchers have turned their attention to pricing and auction design of more straightforward, albeit suboptimal, mechanisms.
In this line, the competition complexity question asks how many more bidders are required for the simpler mechanism to match the performance of the optimal one.
The seminal work by Bulow and Klemperer~\cite{bulow-klemperer} states that the simple second-price auction with $n+1$ bidders achieves a revenue that is at least as good as that of Myerson's optimal auction with $n$ bidders, so long as the bidders' valuations are drawn i.i.d. from a regular distribution. 
In recent years, this approach has been further explored in the context of pricing and prophet inequalities~\cite{beyhaghi2019optimal,bulow-klemperer,eden2016competition,feldman201899,liu2018competition,correa-competition-complexity,brustle2024competition,ezra2024competition1,ezra2024competition2}.
The message in the competition paradigm is simple; rather than designing a better pricing mechanism, it is better to run a simple one and attract more bidders.

Unfortunately, Brustle et al. \cite{correa-competition-complexity,brustle2024competition} show a strong impossibility: The competition complexity can be unbounded in the worst-case when the distribution can depend on $n$. In contrast, we ask whether the same impossibility holds under the large market assumption. 
%That is, we study the competition complexity of dynamic pricing when the distribution of the bidders' valuations is fixed and independent of $n$, and $n$ is large. 
This motivates the following question that we address in this work: \emph{Under the extreme value condition, and in the large market regime, can we break the impossibility for the competition complexity of dynamic pricing?}
We answer this question in the affirmative and provide an exact characterization of the competition complexity for large markets.
Namely, in the large market regime, we provide tight (constant) factors for the scaled number of bidders needed to beat the welfare achievable by an optimal mechanism.
The formal statement of our result can be found in \Cref{sec:comp-complexity}.

\begin{table}[!h]
\begin{center}
\begin{tabular}{| c | c | c | c | }
\hline
 & \multicolumn{3}{ c |}{Extreme type of the distribution $F$}  \\ 
\cline{2-4}
& Fréchet & Gumbel & Reversed Weibull  \\
%& Parameter $\alpha$ & 
%& Parameter $\alpha$ \\
\hline
\hline

FP/OPT, WG, $k=1$ & $\approx 0.712$ & 1 &1\\ 
& Thm. \ref{thm:apx:multi}\ref{multiple-frechet}& Thm. \ref{thm:apx:multi}\ref{multiple-Gumbel}& Thm. \ref{thm:apx:multi}\ref{multiple-Gumbel} \\
 \hline
FP/OPT, WG, any $k$ & $\approx 1-1/\sqrt{2 \pi k}$ & 1  & 1  \\ 
& Thm. \ref{thm:apx:multi}\ref{multiple-frechet}& Thm. \ref{thm:apx:multi}\ref{multiple-Gumbel}& Thm. \ref{thm:apx:multi}\ref{multiple-Gumbel} \\ \hline
DP/OPT, CC, $k=1$ & $1 \leq \ACCMmax \leq 1.781$ & $\ACCMmax \approx 1.781$ & $  1.781 \leq \ACCMmax \leq  e $  \\
& Thm. \ref{thm:comp-complexity-evt}\ref{thm:CC-frechet}, Cor. \ref{cor:comp-complexity-families}\ref{cor:cc-frechet} & Thm. \ref{thm:comp-complexity-evt}\ref{thm:CC-gumbel}, Cor. \ref{cor:comp-complexity-families}\ref{cor:cc-gumbel} & Thm. \ref{thm:comp-complexity-evt}\ref{thm:CC-weibull}, Cor. \ref{cor:comp-complexity-families}\ref{cor:cc-weibull} \\
\hline
\end{tabular}
\vspace{0.1cm}
\caption{Main results under the large market assumption for each extreme value distribution family. The table reports the welfare guarantees (WG) of the fixed price (FP) relative to the optimal mechanism (OPT), both in the single-selection case ($k = 1$) and in the general multi-selection case (any $k$). 
It also presents the competition complexity $\ACCMmax$ for a distribution $F$ (CC) of the optimal dynamic pricing policy (DP) relative to the optimal mechanism (OPT).}
\label{tab:results}
\end{center}
\end{table}

\subsection{Related Work} \label{sec:litrev}
The study of prophet inequalities was initiated by Gilbert and Mosteller \cite{GM66}, and since then, it has been studied under different settings and model variants \cite{HK82,K86,KW12,SM02,SC83}.
In the last two decades, the problem gained particular attention due to its close connection with online mechanisms.
This relation was first studied by Hajiaghayi et al. \cite{HKS07}, who showed that prophet inequalities can be interpreted as posted price mechanisms for online selection problems. Later, Chawla et al. \cite{Chawla2010} proved that any prophet inequality can be turned into a posted price mechanism with the same approximation guarantee. The reverse direction was proven by Correa et al. \cite{CFPV19} and thus the guarantees for optimal stopping problems are equivalent to the problem of designing posted price mechanisms. 
Furthermore, in the i.i.d. setting, fixed threshold stopping rules become equivalent to fixed price policies; this equivalence allows one to transfer directly some results in i.i.d. prophet inequalities to the corresponding pricing problem.
We remark that Ehsani et al. \cite{Ehsani2018} prove that the bound of $1-1/e$ is best possible for the i.i.d. prophet inequality with fixed threshold policies by constructing a sequence of distributions depending on the number of buyers $n$, while in our large market setting, the distribution $F$ is fixed.
In addition to the fixed price policies (FP) and the optimal mechanism (OPT), the most common mechanisms considered are general PPMs and the second price auction with a fixed reserve price. We note here that there is a vast amount of recent literature comparing these four mechanisms under different model variants \cite{alaei2019optimal, DFK16, HR09,JJLZ21, JLQTX19, jin2020tight}. 

Regarding the $k$-unit problem, the worst-case guarantee in the i.i.d. setting has been proven to be at least $1-k^ke^{-k}/k! \approx 1-1/\sqrt{2\pi k}$; see, e.g.,~\cite{CEGMM10,Y11,DFK16,BGPS20,arnostma2020}. 
Recently, there have been several results regarding the worst-case guarantee of the optimal dynamic programming policy for the $k$-unit prophet inequality in the i.i.d. setting. 
Jiang et al. \cite{jiang2022tightness} introduced an optimization framework to characterize worst-case guarantees for prophet inequality problems, including the i.i.d. $k$-unit setting for a fixed $n$.
Brustle et al. \cite{brustle2024splitting} characterize the worst-case guarantee for the $k$-unit prophet inequality in the i.i.d. setting, and fixed $n$, via an infinite-dimensional linear program in the space of quantiles; they also provide a closed-form nonlinear system of differential equations that gives provable lower bounds on the worst-case guarantee when $n$ is large enough. 
Using different techniques, Molina et al. \cite{molina2025prophet} also derived a similar nonlinear system.
\begin{comment}
The problem of finding guarantees was also studied for independent but not necessarily identical valuation distributions.\footnote{Although most of the literature is focused on optimal revenue guarantees, the bounds can be translated to welfare via the reduction to prophet inequalities.}
In that line, the best bound was proven by Jin et at. \cite{JLQTX19}, who establish a guarantee of roughly 0.381 of the optimal revenue. Later, Jin et al. \cite{jin2020tight} proved that this bound is in fact tight. 
%under the regularity assumption,  
a recent work by Jin et al.  \cite{JJLZ21} prove an asymptotically tight revenue gap of $\Theta(1/\log k)$ (in fact, the upper bound was proven by Hartline and Roughgarden \cite{HR09}). 
%When arbitrary distributions are allowed, Alaei et al. \cite{alaei2019optimal} show, for both the single and $k$-unit cases, that the gap between the revenue obtained with fixed price policies and the optimal can be as small as $1/n$, where $n$ is the number of buyers.  
We highlight here that all known results are in the classic setting where for each market size is computed the worst case over the feasible distribution set, rather than considering the large market setting. 
\end{comment}
%\noindent{\it Revenue Guarantees for i.i.d Distributions.}
%Another research line related to ours are those using other simple mechanisms as well as different benchmarks. 

Under the large market regime, the seminal work by Kennedy and Kertz \cite{KK91} compares the optimal PPM with the optimal mechanism in the single item setting, and proves a welfare guarantee of 0.776. 
We highlight two recent works under the large market regime that complement ours.
Livanos and Mehta \cite{livanos2024minimization} characterize the approximation guarantee for the optimal policy for the cost minimization variant of the prophet inequality through the lens of extreme value theory. On the other hand, Abdallah and Reed \cite{abdallah2025regime} characterize the optimal dynamic pricing policy in continuous time, where the number of arrivals is random, in the single-item case. 
We remark that our results and the two recent works previously mentioned also contribute to the literature on the study of approximation guarantees for models beyond worst-case; see, e.g., \cite{feng2024beyond,goldenshluger2022optimal,giannakopoulos2021optimal,cai2015extreme,livanos2024minimizationSODA,arsenis2021constrained}.

\section{Preliminaries}\label{sec:prelim}

We recall that $F$ is a distribution if it is a right-continuous and non-decreasing function, with limit equal to zero in $-\infty$ and equal to one in $+\infty$. 
We consider $F$ to be absolutely continuous in $\RR$, and we denote its density by $f$ or $F'$, depending on the context.
In general, $F$ is not invertible, but we work with its generalized inverse, given by $F^{-1}(y)=\inf\{t\in \RR:F(t)\ge y\}$. 
We denote by $\omega_0(F)=\inf\{t\in \RR:F(t)>0\}$ and $\omega_1(F)=\sup\{t\in \RR:F(t)<1\}$, and we call the interval $(\omega_0(F),\omega_1(F))$ the support of $F$.
Given an i.i.d. sample $X_1,\ldots,X_n$ distributed according to $F$, we denote by $M_n=\max_{j\in \{1,\ldots,n\}}X_j$.

%\subsection{Extreme Values Theory}

One of the main goals in the extreme value theory is to understand the limit behavior of the sequence $\{M_n\}_{n\in \NN}$.
As the central limit theorem characterizes the convergence in distribution of the average of random variables to a normal distribution, a similar result can be obtained for the sequence of maxima $\{M_n\}_{n\in \NN}$, but this time there are three possible limit situations. 
One of the possible limits is the Gumbel distribution $\Lambda(t)=\exp(-e^{-t})$; we call these distributions the Gumbel family.
Given $\alpha>0$, the second possible limit is the Fr\'echet distribution of parameter $\alpha$, defined by $\Phi_{\alpha}(t)=\exp(-t^{-\alpha})$ if $t\ge 0$, and zero otherwise; we call these distributions the Fr\'echet family.
Finally, given $\alpha>0$, the third possibility is the reversed Weibull distribution of parameter $\alpha$, defined by $\Psi_{\alpha}(t)=\exp(-(-t)^{\alpha})$ if $t\le 0$, and one otherwise; we call these distributions the reversed Weibull family. 
%The {\it extreme values theorem} states the following: If there exist a positive real sequence $\{a_n\}_{n\in \NN}$ and other sequence $\{b_n\}_{n\in \NN}$ such that $(M_n-b_n)/a_n$ converges in distribution to a random variable with distribution $H$, we have that $H$ is the Gumbel distribution $\Lambda$, $H$ is in the Frechet family or $H$ is in the reversed Weibull family.
%~\footnote{The theorem was discovered first by Fisher and Tippett~\cite{fisher28} and later proved in full generality by Gnedenko~\cite{gnedenko1943distribution}. }  
In Appendix \ref{app:EVT-prelim} we formally state the extreme value theorem, a result due independently to Gnedenko \cite{gnedenko1943distribution} and Fisher \& Tippett~\cite{fisher28}, and some other classic results that we use in our technical analysis.
%\begin{theorem}[see, e.g.,~\cite{libroextremo}]
%	\label{thm:fisher}
%	Let $F$ be a distribution for which there exists a positive real sequence $\{a_n\}_{n\in \NN}$ and other sequence $\{b_n\}_{n\in \NN}$ such that $(M_n-b_n)/a_n$ converges in distribution to a random variable with distribution $H$, namely, $\PP\left(M_n-b_n\le a_n t\right)=F^n(a_nt+b_n)\to H(t)$
%	for every $t\in \RR$ when $n\to \infty$.
%	Then we have that one of the following possibilities hold:  $H$ is the Gumbel, $H$ is in the Fr\'echet family or $H$ is in the reversed Weibull family (see Table \ref{table:table1}).
%\end{theorem}
In the following, we say that a distribution $F$ satisfies the {\it extreme value condition} if there exist sequences $\{a_n\}_{n\in \NN}$, that we call the {\it scaling sequence}, and $\{b_{n}\}_{n\in \NN}$, that we call the {\it shifting sequence}, such that $(M_n-b_n)/a_n$ converges in distribution to a random variable; see Table \ref{table:table1} for a summary and examples of distributions satisfying the extreme value condition.
We remark that there are examples of continuous distributions not satisfying this extreme value condition, e.g., $F(x)=\exp(-x-\sin(x))$.

It can be shown that for every distribution $F$ with extreme type in the reversed Weibull family
we have $\omega_1(F)<\infty$~\cite[Proposition 1.13, p. 59]{libroextremo}.
When $F$ has extreme type Fr\'echet, we have $\omega_1(F)=\infty$~\cite[Proposition 1.11, p. 54]{libroextremo}.
For the distributions with extreme type Gumbel the picture is not so clear since $\omega_1(F)$ is neither finite nor unbounded in general.
\begin{comment}
In our analysis we need a tool from the extreme value theory related to the order statistics of a sample according to $F$.
We denote the order statistics of a sample of size $n$ by $M_n=M_n^1\ge M_n^2\ge \cdots\ge M_n^n$. 
\begin{theorem}[see, e.g.,~\cite{LLR12}]
	\label{thm:leadbetter}
	Let $F$ be a distribution satisfying the extreme value condition with the scaling and shifting sequences $\{a_n\}_{n\in \NN}$ and $\{b_n\}_{n\in \NN}$ such that $\PP\left(M_n-b_n\le a_n t\right)\to H(t)$
	for every $t\in \RR$ when $n\to \infty$.
	Then, for each $j\in \{1,2,\ldots,n\}$ and every $t\in \RR$ we have 
	\[\lim_{n\to \infty}\PP\left(M_n^j-b_n\le a_n t\right)= H(t) \sum_{s=0}^{j-1} \frac{(-\log H(t))^s}{s!}.\]	
\end{theorem}
\end{comment}
\begin{table}[t]
	\begin{tabular}{|c|c|c|c|}
	\hline
		Extreme Type & Parameter & Limit Distribution  & Example\\
		\hline \hline
		Gumbel & None &$\exp(-e^{-t})$ & Exponential distribution\\
		\hline
		Fr\'echet & $\alpha\in (0,\infty)$ & $\exp(-t^{-\alpha})\cdot \mathsf{1}_{[0,\infty)}$ & Pareto distribution \\
		\hline
		Reversed Weibull & $\alpha\in (0,\infty)$ & $\exp(-(-t)^{\alpha})\cdot \mathsf{1}_{(-\infty,0)}+\mathsf{1}_{[0,\infty)}$ & Uniform distribution\\
		\hline
	\end{tabular}
\vspace{0.1cm}
	\caption{Summary of the three possible extreme value distributions. The Fr\'echet family and the Reversed Weibull family are associated to a parameter $\alpha\in (0,\infty)$. Recall that for $\alpha>0$, the Pareto distribution of parameter $\alpha$ is given by $1-t^{-\alpha}$ for $t\ge 1$ and zero otherwise.}
		\label{table:table1}
\end{table}

\section{Welfare and Revenue Guarantees in Large Markets}
\label{sec:single-selection-proof}

We say that a policy for the $k$-unit problem with i.i.d. valuations $X_1,X_2,\ldots,X_n$ is a {\it fixed price policy} if there exists a value $T$ such that we sell an item to the first $\min\{k,|Q|\}$ buyers with valuation attaining a value larger than $T$, where $Q$ is the subset of valuations attaining a value larger than $T$. Under this policy, the items are allocated to all buyers in $Q$ or to a random subset of size $k$.
%We say that a stopping rule for the $k$-selection problem with an i.i.d. sample $X_1,X_2,\ldots,X_n$ is a {\it single threshold policy} if there exists a threshold value $T$ such that we select the first $\min\{k,|Q|\}$ samples attaining a value larger than $T$, where $Q$ is the subset of samples attaining a value larger than $T$.
%A single threshold policy is an $\alpha$ prophet inequality if the expected sum of the samples selected is at least $\alpha\cdot \sum_{j=1}^k \EE(M_n^{j})$, where $M_n^j$ denotes the $j$-th order statistic in the sample.
Consider the random variable $\calR^n_{k,T}$ equal to the summation of the first $\min\{k,|Q|\}$ valuations attaining a value larger than $T$.
%$\calB^n_{T}$ the random variable that counts the number of samples that attain a value larger or equal than $T$. Note that the random variable $\calB^n_{T}$ follows a binomial distribution with parameters $n$ and $1-F(T_n)$, for short, $\calB^n_{T} \sim \text{Bin}(n, 1-F(T_n)).$
%Therefore, the number of samples selected by considering the single threshold policy with threshold $T$ is given by $\calR^n_{k,T}= \min\{ k, \calB^n_{T}\}$.
In particular, 
%the expected sum of the sample values selected by the single threshold policy with threshold $T$ is equal to the expected sum of $\calR^n_{k,T}$ random variables distributed according to $F$, and thus 
%\[\EE(\sum_{i=1}^{\calR^n_{k,T}}X_i).\]
this value is completely determined by the number of buyers $n$, the distribution $F$ and the price $T$.
We are interested in understanding the value
\begin{equation}
	\apx_k(F)=\liminf_{n\to \infty}\sup_{T \in \RR_+}\frac{\EE(\calR^n_{k,T})}{\sum_{j=1}^k \EE(M_n^{j})},\label{eqn:comp:ratio:k:def}
\end{equation}
where $M_n^1\ge M_n^2\ge \cdots\ge M_n^n$ are the order statistics of a sample of size $n$ according to $F$.
We observe that when $k=1$ the value $\apx_k(F)$ corresponds to the value $\apx(F)$ in \eqref{eq:ratio2}.
%Let $F$ be a distribution satisfying the extreme value condition and let $X_1,X_2,\ldots,X_n$ be an i.i.d. sample according to $F$.
%Recall that $\calR^n_{k,T}=\min\{k,\calB^n_T\}$ where $\calB^n_T\sim \text{Bin}(n, 1-F(T))$.
%In what follows we study the value of $\apx_k(F)$, where $k$ is a positive integer number.
%Recall that 
%\begin{equation*}
%	\label{eqn:comp:ratio:k}
%	\apx_k(F)=\liminf_{n\to \infty}\sup_{T \in \RR_+}\frac{1}{\sum_{j=1}^k \EE(M_n^{j})}\cdot \EE\sum_{i=1}^{\calR^n_{k,T}} X_i,
%\end{equation*}
%where $M_n^1\ge M_n^2\ge\cdots\ge M_n^n$ are the order statistics of the sample.
%Now we present formally our main results for prophet inequalities in the $k$-selection problem.

{
%Recall that if $F$ has extreme type in the reversed Weibull family then it holds that $\omega_1(F)<\infty$, that is, $F$ has bounded support.
We start by observing that when $\omega_1(F)<\infty$, which is the case for the reversed Weibull family, we have $\apx_k(F)=1$ for every positive integer $k$. 
%In particular, this immediately gives the approximation result for a distribution $F$ with extreme type in the Reverse Weibull family.
When the support of $F$ is upper bounded by $\omega_1(F)<\infty$, we have $\EE(M_n^j)\le \omega_1(F)$ for every $j\in \{1,\ldots,k\}$. 
For every $\varepsilon>0$ consider $T_{\varepsilon}=(1-\varepsilon)\cdot \omega_1(F)$.
Then, by the expression in \eqref{eqn:comp:ratio2:k}
%(\ref{eqn:LB-k})
we have that $\apx_k(F)$ can be lower bounded as
$\apx_k(F)\ge (1-\varepsilon)\cdot \omega_1(F)\cdot \liminf_{n\to \infty}\sum_{j=1}^{k} \PP(M_n^{j}>T_{\varepsilon})/(k\cdot \omega_1(F))=1-\varepsilon$,
and we conclude that $\apx_k(F)=1$.}
\subsection{Welfare Guarantees}

Now we formally present our main results for welfare guarantees in the $k$-unit problem. 
%We leave the proof of them to Section~\ref{sec:proof_main_thm} and Appendix~\ref{app:tightness}, respectively. 

\begin{theorem} 
	\label{thm:apx:multi}
	Let F be a distribution over the non-negative reals that satisfies the extreme value condition. Then, the following holds.
\begin{enumerate}[label=\normalfont(\alph*)]
	\item When $F$ has an extreme type Fr\'echet of parameter $\alpha$, for every positive integer $k$ we have that $apx_k(F) \ge \varphi_k(\alpha),$ where  $\varphi_k:(1,\infty)\to \RR_+$ is given by
\begin{equation}
	\label{eq:phi-function}
	\varphi_k(\alpha)=\frac{\Gamma(k)}{\Gamma(k+1-1/\alpha)} \max_{x\in (0, \infty)} x \exp(-x^{-\alpha}) \sum_{j=1}^k \sum_{s=j}^{\infty} \frac{x^{-s \alpha}}{s!}.
\end{equation}	

	%For each positive integer $k$ there exists a function $\varphi_k:(1,\infty)\to (0,1)$ such that $\apx_k(F)\ge \varphi_k(\alpha)$ when $F$ has a extreme type Fr\'echet of parameter $\alpha$.
%	Furthermore, $\lim_{\alpha\to \infty}\varphi_k(\alpha)=\lim_{\alpha\to 1}\varphi_k(\alpha)=1$ and we have that 	\[\varphi_k(\alpha)\ge 1-\frac{1}{\sqrt{2\pi k}}\] 
%for every positive integer $k$ and every $\alpha\in (1,\infty)$.\label{multiple-frechet}
In particular, we have $\apx_1(F)\ge 0.712$ and $\apx_k(F)\ge 1-1/\sqrt{2\pi k}$ for every $k$.
. \label{multiple-frechet}
	\item When $F$ has extreme type in the Gumbel or reversed Weibull families, we have that $\apx_k(F)=1$ for every positive integer $k$.\label{multiple-Gumbel}
%	\item When $F$ is in the reversed Weibull family, we have that $\apx_k(F)=1$ for every positive integer $k$.\label{multiple-weibull}
\end{enumerate}
\end{theorem}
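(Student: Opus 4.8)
### Proof Plan

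The plan is to treat the two extreme-value families separately, with the Fréchet case being the substantive one. For part (a), the key idea is to pass to the limit inside the ratio defining $\apx_k(F)$ using the classical extreme value machinery. When $F$ has extreme type Fréchet of parameter $\alpha$, one can normalize by the scaling sequence $a_n$ so that $M_n/a_n$ converges in distribution to $\Phi_\alpha$; moreover the joint convergence of the top-$k$ order statistics $(M_n^1,\dots,M_n^k)/a_n$ to the corresponding order statistics of a Poisson point process with intensity $\alpha x^{-\alpha-1}\,dx$ holds. First I would write $T = x a_n$ for a free scaling parameter $x>0$, so that $\PP(M_n^j > T) \to \Phi_\alpha(x)\sum_{s=0}^{j-1}(-\log\Phi_\alpha(x))^s/s! = \exp(-x^{-\alpha})\sum_{s=0}^{j-1} x^{-s\alpha}/s!$. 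Then $\EE(\calR^n_{k,T})/a_n$ converges (after checking uniform integrability, which uses $\alpha>1$ so that $\EE(M_n)/a_n \to \Gamma(1-1/\alpha)$ is finite) to the expected sum of the top-$\min\{k,|Q|\}$ points of the limiting Poisson process exceeding $x$; a direct computation of that expectation via the point-process representation yields the factor $x\exp(-x^{-\alpha})\sum_{j=1}^k\sum_{s=j}^\infty x^{-s\alpha}/s!$. Dividing by $\sum_{j=1}^k \EE(M_n^j)/a_n \to \sum_{j=1}^k \Gamma(k) \cdot (\text{something})$ — more precisely the normalization constant works out to $\Gamma(k+1-1/\alpha)/\Gamma(k)$ after summing the order-statistic means — and optimizing over $x$ gives exactly $\varphi_k(\alpha)$. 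The $\liminf$ becomes a genuine limit because all the convergences above are limits, not just $\liminf$s.

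The main obstacle is the uniform integrability / tail control needed to justify interchanging the limit $n\to\infty$ with the expectations $\EE(\calR^n_{k,T})$ and $\EE(M_n^j)$. Convergence in distribution of the normalized order statistics does not by itself give convergence of expectations; one needs a uniform tail bound on $M_n/a_n$. For the Fréchet family this is standard (the normalized maximum has a uniformly integrable family of distributions precisely when $\alpha>1$, which is exactly the hypothesis), and I would invoke the known moment-convergence theorems for extremes (e.g., from the Resnick or de Haan–Ferreira references that the paper cites as \cite{libroextremo}). A secondary technical point is that $\calR^n_{k,T}$ involves a random number $\min\{k,|Q|\}$ of summands, so one should condition on $|Q|$ (which is $\mathrm{Binomial}(n, 1-F(T))$, concentrating on a $\mathrm{Poisson}$-like regime when $T = x a_n$) and handle the event $|Q| < k$, whose contribution vanishes appropriately in the ratio — this is where the inner double sum $\sum_{j=1}^k\sum_{s=j}^\infty$ rather than a single clean sum comes from.

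For part (b), when $F$ has extreme type reversed Weibull we already argued $\omega_1(F) < \infty$ implies $\apx_k(F) = 1$ (this is the boxed computation preceding the theorem, using $T_\varepsilon = (1-\varepsilon)\omega_1(F)$), so nothing new is needed there. The Gumbel case is the one requiring a short argument: here I would again set $T = b_n + x a_n$ and use that $(M_n - b_n)/a_n \Rightarrow \Lambda$ together with $\EE(M_n^j - b_n)/a_n \to$ finite constants. The crucial feature distinguishing Gumbel from Fréchet is that $a_n/b_n \to 0$ (the scaling is negligible relative to the shift — this follows from the von Mises representation, where the auxiliary function satisfies $\mu(t)/t \to 0$), so both $\EE(\calR^n_{k,T})$ and $\sum_j \EE(M_n^j)$ are asymptotically $k b_n (1+o(1))$ once $x$ is chosen so that essentially all of the mass concentrates just above $T$; more carefully, choosing $T = b_n - c_n a_n$ with $c_n \to \infty$ slowly ensures $\PP(M_n^j > T) \to 1$ for all $j \le k$ while $T/b_n \to 1$, forcing the ratio to $1$. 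The only thing to verify is that the conditional expectation $\EE(X \mid X > T)$ stays comparable to $b_n$, which again follows from $a_n = o(b_n)$ and standard Gumbel-domain tail estimates. I would present this as: fix $\varepsilon>0$, choose the threshold so that with probability $\to 1$ at least $k$ values exceed $(1-\varepsilon)b_n$ and all selected values are at least $(1-\varepsilon)b_n$, while $\EE(M_n^j) \le (1+\varepsilon)b_n$ for large $n$, giving $\apx_k(F) \ge (1-\varepsilon)/(1+\varepsilon)$, then let $\varepsilon \to 0$.
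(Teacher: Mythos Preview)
Your overall strategy matches the paper's: choose $T_n = x\,a_n$ (Fr\'echet) or $T_n = b_n + x\,a_n$ (Gumbel), pass to the limit using extreme-value convergence of the order statistics and moment convergence for $\alpha>1$, and optimize over $x$. The paper does exactly this, so the architecture is right.

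There are, however, three places where your sketch diverges from or underestimates the paper's argument. First, for the Fr\'echet numerator you want to compute $\EE(\calR^n_{k,T})/a_n$ via the limiting Poisson point process. The paper instead proves the clean identity
\[
\EE(\calR^n_{k,T}) \;=\; \EE(X\mid X>T)\cdot \sum_{j=1}^k \PP(M_n^j>T),
\]
which follows from exchangeability and $\sum_{j=1}^k\PP(M_n^j>T)=\EE[\min\{k,\text{Bin}(n,1-F(T))\}]$. This factorization decouples the problem into two pieces whose limits are computed separately: $\EE(X\mid X>T_n)/T_n\to \alpha/(\alpha-1)$ via the slowly-varying representation $1-F(t)=t^{-\alpha}\ell_F(t)$, and $\sum_j\PP(M_n^j>T_n)\to e^{-x^{-\alpha}}\sum_{j=1}^k\sum_{s=j}^\infty x^{-s\alpha}/s!$ via the Leadbetter-type convergence. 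The $\alpha/(\alpha-1)$ then cancels against the same factor appearing in $\sum_{j=1}^k\EE(M_n^j)/a_n=\frac{\alpha}{\alpha-1}\cdot\frac{\Gamma(k+1-1/\alpha)}{\Gamma(k)}$ (this last equality is a short induction on $k$). Your point-process route would eventually recover the same expression, but you did not isolate the $\alpha/(\alpha-1)$ factor or the telescoping Gamma identity, and these are where the formula actually comes from.

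Second, you do not address the bound $\varphi_k(\alpha)\ge 1-1/\sqrt{2\pi k}$ at all. This is a separate step: evaluate the objective at the specific point $x=k^{-1/\alpha}$, use Gautschi's inequality $\Gamma(k+1)>k^{1/\alpha}\Gamma(k+1-1/\alpha)$ to control the Gamma ratio, and finish with Stirling on $e^{-k}k^k/k!$. Without this you have $\apx_k(F)\ge\varphi_k(\alpha)$ but not the quantitative conclusion.

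Third, in the Gumbel case your plan to take $T=b_n-c_n a_n$ with $c_n\to\infty$ has a subtle issue: the extreme-value convergence of $\PP(M_n^j>a_nU+b_n)$ is only stated for \emph{fixed} $U$, and letting $U=-c_n\to-\infty$ simultaneously with $n\to\infty$ requires a uniformity you have not justified; moreover you must still ensure $T_n\to\infty$, which forces $c_n=o(b_n/a_n)$. The paper avoids this by keeping $U$ fixed, computing the limit ratio $\frac{1}{k}\bigl(k-e^{-e^{-U}}\sum_{j=1}^k\sum_{s=0}^{j-1}e^{-sU}/s!\bigr)$, and then taking the supremum over $U\in\RR$, which equals $1$. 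Also, your assertion that $\EE(X\mid X>T)\sim b_n$ ``follows from standard tail estimates'' is exactly the content of the nontrivial lemma the paper proves: $\EE(X\mid X>\Theta_n)/\Theta_n\to 1$ for any $\Theta_n\to\infty$, established by writing $(1-F(\omega\Theta_n))/(1-F(\Theta_n))\to 0$ for every $\omega>1$ via the Von Mises auxiliary function $\mu$ (using that $\Theta_n/\mu(y\Theta_n)\to\infty$). This is the heart of the Gumbel case and deserves more than a one-line appeal.
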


%Furthermore, the bound on the approximation factor for a distribution with extreme type in the Fr\'echet family is in fact tight, which is formally stated below.

\begin{theorem}
\label{thm:tightness}
Let $F$ be the Pareto distribution with parameter $\alpha=2$. Then, for every $\varepsilon>0$ there exists a positive integer $k_{\varepsilon}$ such that for every $k\ge k_{\varepsilon}$ it holds that $\apx_{k}(F) \leq  1-(1-\varepsilon)/\sqrt{2\pi k}$.
%We have that for every $\varepsilon>0$ there exists a positive integer $k_{\varepsilon}$ such that for every $k\ge k_{\varepsilon}$ it holds that $\varphi_k(2)\le 1-(1-\varepsilon)/\sqrt{2\pi k}$.
\end{theorem}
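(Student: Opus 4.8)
The plan is to exhibit, for the Pareto distribution with $\alpha=2$, an explicit price $T_n$ (or more precisely a sequence of prices scaled with the extreme-value normalization) for which the fixed-price policy attains essentially the claimed value, and then show that no choice of $T$ does substantially better as $n\to\infty$. Since $\apx_k(F)\ge \varphi_k(2)$ is already proved in Theorem \ref{thm:apx:multi}\ref{multiple-frechet}, the content of Theorem \ref{thm:tightness} is the matching \emph{upper} bound: $\varphi_k(2)\le 1-(1-\varepsilon)/\sqrt{2\pi k}$ for all large $k$, together with the fact that for the Pareto distribution the $\liminf$ in \eqref{eqn:comp:ratio:k:def} actually equals $\varphi_k(2)$ (not something larger). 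So really there are two things to do: (i) pin down $\apx_k(\text{Pareto}_2)$ exactly as the optimization expression $\varphi_k(2)$, and (ii) analyze the large-$k$ asymptotics of $\varphi_k(2)$.

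For step (i), I would use the extreme value machinery already set up in the preliminaries: the Pareto distribution with parameter $\alpha$ has extreme type Fr\'echet with parameter $\alpha$, with scaling sequence $a_n = n^{1/\alpha}$ and shifting sequence $b_n = 0$, so $M_n/n^{1/\alpha}$ converges to $\Phi_\alpha$. The joint convergence of the top $k$ order statistics (the Leadbetter-type result alluded to in the commented-out Theorem in the preliminaries) gives $\EE(M_n^j)/n^{1/\alpha} \to \Gamma(j - 1/\alpha)/\Gamma(j)$-type constants, which assemble into the denominator $\Gamma(k)/\Gamma(k+1-1/\alpha)$ appearing in \eqref{eq:phi-function}. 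For the numerator, writing $T = x\, n^{1/\alpha}$, one computes $\EE(\calR^n_{k,T})/n^{1/\alpha}$ in the limit via the Poisson approximation to the number of samples exceeding $T$ (the count $|Q|$ converges to Poisson$(x^{-\alpha})$) and the conditional-value computation for the Pareto tail; this produces exactly $x\exp(-x^{-\alpha})\sum_{j=1}^k\sum_{s=j}^\infty x^{-s\alpha}/s!$. Matching the $\sup_T$ with the $\max_{x>0}$ and checking that the $\liminf$ is attained (uniform convergence / a dominated-convergence argument to interchange $\liminf_n$ and $\sup_T$) gives $\apx_k(\text{Pareto}_\alpha)=\varphi_k(\alpha)$, in particular for $\alpha=2$.

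For step (ii), specialize to $\alpha=2$: $\varphi_k(2) = \frac{\Gamma(k)}{\Gamma(k+1/2)}\max_{x>0} x e^{-1/x^2}\sum_{j=1}^k\sum_{s=j}^\infty x^{-2s}/s!$. Rewriting the double sum by swapping the order, $\sum_{j=1}^k\sum_{s=j}^\infty = \sum_{s=1}^\infty \min(s,k)\, x^{-2s}/s!$, so the objective is $x e^{-1/x^2}\big(e^{1/x^2}-1 - \sum_{s=k+1}^\infty (s-k)x^{-2s}/s!\big) = x\big(1 - e^{-1/x^2} - e^{-1/x^2}\sum_{s>k}(s-k)x^{-2s}/s!\big)$. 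For the optimal $x$ one expects $x = \Theta(1/\sqrt{k})$ (so that $x^{-2}\approx k$ and the Poisson mass concentrates near $k$); substituting $x^{-2} = k$ and using the Stirling / central-limit estimate for a Poisson$(k)$ random variable — namely $\EE[(k-N)^+]$ for $N\sim\mathrm{Poisson}(k)$ equals $\approx \sqrt{k/(2\pi)}$ — together with $\Gamma(k)/\Gamma(k+1/2)\sim k^{-1/2}$, should yield $\varphi_k(2) = 1 - \Theta(1/\sqrt{k})$ with the constant $1/\sqrt{2\pi}$, hence $\varphi_k(2)\le 1-(1-\varepsilon)/\sqrt{2\pi k}$ for $k\ge k_\varepsilon$.

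\textbf{The main obstacle} I anticipate is step (i): rigorously justifying that $\apx_k$ of the Pareto distribution equals the optimization value $\varphi_k(2)$ rather than just being lower-bounded by it, which requires interchanging the outer $\liminf_{n\to\infty}$ with the inner $\sup_{T\in\RR_+}$. This needs either a uniform convergence statement for $\EE(\calR^n_{k,x n^{1/\alpha}})/n^{1/\alpha}$ over $x$ in compact subsets of $(0,\infty)$, plus a separate argument ruling out contributions from $x\to 0$ (where $T$ is small, many buyers clear, and one must check the ratio does not exceed $\varphi_k$) and $x\to\infty$ (where $T$ is huge and the numerator vanishes faster than the denominator). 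The second obstacle, more technical than conceptual, is making the Poisson-tail asymptotic $\EE[(k-N)^+]\sim\sqrt{k/(2\pi)}$ for $N\sim\mathrm{Poisson}(k)$ precise enough — via a local central limit theorem or direct Stirling bounds on $k^k e^{-k}/k!$ — to extract the sharp constant $1/\sqrt{2\pi}$ rather than merely the order $1/\sqrt k$; this is where the $(1-\varepsilon)$ slack in the statement is spent.
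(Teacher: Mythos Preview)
Your two–step plan—first pin down $\apx_k(\text{Pareto}_2)=\varphi_k(2)$ exactly, then analyze the large-$k$ asymptotics of $\varphi_k(2)$—is precisely the structure of the paper's proof. The high-level strategy is sound.

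\textbf{On step (i).} Your proposed route (uniform convergence on compacta in $x$ plus boundary control at $x\to 0,\infty$) would work, but the paper takes a slightly cleaner path that avoids uniform convergence altogether. The paper fixes the \emph{optimal} price $T_n$ for each $n$, shows $\{T_n/a_n\}$ is bounded (if not, Bernoulli's inequality gives $\PP(M_n>T_n)\le n/T_n^2$, forcing $\apx_k(F)=0$, a contradiction), extracts a convergent subsequence $T_n/a_n\to U$, and then invokes the pointwise limit Lemma~\ref{lem:frechet-technical-2}. Finally $U$ is shown to be the maximizer of the limiting optimization by a direct contradiction: were some $\widetilde U$ strictly better, the sequence $\widetilde T_n=a_n\widetilde U$ would beat $T_n$ for large $n$. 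This subsequence/compactness argument handles both boundary regimes at once and sidesteps the uniformity you flagged as the main obstacle.

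\textbf{On step (ii), there is a genuine gap.} You write ``for the optimal $x$ one expects $x=\Theta(1/\sqrt k)$'' and then substitute $x^{-2}=k$. But evaluating at a specific $x$ yields only a \emph{lower} bound on $\varphi_k(2)$; for the upper bound you must control the maximum over \emph{all} $x$, which requires \emph{proving} the maximizer lies in a small window around $k^{-1/2}$. This is not free—the paper devotes a separate lemma to it, showing via explicit first-derivative analysis that $f_k'$ changes sign exactly once and the unique root $x_k$ satisfies $(k+1)^{-1/2}\le x_k\le k^{-1/2}$. Only with this localization in hand can one bound $\varphi_k(2)$ from above by quantities depending on $e^{-x_k^{-2}}x_k^{-2k}/k!$ and then invoke Stirling. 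Your obstacle paragraph treats the Poisson tail estimate as the hard part, but locating the optimizer is where the real work is; the $(1-\varepsilon)$ slack is spent on both pieces.

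\textbf{A computational slip.} In rewriting the double sum you get $\sum_{s\ge 1}\min(s,k)x^{-2s}/s!$, which equals $x^{-2}e^{x^{-2}}-\sum_{s>k}(s-k)x^{-2s}/s!$, not $e^{x^{-2}}-1-\sum_{s>k}(s-k)x^{-2s}/s!$ as you wrote (you need $\sum_{s\ge 1} s\,a_s$, not $\sum_{s\ge 1} a_s$). The corrected objective is $x^{-1}-x e^{-x^{-2}}\sum_{s>k}(s-k)x^{-2s}/s!$, and evaluating at $x=k^{-1/2}$ indeed produces the $1-1/\sqrt{2\pi k}$ target after multiplying by $\Gamma(k)/\Gamma(k+1/2)\sim k^{-1/2}$; your version would give something of the wrong order.
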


%\subsection{Prophet Inequalities for the Single Selection Problem}
%\label{sec:apx-prophet}

%For every positive integer $n$, consider an i.i.d. sample $X_1,X_2,\ldots,X_n$ with $X_j$ distributed according to $F$ for every $j\in \{1,\ldots,n\}$, and where $F$ is a distribution over the non-negative reals.
%For short, we just say that it is an i.i.d. sample according to $F$ of size $n$.
%We say that a stopping time $t$ is an $\alpha$ prophet inequality if the expected value of $X_t$ is at least $\alpha\cdot \EE(M_n)$.
%We say that a stopping time is given by {\it single threshold policy} if there exists a threshold value such that we select the first random variable attaining a value which is larger than the threshold.
We defer the proof of Theorem \ref{thm:apx:multi} to Appendix \ref{sec:proof_main_thm}, and the proof of Theorem \ref{thm:tightness} to Appendix \ref{app:tightness}. Observe that by Theorem~\ref{thm:apx:multi} we have that for each integer $k$ the approximation factor is more than $1-1/\sqrt{2k\pi}$ under the large market assumption. Moreover, by Theorem~\ref{thm:tightness} this lower bound is in fact asymptotically tight in $k$ for the distributions with extreme type Fr\'echet of parameter $\alpha=2$. This tightness, together with the recent result of Duetting et al. \cite{DFK16} establishing that the approximation ratio of the $k$-unit problem without the large market assumption is almost $1-1/\sqrt{2k\pi}$, allows us to obtain the result that the large market advantage vanishes as $k\to \infty$.

% That is, the lower bound on the approximation factor given by Theorem \ref{thm:apx:multi} is in fact asymptotically tight in $k$ for the distributions with extreme type Fr\'echet of parameter $\alpha=2$.
Despite the tightness result established in Theorem~\ref{thm:tightness}, for small values of $k$ this bound is in fact substantially better.
%\subsection{Refined Approximation Factor for $k=1$}%\subsubsection{Refined Approximation Factor for $k=1$}
%\label{sec:single-frechet}
Consider a distribution $F$ with extreme type Fr\'echet of parameter $\alpha\in (1,\infty)$.
By Theorem \ref{thm:apx:multi}\ref{multiple-frechet}, when $k=1$ it holds that 
\begin{equation}\label{eq:opt-U-for-1}
	\varphi_1(\alpha)=\frac{1}{\Gamma(2-1/\alpha)}\sup_{x\in (0,\infty)}x\Big(1-\exp(-x^{-\alpha})\Big),
\end{equation}
for every $\alpha\in (1,\infty)$.
The optimum for the above optimization problem as a function of $\alpha $ is attained at the smallest real non-negative solution $U^*(\alpha)$ of the first order condition $U^{\alpha}+\alpha=U^{\alpha}\exp(U^{-\alpha})$, which is given by
\[U^*(\alpha)=\left(-\frac{1}{\alpha}\left(\alpha W_{-1}\left(-\frac{1}{\alpha}e^{-1/\alpha}\right)+1\right)\right)^{-1/\alpha},\]
where $W_{-1}$ is the negative branch of the Lambert function. %\footnote{Recall that the Lambert $W$ function is defined as the multivalued function that satisfies $z=W(z) \exp(W(z))$ for any complex number $z$. If $x$ is real then for $-1/e \leq x < 0 $ there are two possible real values of $W(x)$. We denote the branch satisfying $-1 \leq W(x)$ by $W_0(x)$, the principal branch, and the branch satisfying $W(x)\leq -1 $ by $W_{-1}(x)$, the negative branch.} 
Therefore, we have
\[\varphi_1(\alpha)=\frac{\alpha}{\Gamma(2-1/\alpha)}\cdot \frac{U^*(\alpha)}{U^*(\alpha)^{\alpha}+\alpha}.\]
The minimum value is at least $0.712$ and it is attained at $\alpha^*\approx 1.656$.
Note that when $\alpha$ approaches to zero or $\infty$, the function $\varphi_1$ goes to one and thus the unique minimizer is given by $\alpha^*\approx 1.656$. 
We highlight here that, even though Theorem~\ref{thm:apx:multi} implies that $\apx(F)$ is at least $\varphi_1(\alpha^*) \approx 0.712$ when $F$ has extreme type Fréchet, this bound is in fact reached by the Pareto distribution with parameter $\alpha^*$ and therefore this bound is tight.

Given our closed expression for the function $\varphi_1$, we can compare it with the closed expression obtained by Kennedy and Kertz for the welfare guarantees of the optimal dynamic policy \cite{KK91}. 
 Given a distribution $F$, for every positive integer $n$ let $v_n=\sup\{\EE(X_{\tau}):\tau\in \calT_n\}$ and consider the stopping time given by $\tau_n=\min\{k\in \{1,\ldots,n\}:X_k>v_{n-k}\}$.
 In particular, $v_n=\EE(X_{\tau_n})$ for every positive integer $n$.
 The following summarizes the result of Kennedy and Kertz \cite{KK91} for the optimal dynamic policy:
When $F$ is a distribution in the Fr\'echet family,
there exists $\nu:(1,\infty)\to (0,1)$ such that $\lim_{n\to \infty}v_n/\EE(M_n)=\nu(\alpha)$ when $F$ has an extreme type Fr\'echet of parameter $\alpha$.
Furthermore, $\lim_{\alpha\to \infty}\nu(\alpha)=\lim_{\alpha\to 1}\nu(\alpha)=1$ and $\nu(\alpha)\ge 0.776$ for every $\alpha\in (1,\infty)$.
The function $\nu$ is given by 
 \[\nu(\alpha)=\frac{1}{\Gamma(2-1/\alpha)}\left(1-\frac{1}{\alpha}\right)^{1-\frac{1}{\alpha}},\]
 and we have $\varphi_1(\alpha)\le \nu(\alpha)$ for every $\alpha\in (1,\infty)$.
 
 Kennedy and Kertz show that the asymptotic approximation obtained by their policy when the distribution has an extreme type in the Gumbel and reversed Weibull family is equal to one.
 Our Theorem~\ref{thm:apx:multi}\ref{multiple-Gumbel} shows that for both such families we can attain this value by using just fixed price policies.
The {\it adaptivity gap} is equal to the ratio between the optimal welfare guarantee obtained by fixed price policies and the value obtained by the policy of Kennedy and Kertz.
 As a corollary of our result for $k=1$, we obtain an upper bound on the adaptivity gap for the case of distributions with extreme value.
 For the family of distributions over the non-negative reals and satisfying the extreme value condition we have that the adaptivity gap is at most 
$\max_{\alpha\in (1,\infty)}\nu(\alpha)/\varphi_1(\alpha)\approx 1.105$
 and is attained at $\alpha\approx 1.493$.

\subsection{Revenue Guarantees}\label{sec:stability}

In this section we describe how our welfare guarantees can be used to obtain guarantees for fixed price mechanisms considering the revenue objective.
The {\it virtual valuation} associated to a distribution $F$ is given by $\phi_F(t)=t-(1-F(t))/f(t)$, where $f$ is the density function of $F$.
When $v$ is distributed according to $F$, we denote by $F_{\phi}$ the distribution of $\phi_F(v)$ and by $F_{\phi}^+$ the distribution of $\phi^+_F(v)=\max\{0,\phi_F(v)\}$.

In particular, we are interested in understanding the limit ratio defined in a similar way as the welfare guarantee \eqref{eq:ratio2}, namely {\it revenue guarantee}, where the ratio is now taken between the revenue of a fixed price policy, and the revenue obtained by Myerson's optimal mechanism.
%Using Theorem \ref{thm:apx:multi} we can apply the existing reductions in the literature~\cite{Chawla2010,CFPV19,HKS07} to translate our guarantees for the welfare objective to fixed price policies for the revenue objective, as long as $F_{\phi}^+$ satisfies the extreme value condition.
Using Theorem \ref{thm:apx:multi} we can apply the existing reductions in the literature~\cite{Chawla2010,CFPV19,HKS07} to translate our welfare guarantees to revenue guarantees, as long as $F_{\phi}^+$ satisfies the extreme value condition.
If  $F_{\phi}^+$ has extreme value Fr\'echet, the revenue guarantee of a fixed price policy for the $k$-unit problem is bounded by the maximization problem \eqref{eq:phi-function} which is more than $1-1/\sqrt{2k\pi}$, and asymptotically tight in $k$. When $k=1$, we further have the revenue guarantee of roughly 0.712. 
When $F_{\phi}^+$ is in the Gumbel or reversed Weibull families, we have that fixed price policies can recover the same revenue of that of the optimal mechanism for the $k$-unit problem, for every $k$.

%A distribution $V$ is a {\it Von Mises function} if there exist $z_0\in \RR$, a constant $\theta>0$ and a function $\eta:(\omega_0(V),\infty)\to \RR_+$ absolutely continuous with $\lim_{u \to \infty} \eta'(u)=0$, such that for every $t\in (z_0,\infty)$ we have
When $F$ is in the Gumbel family, it admits a functional representation by the so called {\it Von Mises family}, and we say that it can be {\it smoothly represented} when the Von Mises representation satisfies a mild smoothness condition; see Appendix \ref{app:EVT-prelim} for the formal definition.
A distribution $F$ with extreme type Fr\'echet of parameter $\alpha$ satisfies the {\it asymptotic regularity condition} if
$\lim_{t\to \infty}(1-F(t))/(tf(t))=1/\alpha$. 
This holds, for example, every time that $f$ is non-increasing; see, e.g., \cite[Proposition 1.15]{libroextremo}.
We show that if a distribution $F$ satisfies any of these two conditions, the distribution $F_{\phi}^+$ has an extreme type as well, and it belongs to the same family. The proof can be found in Appendix~\ref{app:VVEV}. 
%Finally, we present the revenue guarantees obtained as a corollary of Theorem~\ref{thm:stability} and Theorem~\ref{thm:apx:multi}.

\begin{theorem}
	\label{thm:stability}
	Let $F$ be a distribution satisfying the extreme value condition. 
	The following holds:
	\begin{enumerate}[label=\normalfont(\alph*)] 
		\item If $F$ has extreme type in the Fr\'echet family with parameter $\alpha$ and it satisfies the asymptotic regularity condition, then $F_{\phi}^+$ has extreme type in the Fr\'echet family with the same parameter.
		\item If $F$ has extreme type Gumbel and it can be smoothly represented, then $F_{\phi}^+$ has extreme type Gumbel.
	\end{enumerate}
\end{theorem}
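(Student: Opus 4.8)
The goal is to show that the positive-virtual-value distribution $F_\phi^+$ inherits both the extreme-value membership and (in the Fréchet case) the parameter of $F$. My plan is to work with the tail of $F_\phi^+$ directly and match it against the standard tail characterizations of each domain of attraction. Recall that $F$ has extreme type Fréchet with parameter $\alpha$ iff $1-F$ is regularly varying at $\omega_1(F)=\infty$ with index $-\alpha$ (Gnedenko's criterion), and $F$ has extreme type Gumbel iff $1-F$ admits a Von Mises representation $1-F(t)=\eta(t)\exp\!\big(-\int_{z_0}^t \mathrm ds/\aux(s)\big)$ with the auxiliary function $\aux$ satisfying $\aux'(t)\to 0$, up to the asymptotic factor $\eta(t)\to\eta^\star>0$ of Lemma \ref{prop:Gumbel-prop}. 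So the whole theorem reduces to: understand the relationship between the tail of $v$ (under $F$) and the tail of $\phi_F(v)$.

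The key structural observation is that $\phi_F(t)=t-(1-F(t))/f(t)$, so I should first establish that $\phi_F(t)/t\to 1$ as $t\to\omega_1(F)=\infty$ in both cases, i.e. the virtual value is asymptotically equal to the value. In the Fréchet case this is exactly the asymptotic regularity condition $(1-F(t))/(tf(t))\to 1/\alpha$, which gives $\phi_F(t)=t(1-1/\alpha+o(1))$; in particular $\phi_F$ is eventually positive and eventually strictly increasing, so for large thresholds $\phi_F^+$ and $\phi_F$ agree and $\phi_F$ has a well-defined asymptotic inverse. In the Gumbel case I use the Von Mises representation together with $\aux(t)/t\to 0$ (from Lemma \ref{lem:von-mises-prop}), which forces $(1-F(t))/f(t)=\aux(t)(1+o(1))=o(t)$, hence again $\phi_F(t)=t+o(t)\to\infty$, and the smoothness assumption on the representation is precisely what I need to differentiate through and control the error terms uniformly. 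Then, since the events $\{\phi_F(v)>s\}$ and $\{v>\phi_F^{-1}(s)\}$ coincide for large $s$, I get $1-F_\phi^+(s)=1-F\big(\phi_F^{-1}(s)\big)$ for $s$ large.

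With that identity in hand the two cases finish by the respective Gnedenko/Von Mises criteria. For Fréchet: since $\phi_F^{-1}(s)\sim s/(1-1/\alpha)$ is asymptotically linear, and $1-F$ is regularly varying with index $-\alpha$, the composition $s\mapsto 1-F(\phi_F^{-1}(s))$ is again regularly varying with index $-\alpha$ (composition of a regularly varying function with an asymptotically linear one preserves the index); by Gnedenko this says $F_\phi^+$ has extreme type Fréchet with the same parameter $\alpha$. For Gumbel: I substitute $\phi_F^{-1}(s)=s+o(s)$ into the Von Mises representation of $1-F$ and show the result is again a Von Mises representation, i.e. $1-F_\phi^+(s)=\tilde\eta(s)\exp\!\big(-\int \mathrm ds'/\tilde\aux(s')\big)$ with a new auxiliary function $\tilde\aux$ still satisfying $\tilde\aux'\to 0$ — here I would use the change of variables induced by $\phi_F^{-1}$ and the smoothness of the representation to verify that the derivative condition is preserved, then invoke Lemma \ref{prop:Gumbel-prop} to conclude extreme type Gumbel.

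The main obstacle I anticipate is the Gumbel case: verifying that after composing with $\phi_F^{-1}$ the auxiliary function still has vanishing derivative is delicate, because $\phi_F^{-1}$ is only known asymptotically and differentiating an asymptotic relation is not automatic — this is exactly why the theorem hypothesizes a \emph{smooth} Von Mises representation rather than just a Von Mises representation. Controlling $\phi_F'$ (equivalently $\phi_F^{-1}{}'$) near infinity will require turning the qualitative statement $\aux'(t)\to 0$ into a usable rate, or at least into monotonicity/regularity of $\aux$, and propagating that through the inversion. The Fréchet case, by contrast, should be essentially bookkeeping with regular variation once the asymptotic linearity of $\phi_F^{-1}$ is established, with the only minor care being the passage from $\phi_F^+$ to $\phi_F$ on the relevant tail and handling the atom of $F_\phi^+$ at $0$, which is irrelevant for the tail behavior.
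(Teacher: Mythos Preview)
Your plan and the paper's proof share the same opening move---control $\phi_F(t)$ asymptotically via $(1-F(t))/f(t)$---but diverge at the key technical step. The paper's unifying device is \emph{tail equivalence}: it shows that $\lim_{t\to\infty}(1-F_\phi(t))/(1-F(t))$ (respectively $(1-F_\phi(t))/(1-V(t))$ in the Gumbel case) exists and is a positive constant, and then invokes the classical fact (their Theorem~\ref{thm:tail-behavior}) that tail-equivalent distributions lie in the same domain of attraction. This route never inverts $\phi_F$ and never differentiates through an asymptotic relation.

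In the Fr\'echet case your argument has a gap: the asymptotic regularity condition gives $\phi_F(t)=(1-1/\alpha)t(1+o(1))$ but does \emph{not} imply that $\phi_F$ is eventually strictly increasing, so the exact identity $1-F_\phi^+(s)=1-F(\phi_F^{-1}(s))$ is unjustified. The paper instead uses the two-sided bound $(1-\varepsilon)(1-1/\alpha)x\le\phi_F(x)\le(1+\varepsilon)(1-1/\alpha)x$ for large $x$ to sandwich $\PP(\phi_F(v)>t)$ between $1-F(c_\pm t)$ for two nearby constants, and then the slowly varying representation of $1-F$ (Lemma~\ref{lem:fre-key}) closes the sandwich to the exact limit $(1-1/\alpha)^\alpha$. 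If you replace your literal inverse by this sandwich, your regular-variation conclusion goes through and coincides with the paper's.

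In the Gumbel case the paper's route is genuinely simpler than yours and sidesteps precisely the obstacle you flag. Rather than building a new Von Mises representation for $F_\phi$ (which would force you to control $(\phi_F^{-1})'$), the paper uses only $(1-F(v))/f(v)\approx\aux(v)$ and the Lipschitz-type control $|\aux(u)-\aux(t)|\le\delta(u-t)$ coming from $\aux'\to 0$ (their Proposition~\ref{prop:asympt}) to sandwich the event $\{\phi_F(v)>t\}$ between events of the form $\{v>t+c\,\aux(t)\}$; the Von Mises form of $1-V$ then yields $(1-F_\phi(t))/(1-V(t))\to\eta^\star/e$. No differentiation of an inverse is needed---tail equivalence plus Theorem~\ref{thm:tail-behavior} finishes immediately.
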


%\vlcom{Do we have some results about Reverse Weibull distributions here? Just to make the result complete.}
%\vvcom{For reverse weibull (and bounded support in general) you can very easily recover competitive ratio 1; I added a short comment in the introductory paragraph of Section 3 about this. Is the text in blue.}
As it was mentioned above, the well established connection between posted pricing mechanisms and prophet inequalities \cite{Chawla2010,CFPV19,HKS07}, together with Theorem~\ref{thm:stability}, allows us to apply Theorem~\ref{thm:apx:multi} to obtain optimal revenue guarantees of fixed price policies in the large market setting. In what follows we state this result, whose proof follows directly from the above-mentioned results.

\begin{corollary} 
	Let $F$ be a distribution satisfying the extreme value condition. 
	The following holds:
\begin{enumerate}[label=\normalfont(\alph*)]
	\item When $F$ has an extreme type Fr\'echet of parameter $\alpha$ and it satisfies the asymptotic regularity condition, we have that the revenue guarantee for the $k$-unit problem is at least  $1-1/\sqrt{2\pi k}$, and this bound is asymptotically tight in $k$.
	\item When $F$ has extreme type in the Gumbel family and it can be smoothly represented or in the reversed Weibull family, we have that the revenue guarantee is $1$ for every positive integer $k$.
%	\item When $F$ is in the reversed Weibull family, we have that $\apx_k(F)=1$ for every positive integer $k$.\label{multiple-weibull}
\end{enumerate}
\end{corollary}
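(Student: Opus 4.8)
The plan is to obtain the corollary by combining the welfare bounds of Theorem~\ref{thm:apx:multi}, the extreme-type stability statement of Theorem~\ref{thm:stability}, and the tightness instance of Theorem~\ref{thm:tightness}, all glued together by the classical equivalence between revenue maximization and virtual-welfare maximization and by the posted-pricing/prophet-inequality reduction \cite{Chawla2010,CFPV19,HKS07}. Write $G=F_{\phi}^{+}$ for the law of $\phi_F^{+}(v)$ when $v\sim F$. The first step is to make the reduction precise. By Myerson's theory, the optimal $k$-unit revenue with $n$ i.i.d.\ buyers drawn from $F$ equals the expected sum of the top $k$ order statistics of an $n$-sample from $G$, i.e.\ $\sum_{j=1}^k\EE((M_n^{G})^{j})$, which is exactly the denominator appearing in $\apx_k(G)$. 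On the policy side, for any price $T$ above the monopoly price we have $q:=\phi_F(T)\ge 0$, and the integration-by-parts identity $\EE(\phi_F(v)\,\mathbf{1}\{v\ge T\})=T\,(1-F(T))$ together with the coupling $v\mapsto\phi_F^{+}(v)$ and a symmetry argument over the arrival order yields that the revenue collected by the fixed price $T$ on $F$ equals $\EE(\calR^{n}_{k,q})$, the value of the fixed-threshold-$q$ policy on samples from $G$. Taking $\sup_T$ and then $\liminf_n$ shows that the revenue guarantee of $F$ \emph{coincides} with $\apx_k(G)$, provided $G$ satisfies the extreme value condition so that the framework of Section~\ref{sec:single-selection-proof} applies to it verbatim.

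\textbf{Locating $G$ and the lower bounds.} Next I would invoke Theorem~\ref{thm:stability} to place $G$ in the right family. If $F$ is Fr\'echet of parameter $\alpha$ and asymptotically regular, then $G$ is Fr\'echet with the same parameter $\alpha$, so $\apx_k(G)\ge\varphi_k(\alpha)\ge 1-1/\sqrt{2\pi k}$ by Theorem~\ref{thm:apx:multi}\ref{multiple-frechet} (and $\ge 0.712$ when $k=1$); this gives the lower bound in part~(a). If $F$ is Gumbel and can be smoothly represented, then $G$ is Gumbel, and $\apx_k(G)=1$ by Theorem~\ref{thm:apx:multi}\ref{multiple-Gumbel}. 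For $F$ in the reversed Weibull family one need not even pass through $G$: since $\omega_1(F)<\infty$ and $\phi_F(t)\le t$, the fixed price $(1-\varepsilon)\,\omega_1(F)$ sells all $k$ units with probability tending to one as $n\to\infty$, while the optimal revenue is at most the optimal welfare, which is at most $k\,\omega_1(F)$; letting $\varepsilon\to 0$ yields revenue guarantee $1$. This settles part~(b).

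\textbf{Asymptotic tightness.} For the tightness claim in part~(a) I would exhibit an explicit \emph{fixed} witness. Take $F$ to be the Pareto distribution with parameter $2$; then $\phi_F(v)=v/2$ is increasing (so $F$ is regular, hence asymptotically regular) and strictly positive, so $G=F_{\phi}^{+}$ is merely a positive rescaling of the Pareto-$2$ distribution. Since $\apx_k$ is invariant under positive scaling, $\apx_k(G)=\apx_k(\text{Pareto-}2)\le 1-(1-\varepsilon)/\sqrt{2\pi k}$ for all $k\ge k_\varepsilon$ by Theorem~\ref{thm:tightness}. Through the reduction of the first paragraph, the revenue guarantee of $F$ is then at most $1-(1-\varepsilon)/\sqrt{2\pi k}$, which matches the lower bound up to the factor $1-\varepsilon$; hence the bound $1-1/\sqrt{2\pi k}$ is asymptotically tight in $k$, consistent with the classical $k$-unit revenue bound of \cite{DFK16}.

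\textbf{Main obstacle.} I expect the real difficulty to lie in rigorously justifying the reduction rather than in any new estimate. One must (i) argue that ``optimal revenue $=$ virtual welfare on $G$'' survives when $F$ is only asymptotically regular (Fr\'echet case) or smoothly representable (Gumbel case) rather than globally regular, i.e.\ control the effect of ironing or show that the large-$n$ limit is insensitive to it; (ii) verify that the supremum over prices on the $F$ side matches the supremum over thresholds on the $G$ side and that restricting to prices above the monopoly price is without loss for the revenue objective; and (iii) confirm that $G=F_{\phi}^{+}$ is absolutely continuous and otherwise regular enough for the statements of Section~\ref{sec:single-selection-proof} to apply. All three are delivered by the cited reductions and by Theorem~\ref{thm:stability}, but should be spelled out carefully.
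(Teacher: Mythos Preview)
Your proposal is correct and follows essentially the same route as the paper: combine the posted-pricing/prophet-inequality reduction \cite{Chawla2010,CFPV19,HKS07} with Theorem~\ref{thm:stability} to transfer the question to $G=F_{\phi}^{+}$, and then invoke Theorem~\ref{thm:apx:multi} (and Theorem~\ref{thm:tightness} for tightness). The paper's own argument is in fact just the one-line remark that the corollary ``follows directly from the above-mentioned results''; you have spelled out precisely what that entails, including the explicit Pareto-$2$ witness for tightness and the direct bounded-support argument for the reversed Weibull case, and you correctly flag the regularity/ironing caveats that the paper leaves to the cited reductions.
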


\section{Competition Complexity Guarantees in Large Markets}\label{sec:comp-complexity}

In what follows, we denote by $\Gmax{n}$ the optimal dynamic programming policy value for an i.i.d. sample of size $n$ distributed according to $F$. The sequence $\{G_n(F)\}_{n\in \NN}$ satisfies the following recurrence: $G_0(F)=0$, $G_1(F)=\EE(X)$, and $G_{n+1}(F)=\EE(\max(X,G_n(F)))$, where $X$ is distributed according to $F$. The large market {\it competition complexity} of a distribution $F$ is
\begin{equation}
\ACCMmax= \liminf_{n\to \infty}\inf_{m\ge n}\set{m/n:\Gmax{m} \geq \EE(\textstyle\max_{i\in [n]} X_i)}.\label{eq:comp-comp}
\end{equation}
By studying the optimal policy sequence $\{G_n(F)\}_{n\in \NN}$ in the large market regime, we fully characterize the competition complexity \eqref{eq:comp-comp} of distributions satisfying the extreme value condition using a single function that depends only on the parameter $\alpha$ of the three families of extreme value distributions. To facilitate our analysis, it is easier to re-parametrize the distributions based on a single value $\gamma \in \R$ as follows: (i) for distributions in the Fr\'{e}chet family, we let $\gamma = 1/\alpha$, (ii) for distributions in the Gumbel family, we let $\gamma = 0$ and (iii) for distributions in the Reverse Weibull family, we let $\gamma = - 1/\alpha$. In this way, we can analyze the competition complexity solely based on $\gamma$, as each distribution family corresponds to a different domain for $\gamma$, namely positive, zero and negative, respectively.

In \Cref{sec:analysis-competition} we summarize the technical details about this parameterization and provide a proof of the following theorem.
%The sequences $(a_n)_{n\in \N}, (a'_n)_{n\in \N}$, and $(b_n)_{n\in \N}, (b'_n)_{n\in \N}$, play the role of the shifting and scaling that is necessary for convergence in distribution, as in the standard central limit theorem. We write $\cD \in D_\gamma$ to indicate that the distribution $\cD$ belongs in the domain of attraction of $D_\gamma$.}
%\section{Analysis of the $k$-Selection Prophet Inequalities} \label{sec:proof_main_thm}
%Roughly speaking, the competition complexity of a distribution is defined as the smallest factor $c$ for which the optimal threshold algorithm seeing a sample of size $c n$ can outperform the prophet's objective, i.e., the expected maximum on a sample of size $n$. 
\begin{theorem}\label{thm:comp-complexity-evt}
Let $F$ be a distribution over the non-negative reals that satisfies the extreme value condition. The large market competition complexity of $F$ is $\ACCMmax = (1-\gamma) \prn{\Gamma(1-\gamma)}^{1/\gamma}$. In particular,
\begin{enumerate}[label=\normalfont(\alph*)] 
    \item If $F$ is in the Fr\'{e}chet family with parameter $\alpha > 0$, then $\ACCMmax = \prn{1-\frac{1}{\alpha}} \prn{\Gamma\prn{1-\frac{1}{\alpha}}}^\alpha$.\label{thm:CC-frechet}
    \item If $F$ is in the Gumbel family, then $\ACCMmax = e^{\gamma^\star} \approx 1.781$, where $\gamma^\star \approx 0.577$ is the Euler-Mascheroni constant.\label{thm:CC-gumbel}
    \item If $F$ is in the Reverse Weibull family with parameter $\alpha > 0$, then $\ACCMmax = \prn{1+\frac{1}{\alpha}} \prn{\Gamma\prn{1+\frac{1}{\alpha}}}^\alpha$.\label{thm:CC-weibull}
\end{enumerate}
\end{theorem}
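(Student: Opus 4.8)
The plan is to turn the competition complexity into a sharp asymptotic race between the dynamic programming value $G_n(F)$ and the benchmark $\E(M_n)$, analyzing both through extreme value theory. The key observation is that $G_{n+1}(F)=\E(\max(X,G_n(F)))$ is equivalent to $G_{n+1}(F)-G_n(F)=R(G_n(F))$, where $R(t)=\int_t^{\omega_1(F)}(1-F(s))\,\dif s$ is the integrated tail; since $R$ is positive and decreasing, $(G_n(F))_n$ is increasing, and comparing $\Xi(G_{n+1})-\Xi(G_n)=\int_{G_n}^{G_{n+1}}\dif s/R(s)$ with its two endpoint values (both $1+o(1)$, since the increments $R(G_n(F))\to 0$ are negligible against the local scale of $F$) yields $\Xi(G_n(F))=n\,(1+o(1))$ for $\Xi(t)=\int_{t_0}^{t}\dif s/R(s)$. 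Writing $V(t)=1/(1-F(t))$ and $a(t)=(1-F(t))/f(t)$, so that $V'(t)=1/(a(t)(1-F(t)))$, and invoking the extreme value regularity estimate $R(t)\sim a(t)(1-F(t))/(1-\gamma)$ (a Karamata-type statement for Fréchet and reversed Weibull, a von Mises-type statement for Gumbel), we obtain $\Xi(t)\sim(1-\gamma)\,V(t)$ and hence the clean identity $G_n(F)\sim F^{-1}\!\big(1-\tfrac{1-\gamma}{n}\big)$, in the scaled sense appropriate to each family. (In the Fréchet case this can also be read off from the Kennedy–Kertz limit $G_n(F)/\E(M_n)\to\nu(\alpha)$ recalled in the excerpt together with regular variation of $\E(M_n)$, but the Gumbel and reversed Weibull cases require the self-contained recursion analysis, since there Kennedy–Kertz only gives the ratio tending to $1$.)

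The second ingredient is the classical expansion $\E(M_n)=b_n+a_n\mu_\gamma+o(a_n)$, where $b_n,a_n$ are the shifting and scaling sequences and $\mu_\gamma=(\Gamma(1-\gamma)-1)/\gamma$ is the mean of the limiting generalized extreme value law; concretely, $\E(M_n)\sim\Gamma(1-\gamma)\,q_n$ for Fréchet (with $q_n=F^{-1}(1-1/n)$, $b_n=0$), $\E(M_n)=q_n+\gamma^\star a_n+o(a_n)$ for Gumbel (with $b_n=q_n$), and $\E(M_n)=\omega_1(F)-\Gamma(1-\gamma)\,(\omega_1(F)-q_n)+o(\omega_1(F)-q_n)$ for reversed Weibull. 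Substituting $G_m(F)\sim q_{m/(1-\gamma)}$ and using regular variation of $q_n$ with index $\gamma$ when $\gamma\neq 0$ — and the slow variation of $a_n$ together with $q_{cn}-q_n\sim a_n\log c$ when $\gamma=0$ — the event $G_m(F)\ge\E(M_n)$ with $m=cn$ becomes, as $n\to\infty$, the single inequality $c^\gamma\ge(1-\gamma)^\gamma\Gamma(1-\gamma)$ for Fréchet, $c^\gamma\le(1-\gamma)^\gamma\Gamma(1-\gamma)$ for reversed Weibull (the direction flips because $t\mapsto t^\gamma$ decreases for $\gamma<0$), and $\log c\ge\gamma^\star$ for Gumbel. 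Each of these is equivalent to $c\ge(1-\gamma)\,(\Gamma(1-\gamma))^{1/\gamma}$, with the Gumbel threshold $e^{\gamma^\star}$ arising as $\lim_{\gamma\to 0}(1-\gamma)(\Gamma(1-\gamma))^{1/\gamma}=e^{\gamma^\star}$ via $\Gamma(1-\gamma)=1+\gamma^\star\gamma+o(\gamma)$.

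To finish, I would note that $G_m(F)$ is non-decreasing and $G_n(F)<\E(M_n)$ strictly for every non-degenerate $F$, so the set in \eqref{eq:comp-comp} equals $\{m\ge m^\star(n)\}$ with $m^\star(n)=\min\{m:G_m(F)\ge\E(M_n)\}>n$, whence $\ACCMmax=\liminf_n m^\star(n)/n$. The two-sided comparison above, together with the uniform convergence theorem for regularly varying functions (to control $q_{\lceil cn\rceil}/q_n$ and $a_{\lceil cn\rceil}/a_n$ uniformly for $c$ in compacts), gives $m^\star(n)/n\to(1-\gamma)(\Gamma(1-\gamma))^{1/\gamma}$, so the $\liminf$ equals this constant; setting $\gamma=1/\alpha$, $\gamma=0$, $\gamma=-1/\alpha$ yields parts \ref{thm:CC-frechet}--\ref{thm:CC-weibull}, and $e^{\gamma^\star}\approx 1.781$. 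I expect the main obstacle to be the Gumbel case of the first step: the bound $G_n(F)/\E(M_n)\to 1$ is far too coarse, and one needs the sharper $(G_n(F)-b_n)/a_n\to 0$, which forces controlling the error in $\Xi(t)\sim(1-\gamma)V(t)$ down to the scale $a_n$ — equivalently, showing $\int^{t}V'(s)\,\varepsilon(s)\,\dif s=o(V(t))$ where $\varepsilon(t)\to 0$ quantifies the deviation of $R(t)$ from $a(t)(1-F(t))$ — and it is precisely the property $a_n=o(b_n)$ of the Gumbel domain that renders the resulting $o(a_n)$ correction in $G_n$ harmless. A secondary technical point is the rigorous passage from the discrete recursion $G_{n+1}-G_n=R(G_n)$ to the integral $\Xi$, for which the monotonicity of $R$ and the smallness of the increments are the levers.
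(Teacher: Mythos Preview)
Your proposal is correct and arrives at the same pivotal identity as the paper, $G_n(F)\approx F^{-1}\!\bigl(1-\tfrac{1-\gamma}{n}\bigr)$, followed by the standard extreme-value expansion of $\E(M_n)$ and the regular-variation comparison $q_{cn}/q_n\to c^\gamma$. The route to that identity, however, is genuinely different. The paper (Lemmas~\ref{lem:gn-done}--\ref{lem:alg-quantile}) first uses Karamata to obtain $G_{n+1}\approx G_n\bigl(1+\tfrac{\gamma}{1-\gamma}(1-F(G_n))\bigr)$ and then, crucially, \emph{imports} Kennedy--Kertz: knowing that $G_n/E_n$ (respectively $(\omega_1-G_n)/(\omega_1-E_n)$) converges, one forces the increments to match those of $E_n$ computed in Lemma~\ref{lem:mu-ratio-approx}, which pins down $1-F(G_n)\approx(1-\gamma)/(n+1)$. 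Your approach is self-contained: you invert the recursion through $\Xi(t)=\int^{t}\dif s/R(s)$, show $\Xi(G_n)\sim n$ by sandwiching the telescoping increments between $1$ and $R(G_n)/R(G_{n+1})\to 1$, and identify $\Xi(t)\sim(1-\gamma)V(t)$ via the tail-integral asymptotics and L'H\^opital. This buys independence from \cite{KK91}, and in fact your argument would re-derive the Kennedy--Kertz ratios as a by-product rather than consume them; the paper's argument is shorter precisely because it is willing to cite that result. Your caution about Gumbel is well placed but slightly overstated: once you have $V(G_n)=n(1+o(1))$, the local tail expansion $1-F(b_n+xa_n)\sim e^{-x}/n$ (locally uniform in $x$, a standard consequence of the domain-of-attraction characterization) immediately upgrades this to $(G_n-b_n)/a_n\to 0$, so no finer control of the $\Xi\sim V$ error is needed.
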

As a corollary of Theorem~\ref{thm:comp-complexity-evt}, we can provide upper and lower bounds on the large market competition complexity for every family of distributions satisfying the extreme value condition.
\begin{corollary}\label{cor:comp-complexity-families}
For every distribution $F$, the following holds:
\begin{enumerate}[label=\normalfont(\alph*)] 
    \item If $F$ is in the Fr\'{e}chet family, then $1 \leq \ACCMmax \leq e^{\gamma^\star} \approx 1.781$, where $\gamma^\star \approx 0.577$ is the Euler-Mascheroni constant.
    \label{cor:cc-frechet}
    \item If $F$ is in the Gumbel family, then $\ACCMmax= e^{\gamma^\star} \approx 1.781$.
    \label{cor:cc-gumbel}
    \item If $F$ is in the Reverse Weibull family, then $e^{\gamma^\star} \approx 1.781 \leq \ACCMmax \leq e \approx 2.718$.
    \label{cor:cc-weibull}
\end{enumerate}
Furthermore, all these bounds are asymptotically tight for $\gamma$ in the corresponding regime.
\end{corollary}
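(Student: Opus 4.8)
\textbf{Proof plan for Corollary~\ref{cor:comp-complexity-families}.}
The plan is to deduce everything from the closed form $\ACCMmax = (1-\gamma)\prn{\Gamma(1-\gamma)}^{1/\gamma}$ given by Theorem~\ref{thm:comp-complexity-evt}, so the task reduces to a one-variable analysis of the function
\[
h(\gamma) = (1-\gamma)\,\Gamma(1-\gamma)^{1/\gamma} = \Gamma(2-\gamma)^{1/\gamma}\cdot \Gamma(1-\gamma)^{1/\gamma - 1}\,,
\]
or more cleanly, of $g(\gamma) = \log h(\gamma) = \log(1-\gamma) + \tfrac{1}{\gamma}\log\Gamma(1-\gamma)$, on the three relevant domains: $\gamma\in(0,1)$ for Fr\'echet ($\alpha = 1/\gamma > 1$; one should double-check whether the statement is meant for all $\alpha>0$ or only $\alpha>1$, since $\E(M_n)$ is finite only when $\alpha>1$), $\gamma = 0$ for Gumbel (taken as a limit), and $\gamma\in(-\infty,0)$ for reversed Weibull. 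First I would record the two boundary/limit values: as $\gamma\to 0$, a Taylor expansion $\log\Gamma(1-\gamma) = \gamma^\star\gamma + O(\gamma^2)$ (using $\Gamma'(1) = -\gamma^\star$) gives $g(\gamma)\to \gamma^\star$, hence $h(0) = e^{\gamma^\star}$, which is exactly part~(b) and also pins down the common endpoint appearing in (a) and (c). As $\gamma\to 1^-$ (i.e. $\alpha\to 1^+$), $\log(1-\gamma)\to-\infty$ while $\tfrac1\gamma\log\Gamma(1-\gamma)\to+\infty$; the product $(1-\gamma)\Gamma(1-\gamma) = \Gamma(2-\gamma)\to 1$, so $h(\gamma) = \Gamma(2-\gamma)\cdot\Gamma(1-\gamma)^{1/\gamma-1}\to 1$ since $\Gamma(1-\gamma)^{1/\gamma - 1}\to 1^{0}=1$; thus $h\to 1$ at the Fr\'echet endpoint. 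Similarly, as $\gamma\to-\infty$ ($\alpha\to 0^+$), I would show $h(\gamma)\to e$: writing $\alpha = -1/\gamma$, $h = (1+\tfrac1\alpha)\Gamma(1+\tfrac1\alpha)^{\alpha}$ and, as $\alpha\to 0$, $\Gamma(1+\tfrac1\alpha)^\alpha \to$ a limit one computes via Stirling ($\log\Gamma(1+x)\sim x\log x - x$ as $x\to\infty$ with $x = 1/\alpha$), giving $\alpha\log\Gamma(1+1/\alpha)\sim \alpha(\tfrac1\alpha\log\tfrac1\alpha - \tfrac1\alpha) = \log\tfrac1\alpha - 1 \to \infty$ — so I need to be careful here, and in fact the clean way is $h(\gamma) = \Gamma(2-\gamma)^{1/\gamma}\Gamma(1-\gamma)^{1/\gamma-1}$; with $\gamma\to-\infty$ both exponents $\to 0^-$, and $\Gamma(2-\gamma),\Gamma(1-\gamma)\to\infty$, so one needs the precise rate: $\tfrac1\gamma\log\Gamma(1-\gamma)\to ?$ By Stirling $\log\Gamma(1-\gamma)\sim (-\gamma)\log(-\gamma)$, so $\tfrac1\gamma\log\Gamma(1-\gamma)\sim -\log(-\gamma)\to-\infty$, which would send $h\to 0$, contradicting the claim — so the correct bookkeeping must keep the $\log(1-\gamma)$ term, and the honest computation is $g(\gamma) = \log(1-\gamma) + \tfrac1\gamma\log\Gamma(1-\gamma)$; using $\log\Gamma(1-\gamma) = \log\Gamma(1+(-\gamma)) $ and the expansion against $\Gamma(2-\gamma) = (1-\gamma)\Gamma(1-\gamma)$ one gets $g(\gamma) = \tfrac1\gamma\log\Gamma(2-\gamma) + (1-\tfrac1\gamma)\log(1-\gamma)$, and here both terms must be expanded jointly — I expect the limit $e$ to come out from a second-order Stirling term, and this asymptotic bookkeeping as $\gamma\to-\infty$ is the one genuinely delicate point.

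The remaining content is monotonicity: on $(0,1)$ I would show $h$ is decreasing, so its range is $(h(1^-), h(0^+)) = (1, e^{\gamma^\star})$, giving $1 \le \ACCMmax \le e^{\gamma^\star}$ for Fr\'echet; on $(-\infty,0)$ I would show $h$ is decreasing in $\gamma$ (equivalently increasing in $\alpha$), so the range is $(h(0^-), h(-\infty)) = (e^{\gamma^\star}, e)$, giving $e^{\gamma^\star}\le\ACCMmax\le e$ for reversed Weibull. For the monotonicity I would differentiate $g(\gamma) = \log(1-\gamma) + \tfrac1\gamma\log\Gamma(1-\gamma)$:
\[
g'(\gamma) = -\frac{1}{1-\gamma} - \frac{\log\Gamma(1-\gamma)}{\gamma^2} - \frac{\psi(1-\gamma)}{\gamma}\,,
\]
where $\psi = \Gamma'/\Gamma$ is the digamma function, and argue $g'(\gamma) < 0$ on each interval. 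The cleanest route is probably to use the integral/series representations $\log\Gamma(1-\gamma) = \gamma^\star\gamma + \sum_{j\ge 2}\frac{\zeta(j)}{j}\gamma^j$ (valid for $|\gamma|<1$, covering the Fr\'echet range and half the Weibull range) and $\psi(1-\gamma) = -\gamma^\star + \sum_{j\ge2}\zeta(j)\gamma^{j-1}$, substitute, and watch for a telescoping/sign-definite tail; alternatively invoke log-convexity of $\Gamma$ (Bohr–Mollerup) to control the sign of $\psi$ and of $\log\Gamma(1-\gamma)/\gamma^2 + \psi(1-\gamma)/\gamma$. For the reversed Weibull range $\gamma<-1$ the series for $\log\Gamma(1-\gamma)$ does not converge, so there I would instead work directly with $h(\gamma)=(1+\tfrac1\alpha)\Gamma(1+\tfrac1\alpha)^\alpha$ as a function of $\alpha\in(0,\infty)$ and show $\tfrac{d}{d\alpha}\log h > 0$ using the reflection-free identity $\tfrac{d}{d\alpha}[\alpha\log\Gamma(1+\tfrac1\alpha)] = \log\Gamma(1+\tfrac1\alpha) - \tfrac1\alpha\psi(1+\tfrac1\alpha)$ together with the known inequality $\log\Gamma(1+x) > x\psi(1+x) - x$ (or an equivalent convexity estimate) to dominate the $\tfrac{d}{d\alpha}\log(1+\tfrac1\alpha) = -\tfrac{1}{\alpha^2+\alpha}$ term.

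Finally, the Gumbel case~(b) is not a range statement but the single value $\ACCMmax = e^{\gamma^\star}$, which is immediate from Theorem~\ref{thm:comp-complexity-evt}\ref{thm:CC-gumbel} (or from the $\gamma\to 0$ limit of $h$ computed above, used for consistency); and the concluding ``asymptotically tight'' clause follows because in each family the claimed bounds are precisely the limiting values of $h$ at the endpoints of the $\gamma$-domain, which are approached (though not attained) as $\alpha\to 1^+$ and $\alpha\to\infty$ for Fr\'echet, as $\alpha\to\infty$ ($\gamma\to0^-$) and $\alpha\to0^+$ for reversed Weibull, so no single distribution in the open family achieves them but sequences do. I expect the main obstacle to be the global monotonicity of $h$ on the unbounded reversed Weibull range $\alpha\in(0,\infty)$ together with the $\alpha\to 0^+$ asymptotics pinning the upper bound at exactly $e$ — the near-cancellation between the two $\log$ terms there makes naive estimates lose the constant, so a careful Stirling expansion to the right order, or a slick substitution that makes the limit manifest, will be needed.
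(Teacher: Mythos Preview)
The paper does not give a proof of this corollary; it states it immediately after Theorem~\ref{thm:comp-complexity-evt} as a direct consequence of the closed form $\ACCMmax=(1-\gamma)\Gamma(1-\gamma)^{1/\gamma}$. Your plan---reduce everything to a one-variable analysis of $h(\gamma)=(1-\gamma)\Gamma(1-\gamma)^{1/\gamma}$, compute the endpoint limits, and prove monotonicity on $(-\infty,1)$---is exactly the right way to fill this in, and the endpoint values you compute at $\gamma\to 0$ and $\gamma\to 1^-$ are correct.

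Two remarks. First, the $\gamma\to-\infty$ limit you flagged as delicate is in fact clean once you switch variables: set $t=-\gamma\to\infty$, so $h(-t)=(1+t)\,\Gamma(1+t)^{-1/t}$; Stirling gives $\Gamma(1+t)^{1/t}\sim (2\pi t)^{1/(2t)}\cdot t/e$, hence $h(-t)\sim (1+t)\cdot \dfrac{e}{t}\cdot(2\pi t)^{-1/(2t)}\to e$. The near-cancellation you worried about is precisely the $t/e$ versus $1+t$ pairing, and no second-order Stirling term is needed.

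Second, your series approach to monotonicity works nicely on $(0,1)$: writing $g(\gamma)=\gamma^\star+\sum_{k\ge 1}\bigl(\tfrac{\zeta(k+1)}{k+1}-\tfrac{1}{k}\bigr)\gamma^k$ one checks every coefficient is negative (since $\zeta(k+1)<1+1/k$), so $g'<0$ there term-by-term. On $(-1,0)$ the same series has alternating signs and the argument is no longer immediate; you will need either a pairing/grouping argument on the coefficients or, as you suggest for $\gamma<-1$, to switch to the $\alpha$-parametrization and use convexity of $\log\Gamma$ directly. This is the only place where real work remains, but your outline is sound.
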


An interesting fact about Theorem~\ref{thm:comp-complexity-evt} is that the Fr\'{e}chet family has smaller competition complexity than the Gumbel family, which, in turn, has smaller competition complexity than the Weibull family. This is in stark contrast with the welfare (and revenue) approximation guarantees in Theorem \ref{thm:apx:multi}, where the ratio comparison is the opposite; it is worse for the Fr\'{e}chet family.

\section{Case Study Using eBay Auctions Data}\label{sec:numerical}

Although Theorem~\ref{thm:apx:multi} establishes a constant lower bound for the welfare guarantee under the assumption of a large market, when the distribution $F$ has an extreme type Fréchet, the theorem also enables us to compute a welfare guarantee that depends on the distribution's shape parameter $\alpha$. This dependency provides additional flexibility, making it feasible in practical applications to compute a welfare guarantee, provided that sufficient data on the buyers' valuation is available. In this section, we use real data from an eBay auction to calculate how much the seller can guarantee if they set a fixed price for a Cartier watch instead of using the optimal mechanism to sell it. In the following, we briefly introduce the main features of an eBay auction.

Auctions on the eBay platform use a fixed-time, semi-sealed-bid modified second-price format, very similar to an English auction format. In eBay auctions, the highest bidder at the end wins, but the final price is determined by the second-highest bid. Typically, this price is the second-highest bid plus one bid increment, though it can be less if the winner’s bid is not a full increment above the second-highest bid, or more if necessary to meet a reserve price. An important feature of this format is that, throughout the auction, eBay hides the full amount of the highest bid so far, displaying only a {\it current bid}, which represents the price if no further bids are placed before the auction ends. New bids must be at least one increment above that price.
If you bid less than or the same as the current highest bid, the current bid is updated to one increment above your bid. If you bid more, you assume the lead and the current price is recalculated at one increment above the previously hidden maximum bid.

Due to these specific characteristics of the auction format used by eBay, the data generated from these auctions provides insights into how buyers assess the value of items and their willingness to pay. Consequently, in this section, we will leverage this information to do an analysis of our main result in Section~\ref{sec:single-selection-proof}. We will break down the analysis into three parts. First, we will provide a detailed explanation of the data we use. This will involve cleaning and organizing the data obtained from the eBay auction to ensure its suitability for analysis. Next, we will focus on estimating the underlying distribution of the buyers' valuations and its extreme value type. Finally, we will apply the estimates derived in the previous step to determine the seller's welfare guarantee in this particular case. Specifically, we will compare the ratio between the expected welfare the seller can secure by setting a fixed price for the watch to the one obtained by using the optimal auction mechanism. 

\paragraph{Data Description.}
For our analysis, we utilized the eBay database accessible at this link: https://www.modelingonlineauctions.com/datasets. 
Specifically, we focused on data from 7-day auctions of Cartier watches. The dataset contains information on 97 auctions, comprising a total of 1348 recorded bids. 
To simplify our analysis, we will assume that all the auctions feature substitute watches, meaning that each item can be considered interchangeable with the others. As a result, the dataset can be treated as if it represents a single auction. In this hypothetical auction, all 1348 bids from the various auctions are aggregated, allowing us to analyze the bidding behavior as if it occurred in one event. This assumption helps streamline the analysis by eliminating potential differences between individual auctions and focusing on overall bidding patterns.

The primary aim of the use of this data is to derive insights into consumer valuations. To achieve this, we will examine two key variables for each bid: the bid amount and the bidder's ID.
Given the nature of the auction, where bidders are allowed to place multiple bids, it is crucial to address the issue of bid multiplicity in our analysis.
To address this, we have refined the dataset by selecting only the highest bid placed by each bidder ID. This approach assumes that the highest bid from a bidder represents their valuation or maximum willingness to pay for the item. This leaves us with 509 total bids, one per bidder. In Figure~\ref{fig:density_bids}, we illustrate the distribution of the bids using a histogram of relative frequencies. The bids are organized into bins, each with a width of 200, to provide a clearer understanding of their distribution.

Our plan is to fit an extreme value distribution to the data and then, given the estimated parameter $\hat{\alpha}$, compute the threshold $T_n$ promised by Theorem~\ref{thm:apx:multi}, for $n = 509$. Afterwards, we will compute the competitive ratio of this single threshold algorithm on the bidders' values from the data and compare it to the guarantee we provide in Theorem~\ref{thm:apx:multi}.

\begin{figure}[t]
    \centering
        \begin{subfigure}[t]{0.45\textwidth}
            \centering
            \includegraphics[width=\textwidth]{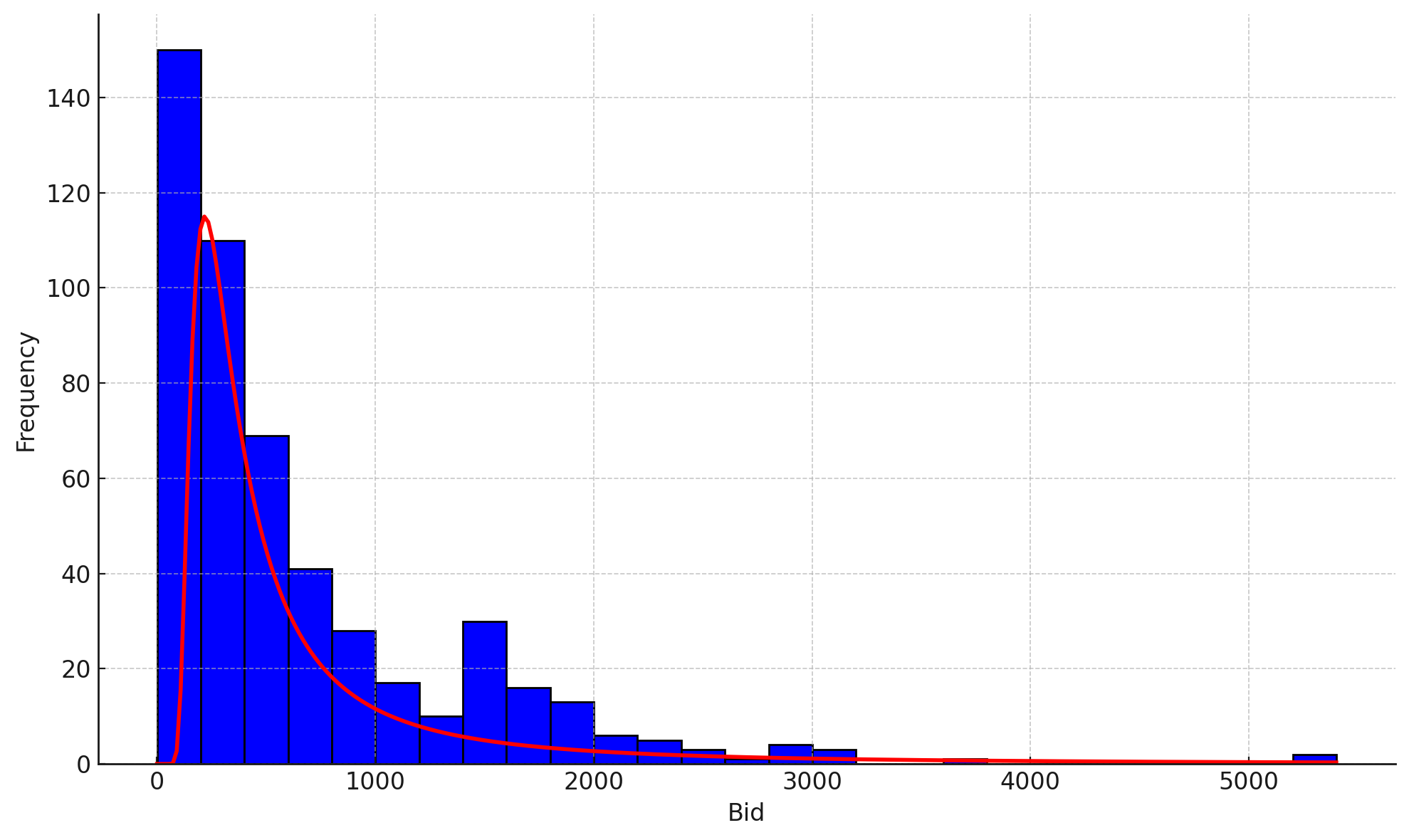}
            \caption{Frequency histogram of real bidder data.}
            \label{fig:density_bids}
        \end{subfigure}
     \hfill
     \begin{subfigure}[t]{0.45\textwidth}
         \centering
         \includegraphics[width=\textwidth]{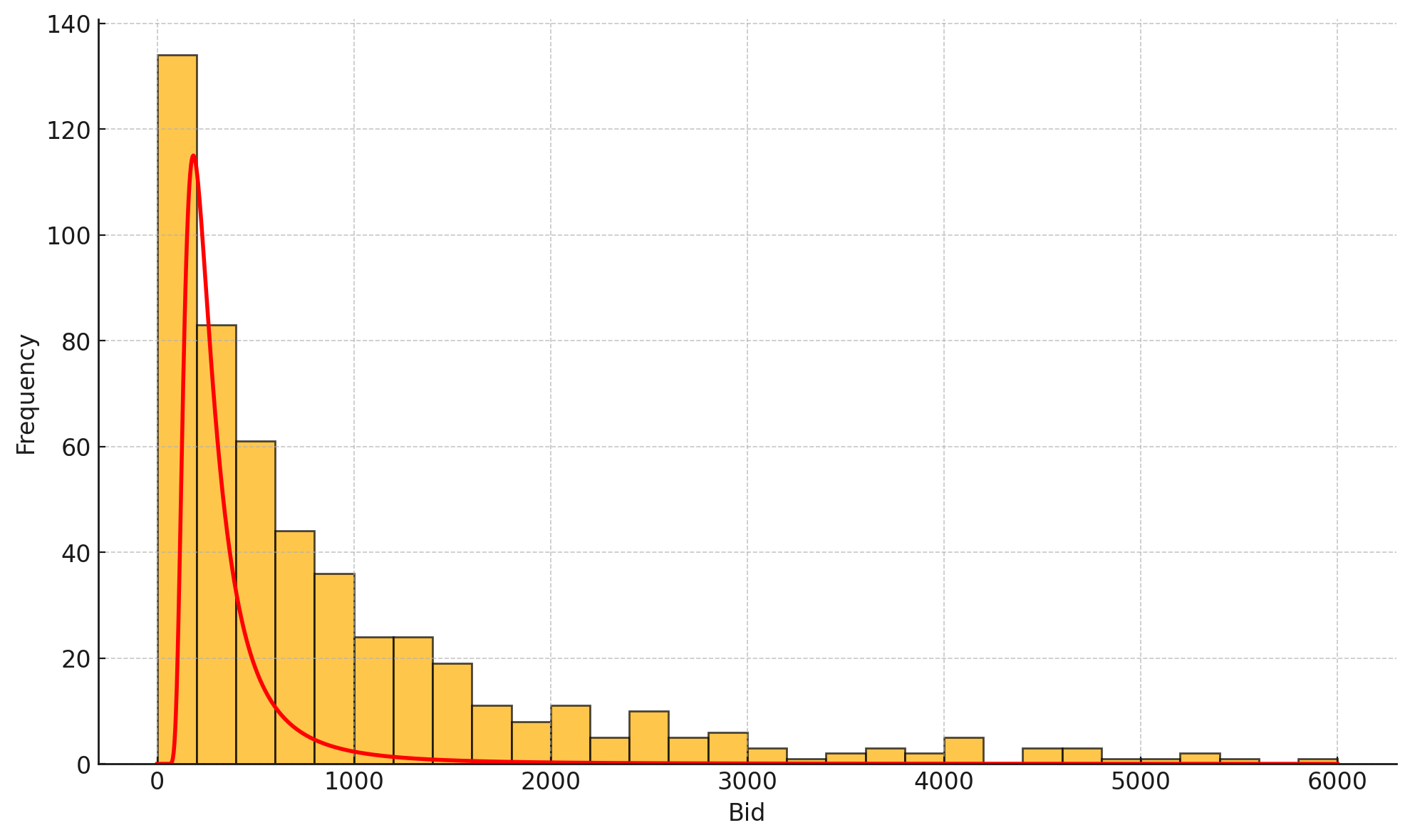}
            \caption{Frequency histogram of simulated bidder data.}
            \label{fig:density_samples}
     \end{subfigure}
     \caption{Frequency histograms of real (\ref{fig:density_bids}) and simulated data (\ref{fig:density_samples}) representing the valuations of $509$ bidders for 7-day auctions of Cartier watches on eBay, organized into bins of width 200. The density of the Frechet distribution $Fr(m = 0, s = 289, \alpha = 2.24)$, the best fit for the real data, is overlaid on top of both. The simulated data is drawn randomly from the same distribution.}
\end{figure}

\paragraph{Estimation and Extreme Value Analysis.} At this stage, we will use the data described above to identify an appropriate probability distribution that closely fits and accurately represents the probability density function (PDF) of consumer valuations.
We approximate the probability density function of consumer valuations by fitting a Fr\'{e}chet distribution to the data. The cumulative density function $F$ of $X \sim \normalfont{\text{Fr\'{e}chet}}(m, \sigma, \alpha)$ is given by
\[
F(x)= \exp\prn{-\prn{\frac{x-m}{s}}^{-\alpha}},
\] 
where $m$ is a location parameter representing the left-endpoint of the distribution,  $\sigma > 0 $ is a scale parameter, and $\alpha$ is a shape parameter. 

First, since our data is non-negative and includes bids near $0$, our estimate for $\hat{m}$ is $0$. Next, to estimate the shape parameter $\alpha$, we use Hill's estimator \cite{hill-estimator-frechet} which estimates $\alpha$ from the top $k$ order statistics for a given $k$. Our choice is $k = 97$, which is when Hill's estimator stabilizes to a value of $\hat{\alpha} = 2.24$. Finally, to estimate the scale $s$, we use the method of moments. Let $\hat{x}$ denote the sample mean and $\hat{\sigma}^2$ the sample variance. We can estimate $s$ from the sample mean and from the sample variance as
\[
\hat{s} = \frac{\hat{x}}{\Gamma\left(1 - \frac{1}{\hat{\alpha}}\right)} \qquad \text{ and } \qquad \hat{s} = \sqrt{\frac{\hat{\sigma}^2}{\Gamma\left(1 - \frac{2}{\hat{\alpha}}\right) - \prn{\Gamma\left(1 - \frac{1}{\hat{\alpha}}\right)}^2}},
\]
respectively. Since the two estimates vary wildly, to minimize the error, we estimate $s$ using both moments simultaneously. For this, we define the loss function
\[
L(\hat{s}) = \left(\hat{s} \cdot \Gamma\left(1 - \frac{1}{\hat{\alpha}}\right) - \hat{x} \right)^2 + \left( \hat{s}^2 \cdot \left(\Gamma\left(1 - \frac{2}{\hat{\alpha}}\right) - \prn{\Gamma\left(1 - \frac{1}{\hat{\alpha}}\right)}^2 \right) - \hat{\sigma}^2 \right)^2,
\]
which we minimize to obtain an estimate of $\hat{s} = 289$.

Next, we will draw simulated data from the empirical distribution to illustrate the quality of the fit. We use $\hat{F} = Fr(m = 0, s = 289, \alpha = 2.24)$ to denote the Fr\'{e}chet distribution with the corresponding parameters. We draw $n = 509$ samples from $\hat{F}$, representing a maximum bid for each simulated bidder, and present them in a frequency histogram in Figure~\ref{fig:density_samples}. The frequency histogram of the simulated bids is a close match to the one for the real bids, indicating that the fit of the distribution is good.

\paragraph{Welfare and Revenue Guarantees.}

Now that we have identified the distribution that most accurately approximates the consumer valuations when the market is large, we aim to use this information to determine what percentage of the optimal mechanism’s welfare can be guaranteed by employing a simple fixed-price policy, and illustrate that our single threshold algorithm achieves this percentage on the real data.

To do this, we first need to compute the optimal $x = x_{\alpha}$ for $\alpha = 2.24$ in \eqref{eq:opt-U-for-1}. We denote this optimal $x$ by $U$ and note that it corresponds to the value of $U$ from Lemma~\ref{lem:frechet-technical-2}. This optimal value is $U \approx 0.849$, which implies that the guarantee implied by Theorem~\ref{thm:apx:multi} is $\apx_1(\hat{F}) \approx 73\%$ of the expected maximum valuation.
Furthermore, by Lemma~\ref{lem:frechet-technical-2}, our choice of threshold $T_n$ is
\[
T_n = U \cdot a_n = U \hat{F}^{-1}(1 - 1/n) = U \cdot \hat{s} \cdot \prn{\log\prn{\frac{n}{n-1}}}^{-1/\hat{a}},
\]
and thus, for $n = 509$ we get
\[
T_{509} = 0.849 \cdot 289 \cdot \prn{\log\prn{509/508}}^{-1/2.24} \approx 3962.5.
\]

Next, notice that setting this threshold yields a competitive ratio of at least $3962.5/5400 \approx 73.3\%$, which is higher than $\apx_1(\hat{F})$ already for a relatively small value of $n$, and significantly higher than the worst-case bound of $0.712$ across all instances.

%Second, Theorem~\ref{thm:apx:multi} provides a simple expression to calculate a welfare guarantee for a given instance, assuming the market is large. However, if this assumption is not made, the calculation becomes more complex if we want an instance-dependent bound. For a fixed distribution and market size $n$, solving the optimization problem involved in \eqref{eq:ratio2} would be required, which is notably challenging.
\paragraph{Acknowledgements.} This work was partially funded by ANID (Chile) through grants Anillo ACT210005, Fondecyt 11240705, and Fondecyt 1241846, and by the Center for Mathematical Modeling (ANID FB210005). 
Some preliminary results of this article appeared in SAGT 2021 \cite{correa2021optimal}.

\bibliographystyle{acm}
{\small\bibliography{biblio_full}}

% Appendix

\newpage

\appendix

\section{Welfare Guarantees: Proof of Theorem \ref{thm:apx:multi}} \label{sec:proof_main_thm}

In this section, we prove Theorem \ref{thm:apx:multi}, and we introduce some necessary technical results.
The following proposition gives an equivalent expression for the value $\apx_k(F)$, useful in our analysis.

\begin{proposition}
	\label{prop:numerator-multi}
	Let $F$ be a distribution. 
	Then, for every positive integer $k$ we have  
	\begin{equation}
		\label{eqn:comp:ratio2:k}
		\apx_k(F)= \liminf_{n\to \infty}\sup_{T \in \RR_+} \EE \left(X \vert X > T\right)\frac{\sum_{j=1}^{k} \PP(M_n^{j}>T)}{\sum_{j=1}^k \EE(M_n^{j})}.
	\end{equation}
\end{proposition}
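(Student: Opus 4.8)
The plan is to establish, for every $n$ and every threshold $T$ with $1-F(T)>0$, the pointwise identity
\[
\EE(\calR^n_{k,T}) = \EE(X \mid X > T)\cdot \sum_{j=1}^{k} \PP(M_n^{j} > T),
\]
from which the claimed equality \eqref{eqn:comp:ratio2:k} follows by dividing through by $\sum_{j=1}^{k}\EE(M_n^j)$, comparing with the definition \eqref{eqn:comp:ratio:k:def}, and then taking the supremum over $T$ and the $\liminf$ over $n$. Thresholds $T$ with $1-F(T)=0$ can be discarded on both sides: there $\calR^n_{k,T}=0$ almost surely and $\sum_{j}\PP(M_n^j>T)=0$, so such $T$ never attain the supremum (any sufficiently small $T$ already gives a strictly positive value, assuming $F$ is non-degenerate).

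First, fix $n$ and $T$, and let $N=|Q|$ be the number of indices $i\in\{1,\dots,n\}$ with $X_i>T$; thus $N$ is binomial with parameters $n$ and $1-F(T)$. Since the $j$-th largest sample exceeds $T$ exactly when at least $j$ of the samples do, we have $\{M_n^j>T\}=\{N\ge j\}$. Using the elementary identity $\sum_{j=1}^{k}\mathbf{1}\{N\ge j\}=\min\{k,N\}$ and taking expectations,
\[
\sum_{j=1}^{k}\PP(M_n^{j}>T)=\sum_{j=1}^{k}\PP(N\ge j)=\EE\big(\min\{k,N\}\big).
\]

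Next I would condition on the random set $S\subseteq\{1,\dots,n\}$ of indices $i$ with $X_i>T$. By independence of $X_1,\dots,X_n$, conditionally on $\{S=s\}$ the variables $\{X_i:i\in s\}$ are i.i.d.\ with the law of $X$ conditioned on $X>T$. The random variable $\calR^n_{k,T}$ is a sum of $\min\{k,|S|\}$ of these coordinates — the first $\min\{k,|S|\}$ in index order, or a uniformly random size-$k$ subset when $|S|>k$; by exchangeability both prescriptions have the same conditional expectation. Hence $\EE(\calR^n_{k,T}\mid S)=\min\{k,|S|\}\cdot\EE(X\mid X>T)$, and taking expectations over $S$ (with $|S|=N$) together with the previous display gives
\[
\EE(\calR^n_{k,T})=\EE(X\mid X>T)\cdot\EE\big(\min\{k,N\}\big)=\EE(X\mid X>T)\cdot\sum_{j=1}^{k}\PP(M_n^{j}>T),
\]
which is the desired identity.

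The argument is essentially bookkeeping, so there is no real obstacle. The only point needing mild care is the conditional-independence step — that conditioning on which coordinates exceed the threshold leaves those coordinates i.i.d.\ from the truncated law $F(\cdot\mid X>T)$ — together with the accompanying observation that the two tie-breaking conventions for the case $|Q|>k$ produce identical expected revenue.
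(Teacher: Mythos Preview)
Your proof is correct and follows essentially the same approach as the paper: both arguments establish the pointwise identity $\EE(\calR^n_{k,T})=\EE(X\mid X>T)\cdot\EE(\min\{k,N\})$ by conditioning on which (or how many) coordinates exceed $T$, and then identify $\EE(\min\{k,N\})=\sum_{j=1}^{k}\PP(M_n^{j}>T)$ via the equivalence $\{M_n^{j}>T\}=\{N\ge j\}$. Your treatment is slightly more streamlined (using the indicator identity $\sum_{j=1}^{k}\mathbf{1}\{N\ge j\}=\min\{k,N\}$ directly and explicitly addressing the degenerate case $1-F(T)=0$), but the substance is the same.
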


\begin{proof}[Proof of Proposition \ref{prop:numerator-multi}]

Given $k$ and a value $T$, consider the random variable defined by $\calB^n_T\sim \text{Bin}(n, 1-F(T))$ and let $W=\min\{ k, \calB^n_{T}\}$.
Consider an i.i.d. sample $X_1,\ldots,X_n$ distributed according to $F$.
By conditioning we get that for each $i\in \{1,2,\ldots,n\}$,
{\small \begin{align*}
	\EE(X_i)&=\sum_{j=0}^{k-1} \EE ( X_i  |  W=j ) {{n}\choose{j}}(1-F(T))^j F(T)^{n-j} + \EE\left(X_i | W=k \right) \PP(\calB^n_T\ge k).
\end{align*}}
Consider the random variable $\calR^n_{k,T}$ equal to the summation of the first $W=\min\{ k, \calB^n_{T}\}$ values in the sample attaining a value larger than $T$.
Note that since the sample is i.i.d. we get that 
$\EE (\calR^n_{k,T}|  W=j ) =  j \cdot \EE \left(X \vert X > T \right)$
and since $\calB^n_T\sim \text{Bin}(n, 1-F(T_n))$ we have that 
\[\sum_{j=0}^{k-1} j{{n}\choose{j}}(1-F(T))^j F(T)^{n-j} + k \cdot \PP(\calB^n_T\geq k)= \EE(\min\{k, \calB^n_{T} \}).\] 
Putting all together we obtain $\textstyle \EE(\calR^n_{k,T})= \EE \left(X_1 \vert X_1 > T \right)\EE(\min\{k, \calB^n_{T})).$
Observe that since $\calB^n_{T}$ is a positive random variable for every positive integer $n$ and every $T$, we have that 
\[\EE(\min\{k,\calB^n_{T}\})=\sum_{j=0}^{k-1} \PP(\min\{k,\calB^n_{T}\}>j)=\sum_{j=0}^{k-1} \PP(M_n^{j+1}>T)=\sum_{j=1}^{k} \PP(M_n^{j}>T),\]
where the second equality holds since for $j\le k-1$ we have $W>j$ if and only if $M_n^{j+1}>T$.
We conclude that $\EE(\calR^n_{k,T})=\EE \left(X_1 \vert X_1 > T \right)\sum_{j=1}^{k} \PP(M_n^{j}>T).$
Finally, the proposition holds by using the later equality in \eqref{eqn:comp:ratio:k:def}.
\end{proof}

\subsection{Extreme Value Theory: Technical Preliminaries}\label{app:EVT-prelim}

In this subsection, we briefly summarize the main tools we use from extreme value theory in our analysis.
We start by formally stating the extreme value theorem.
\begin{theorem}[see, e.g.,~\cite{libroextremo}]
	\label{thm:fisher}
	Let $F$ be a distribution for which there exists a positive real sequence $\{a_n\}_{n\in \NN}$ and other sequence $\{b_n\}_{n\in \NN}$ such that $(M_n-b_n)/a_n$ converges in distribution to a random variable with distribution $H$, namely, $\PP\left(M_n-b_n\le a_n t\right)=F^n(a_nt+b_n)\to H(t)$
	for every $t\in \RR$ when $n\to \infty$.
	Then we have that one of the following possibilities holds:  $H$ is the Gumbel, $H$ is in the Fr\'echet family, or $H$ is in the Reverse Weibull family.
\end{theorem}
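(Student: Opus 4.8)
The plan is to prove the Fisher--Tippett--Gnedenko trichotomy by the classical two-step route: first show that any non-degenerate limit law $H$ in the statement must be \emph{max-stable}, and then classify all max-stable laws by solving a functional equation. Say two distributions are of the same \emph{type} if they differ by an affine change of variables $t\mapsto at+b$ with $a>0$; the three families in the statement are exactly the types of $\Lambda(t)=\exp(-e^{-t})$, of $\Phi_\alpha(t)=\exp(-t^{-\alpha})$ on $(0,\infty)$, and of $\Psi_\alpha(t)=\exp(-(-t)^{\alpha})$ on $(-\infty,0)$. The one external tool is Khinchin's \emph{convergence to types} lemma, which I would establish first: if $F_n\Rightarrow G$ and $F_n(\alpha_n\cdot+\beta_n)\Rightarrow\widetilde G$ with $G,\widetilde G$ non-degenerate, then $\alpha_n\to\alpha\in(0,\infty)$, $\beta_n\to\beta\in\RR$, and $\widetilde G(t)=G(\alpha t+\beta)$. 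Its proof fixes two continuity points $y_1<y_2$ of $G$ at which $G$ takes values strictly inside $(0,1)$, inverts the two weak-convergence relations, uses monotonicity to trap $(\alpha_n,\beta_n)$ in a bounded region, and observes that a non-degenerate $G$ cannot satisfy $G=G(\alpha\cdot+\beta)$ unless $(\alpha,\beta)=(1,0)$, which forces a unique limit.

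\textbf{Step 1: $H$ is max-stable.} Fix $k\in\NN$. Applying the hypothesis along the subsequence $nk$ gives $F^{nk}(a_{nk}t+b_{nk})\to H(t)$, while $F^{nk}(a_nt+b_n)=\bigl(F^n(a_nt+b_n)\bigr)^k\to H(t)^k$, and $H^k$ is again a non-degenerate distribution function. Khinchin's lemma then produces constants $\alpha_k>0$, $\beta_k\in\RR$ with
\[
H(t)^k=H(\alpha_k t+\beta_k)\qquad\text{for all }t\in\RR.
\]
Iterating this identity and comparing the two ways of writing $H^{k\ell}$ yields the cocycle relations $\alpha_{k\ell}=\alpha_k\alpha_\ell$ and $\beta_{k\ell}=\alpha_\ell\beta_k+\beta_\ell$. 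A routine monotonicity-and-density argument extends $k\mapsto(\alpha_k,\beta_k)$ to a one-parameter family $(\alpha(s),\beta(s))$ for $s\in(0,\infty)$, still satisfying $H^s=H(\alpha(s)\cdot+\beta(s))$ with $\alpha$ multiplicative; hence $\alpha(s)=s^{-\theta}$ for a unique $\theta\in\RR$, and $\beta$ obeys $\beta(st)=\alpha(t)\beta(s)+\beta(t)$.

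\textbf{Step 2: classification.} Write $\psi=-\log H$, a non-increasing $[0,\infty]$-valued function, so that the functional equation becomes $s\,\psi(t)=\psi(\alpha(s)t+\beta(s))$. If $\theta=0$ then $\alpha\equiv1$ and $\beta$ is additive in $\log s$, so $\beta(s)=-c\log s$; properness of $H$ as $s\to\infty$ forces $c>0$, and then $s\,\psi(t)=\psi(t-c\log s)$ gives $\psi(t)=\psi(0)e^{-t/c}$, i.e.\ $H$ is of Gumbel type. If $\theta\neq0$, the affine map $t\mapsto\alpha(s)t+\beta(s)$ has a fixed point $t_0=\beta(s)/(1-\alpha(s))$, which the cocycle relations show is independent of $s$; translating so that $t_0=0$, the equation reduces to $s\,\psi(t)=\psi(s^{-\theta}t)$, whence $\psi$ is a power on the relevant half-line. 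For $\theta>0$ this yields $\psi(t)=\psi(1)\,t^{-1/\theta}$ on $(0,\infty)$ and $\psi\equiv\infty$ on $(-\infty,0]$, i.e.\ Fréchet type with parameter $\alpha=1/\theta$; for $\theta<0$ it yields $\psi(t)=\psi(-1)(-t)^{-1/\theta}$ on $(-\infty,0)$ and $\psi\equiv0$ on $[0,\infty)$, i.e.\ reversed Weibull type with parameter $\alpha=-1/\theta$. Finally, the residual scale factor $\psi(\pm1)$ and the translation $t_0$ (resp.\ the scale $c$) are absorbed into new normalizing sequences $a_n,b_n$, identifying $H$ with exactly one of $\Lambda$, $\Phi_\alpha$, $\Psi_\alpha$.

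\textbf{Main obstacle.} The delicate points are analytic rather than probabilistic: (i) proving Khinchin's lemma cleanly, in particular the tightness of $(\alpha_n,\beta_n)$ and the exclusion of degenerate subsequential limits; and (ii) the passage from the integer identities to the continuous family $(\alpha(s),\beta(s))$, together with the rigorous solution of the resulting Cauchy-type functional equations for $\psi$ --- justifying that the available monotonicity already forces the only multiplicative solution to be a power and the only additive one a logarithm, and correctly ruling out the degenerate sub-cases (such as $c\le0$, or power laws with the wrong sign) by properness of $H$. Everything else is bookkeeping with affine changes of variable.
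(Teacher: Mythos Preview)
The paper does not supply its own proof of this theorem: it is stated as a known result with the citation ``see, e.g.,~\cite{libroextremo}'' and is used as a black box throughout. So there is no proof in the paper to compare against.

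Your sketch is the classical route to the Fisher--Tippett--Gnedenko theorem and is exactly what one finds in the standard references the paper cites (Resnick, de Haan--Ferreira, etc.): reduce to max-stability via Khinchin's convergence-to-types lemma, then solve the functional equation $H^s=H(\alpha(s)\cdot+\beta(s))$ by passing to $\psi=-\log H$ and splitting on whether the multiplicative part $\alpha$ is trivial. The case analysis you give is correct, and you have correctly flagged the two places where a full write-up requires care --- the tightness and uniqueness pieces of Khinchin's lemma, and the extension of $(\alpha_k,\beta_k)$ from integers to a continuous semigroup together with the monotonicity-forced solutions of the Cauchy equations. One small remark: many textbook proofs avoid the continuous extension entirely by working with the integer relation $H^k=H(\alpha_k\cdot+\beta_k)$ directly and using that $\alpha_k$ is monotone in $k$ (from the behavior of $H$ at its endpoints) to force $\alpha_k=k^{-\theta}$; this sidesteps the density argument you mention. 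Either way, your outline is sound and matches the standard literature the paper is invoking.
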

In our analysis, we also need a tool from extreme value theory related to the order statistics of a sample according to $F$.
We denote the order statistics of a sample of size $n$ by $M_n=M_n^1\ge M_n^2\ge \cdots\ge M_n^n$. 
The following result can be seen as a generalization of the previous fundamental theorem.
\begin{theorem}[see, e.g.,~\cite{LLR12}]
	\label{thm:leadbetter}
	Let $F$ be a distribution satisfying the extreme value condition with the scaling and shifting sequences $\{a_n\}_{n\in \NN}$ and $\{b_n\}_{n\in \NN}$ such that $\PP\left(M_n-b_n\le a_n t\right)\to H(t)$
	for every $t\in \RR$ when $n\to \infty$.
	Then, for each $j\in \{1,2,\ldots,n\}$ and every $t\in \RR$ we have 
	\[\lim_{n\to \infty}\PP\left(M_n^j-b_n\le a_n t\right)= H(t) \sum_{s=0}^{j-1} \frac{(-\log H(t))^s}{s!}.\]	
\end{theorem}

%\subsection{Von Mises Representations for the Gumbel Family}
Recall that a distribution $V$ is a {\it Von Mises function} if there exist $z_0\in \RR$, a constant $\theta>0$ and a function $\aux:(\omega_0(V),\infty)\to \RR_+$ absolutely continuous with $\lim_{u \to \infty} \aux'(u)=0$, such that for every $t\in (z_0,\infty)$ we have
\begin{equation}\label{V-M:repr}
	1-V(t)=\theta \exp \Big( - \int\limits_{z_0}^t \frac{1}{\aux(s)} \mathrm ds  \Big).
\end{equation}
We call such $\aux$ an {\it auxiliary function} of $V$.
Next, we summarize some technical results related to Von Mises functions that we use in our analysis.
\begin{lemma}[see, e.g.,~\cite{libroextremo}]
	\label{lem:von-mises-prop}
	Let $V$ be a Von Mises function with auxiliary function $\aux$ and such that $\omega_1(V)=\infty$.
	Then, $V$ has extreme type Gumbel, and the shifting and scaling sequences may be chosen respectively as
	$b_n = V^{-1}(1-1/n)$ and $ a_n = \aux(b_n)$ for every $n$.
	Furthermore, we have $\lim_{t \to \infty}  \aux(t)/t = 0$ and $\lim_{t\to \infty}(t+x\mu(t))=\infty$ for every $x\in \RR$.
\end{lemma}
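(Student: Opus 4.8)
The plan is to establish the three assertions in the order (b), (c), (a), since the two asymptotic facts about $\aux$ feed directly into the distributional limit. First I would prove $\aux(t)/t\to 0$ by integrating the derivative: given $\varepsilon>0$, absolute continuity of $\aux$ together with $\aux'(u)\to 0$ lets me pick $t_\varepsilon$ with $|\aux'(u)|<\varepsilon$ for $u>t_\varepsilon$; then $\aux(t)=\aux(t_\varepsilon)+\int_{t_\varepsilon}^{t}\aux'(u)\,\mathrm du$ gives $0<\aux(t)\le\aux(t_\varepsilon)+\varepsilon(t-t_\varepsilon)$, so $\limsup_{t\to\infty}\aux(t)/t\le\varepsilon$, and letting $\varepsilon\downarrow 0$ proves part (b). Part (c) is then immediate: for $x\ge 0$ we have $t+x\aux(t)\ge t\to\infty$ since $\aux\ge 0$, while for $x<0$ we write $t+x\aux(t)=t\prn{1+x\,\aux(t)/t}$, the bracket tends to $1$ by part (b), and hence $t+x\aux(t)\ge t/2$ eventually.

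The core of the lemma is part (a). Set $b_n=V^{-1}(1-1/n)$ and $a_n=\aux(b_n)$; the goal is to show $V^n(a_nt+b_n)\to\exp(-e^{-t})$ for every $t\in\RR$. Since $\omega_1(V)=\infty$ and $V$ is non-decreasing with $V(s)<1$ for all $s$ and $V(s)\to 1$, we get $b_n\to\infty$; moreover \eqref{V-M:repr} shows $V$ is continuous on $(z_0,\infty)$ (the map $s\mapsto\int_{z_0}^{s}\mathrm du/\aux(u)$ is continuous), so $1-V(b_n)=1/n$ for all large $n$. By part (c) applied with $x=t$, also $a_nt+b_n=b_n+t\aux(b_n)\to\infty$, so for $n$ large both $b_n$ and $b_n+t\aux(b_n)$ lie in $(z_0,\infty)$; dividing the two instances of \eqref{V-M:repr} (the constant $\theta$ cancels) gives
\[
n\bigl(1-V(a_nt+b_n)\bigr)=\frac{1-V(b_n+t\aux(b_n))}{1-V(b_n)}=\exp\!\left(-\int_{b_n}^{b_n+t\aux(b_n)}\frac{\mathrm ds}{\aux(s)}\right).
\]
The substitution $s=b_n+u\,\aux(b_n)$ turns the exponent into $-\int_{0}^{t}\frac{\aux(b_n)}{\aux(b_n+u\aux(b_n))}\,\mathrm du$, so it remains to show this integral tends to $t$. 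For fixed $u$, $\aux(b_n+u\aux(b_n))-\aux(b_n)=\int_{b_n}^{b_n+u\aux(b_n)}\aux'(s)\,\mathrm ds$, and since by part (c) the endpoints of this interval both tend to $+\infty$, for $n$ large $|\aux'|<\varepsilon$ holds on it, whence $\bigl|\aux(b_n+u\aux(b_n))/\aux(b_n)-1\bigr|\le\varepsilon|u|\le\varepsilon|t|$. Consequently $\aux(b_n)/\aux(b_n+u\aux(b_n))\to 1$ pointwise in $u$, and choosing $\varepsilon<1/(2|t|)$ it stays below $2$ uniformly in $u$ between $0$ and $t$ for $n$ large, so dominated convergence yields $\int_{0}^{t}\frac{\aux(b_n)}{\aux(b_n+u\aux(b_n))}\,\mathrm du\to t$. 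Therefore $n(1-V(a_nt+b_n))\to e^{-t}$, and hence $V^n(a_nt+b_n)=\bigl(1-\tfrac{1}{n}\,n(1-V(a_nt+b_n))\bigr)^n\to\exp(-e^{-t})$, i.e.\ $V$ has extreme type Gumbel with the stated scaling and shifting sequences.

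The hard part is the interchange of limit and integral after the change of variables, namely controlling $\aux(b_n)/\aux(b_n+u\aux(b_n))$ uniformly over $u$ in the (possibly reversed) interval between $0$ and $t$. This is precisely where parts (b)--(c) are used: they force the whole interval of integration to escape to $+\infty$, so that the smallness of $\aux'$ there keeps the ratio near $1$ and supplies the dominating bound needed for the limit. Everything else reduces to routine manipulation of \eqref{V-M:repr} and the elementary fact that $(1-c_n/n)^n\to e^{-c}$ whenever $c_n\to c$.
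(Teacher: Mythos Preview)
The paper does not give its own proof of this lemma; it is quoted from the extreme value literature (Resnick, \emph{Extreme Values, Regular Variation, and Point Processes}). Your argument is correct and is essentially the standard one found there: the growth facts $\aux(t)/t\to 0$ and $t+x\aux(t)\to\infty$ follow from integrating $\aux'$, and the Gumbel limit is obtained by dividing two instances of the Von Mises representation to get $n(1-V(a_nt+b_n))=\exp\bigl(-\int_{b_n}^{b_n+t\aux(b_n)}\mathrm ds/\aux(s)\bigr)$, then changing variables and using the smallness of $\aux'$ at infinity to pass to the limit under the integral.
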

For example, the exponential distribution of parameter $\lambda$ is a Von Mises function, with auxiliary constant function $1/\lambda$, $\theta=1$ and $z_0=0$.
Furthermore, for every positive integer $n$ we have $b_n=F^{-1}(1-1/n)=(\log n)/\lambda$ and $a_n=\aux(b_n)=1/\lambda$.
A relevant property states that every distribution with extreme type Gumbel can be represented by a distribution that is a Von Mises function in the following precise sense.
\begin{lemma}[see, e.g.,~\cite{libroextremo}]
	\label{prop:Gumbel-prop}
	Let $F$ be a distribution satisfying the extreme value condition with $\omega_1(F)=\infty$.
	Then, $F$ has extreme type Gumbel if and only if there exists a Von Mises function $V$ and a positive function $\eta:(\omega_0(F),\infty)\to \RR_+$ with $\lim_{t \to \infty} \eta(t)= \eta^{\star}>0$ such that $1-F(t)=\eta(t)(1-V(t))$ for every $t\in (\omega_0(F),\infty)$. 
\end{lemma}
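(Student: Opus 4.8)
\emph{Proof plan.} The statement is an equivalence, and I would prove the two implications separately; the ``if'' direction is a short computation, while the ``only if'' direction is the substantive one, amounting to the classical von Mises / de Haan representation of the Gumbel domain of attraction.

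For the ``if'' direction, suppose $1-F(t)=\eta(t)(1-V(t))$ with $V$ a Von Mises function with auxiliary function $\mu$ and $\eta(t)\to\eta^\star\in(0,\infty)$. First note $\omega_1(V)=\infty$, since otherwise $V(t_0)=1$ at a finite $t_0$ would force $1-F(t_0)=0$, contradicting $\omega_1(F)=\infty$. By Lemma~\ref{lem:von-mises-prop}, $V$ has extreme type Gumbel with $b_n=V^{-1}(1-1/n)$ and $a_n=\mu(b_n)$, so $n\bigl(1-V(a_nt+b_n)\bigr)\to e^{-t}$ for every $t\in\RR$; moreover $b_n\to\infty$ and, by the last assertion of Lemma~\ref{lem:von-mises-prop}, $a_nt+b_n=b_n+t\mu(b_n)\to\infty$, so $\eta(a_nt+b_n)\to\eta^\star$. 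Multiplying these limits gives $n\bigl(1-F(a_nt+b_n)\bigr)\to\eta^\star e^{-t}$, hence $F^n(a_nt+b_n)\to\exp(-\eta^\star e^{-t})=\Lambda(t-\log\eta^\star)$; replacing $b_n$ by $b_n+a_n\log\eta^\star$ yields $(M_n-b_n)/a_n\to\Lambda$ in distribution, so $F$ has extreme type Gumbel.

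For the ``only if'' direction, assume $F$ has extreme type Gumbel and $\omega_1(F)=\infty$. The plan is to exhibit $F$ as a tail-equivalent perturbation of a Von Mises function built from the mean-excess function. Being in the Gumbel domain with infinite right endpoint forces $1-F(t)$ to decay faster than any negative power of $t$ (so in particular $\int^\infty(1-F)<\infty$), hence $a(t)=\frac{1}{1-F(t)}\int_t^\infty(1-F(s))\,\mathrm ds$ is well defined and positive for large $t$, and the standard von Mises--type characterization of the Gumbel domain gives $\frac{1-F(t+xa(t))}{1-F(t)}\to e^{-x}$ locally uniformly in $x$. I would then: (i) replace $a$ by an absolutely continuous $\mu$ with $\mu\sim a$ and $\mu'(t)\to0$ as $t\to\infty$ (a local-averaging construction, needed because the hazard rate of $F$ may oscillate so that $a'$ need not vanish); (ii) set $1-V(t)=\exp\bigl(-\int_{z_0}^t \mathrm ds/\mu(s)\bigr)$, which is a Von Mises function with auxiliary function $\mu$ by construction; and (iii) verify that $\eta(t):=(1-F(t))/(1-V(t))$ tends to a finite strictly positive limit. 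For (iii), writing $1-F(t)=(1-F(z_0))\exp\bigl(-\int_{z_0}^t h_F\bigr)$ with $h_F=F'/(1-F)$, one gets $\log\eta(t)=\mathrm{const}-\int_{z_0}^t\bigl(h_F(s)-1/\mu(s)\bigr)\,\mathrm ds$, so it remains to show this integral converges, using the von Mises limit relation together with $\mu'\to0$.

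The main obstacle is step (iii): proving that $(1-F)/(1-V)$ stabilises to a strictly positive constant rather than drifting to $0$ or $\infty$ --- this is the heart of the de Haan representation theorem for the Gumbel domain, whereas steps (i)--(ii) are bookkeeping and the ``if'' direction is immediate. Since this lemma is classical, in the paper I would cite \cite{libroextremo} for the ``only if'' part and optionally include the short ``if'' computation above for self-containedness.
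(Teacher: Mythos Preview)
The paper does not prove this lemma at all: it is stated as a classical fact with the citation ``see, e.g.,~\cite{libroextremo}'' and no argument is given. Your proposal is therefore consistent with the paper's treatment, and indeed your final sentence --- cite \cite{libroextremo} for the ``only if'' direction --- is exactly what the paper does. Your ``if'' computation is correct and essentially coincides with the argument the paper carries out later in the proof of Lemma~\ref{lem:mises-to-gumbel}\ref{sequences-gumbel}, where it verifies $F^n(a_nt+b_n)\to\exp(-\eta^\star e^{-t})$ by writing $n(1-F(a_nt+b_n))=\eta(a_nt+b_n)\cdot n(1-V(a_nt+b_n))\to\eta^\star e^{-t}$ and then shifting $b_n$ by $a_n\log\eta^\star$.
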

Thus, when $F$ has extreme type Gumbel, there exists a pair $(V,\eta)$ satisfying the condition guaranteed in Lemma \ref{prop:Gumbel-prop}, and we say that $(V,\eta)$ is a {\it Von Mises representation} of $F$.
Furthermore, when $\lim_{t \to \infty} \eta'(t)=0$, we say that $(V,\eta)$ {\it smoothly represents} $F$.

\subsection{Proof of Theorem \ref{thm:apx:multi}\ref{multiple-frechet}: The Fr\'echet Family}\label{sec:proof:thm}
%\subsection{The Fr\'echet Family} 

In what follows we restrict to the case in which the distribution $F$ has extreme type in the Fr\'echet family.
We remark that if $\alpha\in (0,1]$ the expected value of a random variable with distribution Fr\'echet $\Phi_{\alpha}$ is not finite.
Therefore, we further restrict to the Fr\'echet family where $\alpha\in (1,\infty)$. 
To prove Theorem \ref{thm:apx:multi}\ref{multiple-frechet} we require a technical lemma, where we exploit the structure given by the existence of an extreme value and we show how to characterize the approximation factor of a distribution in the Fr\'echet family for large values of $n$.
Before stating this lemma, let us introduce a few results about the Fr\'echet family that will be required.

We say that a positive measurable function $\ell:(0,\infty)\to \mathbb{R}$ is {\it slowly varying} if for every $u>0$ we have 
$\lim_{t\to \infty}\ell(ut)/\ell(t)=1.$
For example, the function $\ell(t)=\log(t)$ is slowly varying, since $\ell(ut)/\ell(t)=\log(u)/\log(t)+1\to 1$ when $t\to \infty$.  
On the other hand, the function $\ell(t)=t^{\gamma}$ is not slowly varying, since for every $u>0$ we have $\ell(ut)/\ell(t)=u^{\gamma}$.
%In general, the function $\ell(t)=\log^{\gamma}(t)$ is slowly varying for every $\gamma \in \RR$.
The following lemma shows the existence of a strong connection between the distributions with extreme type in Fr\'echet family and slowly varying functions.
Recall that for $\alpha>0$, the Pareto distribution of parameter $\alpha$ is given by $P_{\alpha}(t)=1-t^{-\alpha}$ for $t\ge 1$ and zero otherwise.
\begin{lemma}[see, e.g., \cite{libroextremo}]
\label{lem:fre-key}
Let $F$ be a distribution with extreme type in the Fr\'echet family. For every positive integer $n$, the sequences $a_n=F^{-1}(1-1/n)$ and $b_n=0$ are valid scaling and shifting sequences for $F$. \label{frechet-tech-1}
Furthermore, there exists a slowly varying function $\ell_F$ such that $1-F(t) = (1-P_{\alpha}(t))\cdot \ell_F(t) = t^{-\alpha}\ell_F(t)$, for every $t\in \RR_+$.
\end{lemma}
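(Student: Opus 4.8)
The plan is to deduce Lemma~\ref{lem:fre-key} from the classical Gnedenko characterization of the Fréchet max-domain of attraction, which I will invoke from \cite{libroextremo} (one of the ``classic results'' collected in \Cref{app:EVT-prelim}): a distribution $F$ belongs to the Fréchet family with parameter $\alpha$ precisely when $\omega_1(F)=\infty$ and its tail $1-F$ is regularly varying at infinity with index $-\alpha$, equivalently $\lim_{t\to\infty}(1-F(tx))/(1-F(t))=x^{-\alpha}$ for every $x>0$. Granting this, the slowly varying factor comes essentially for free: I will set $\ell_F(t):=t^{\alpha}(1-F(t))$ for $t>0$, so that $1-F(t)=t^{-\alpha}\ell_F(t)$ holds by construction, while $1-F(t)=(1-P_{\alpha}(t))\ell_F(t)$ holds for $t\ge 1$ since $1-P_{\alpha}(t)=t^{-\alpha}$ there (the values of $\ell_F$ on $(0,1)$ are immaterial for slow variation). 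A one-line computation then gives $\ell_F(ut)/\ell_F(t)=u^{\alpha}\cdot(1-F(ut))/(1-F(t))\to u^{\alpha}u^{-\alpha}=1$ as $t\to\infty$, so $\ell_F$ is slowly varying.

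For the normalizing sequences I would give a direct verification rather than appeal to the convergence-to-types theorem. Since $F$ is absolutely continuous and $\omega_1(F)=\infty$ (already recalled in \Cref{sec:prelim} from \cite[Proposition 1.11]{libroextremo}), one has $F(a_n)=1-1/n$ exactly and $a_n\uparrow\infty$, hence $n\,(1-F(a_n))=1$. Fixing $t>0$, regular variation applied along $a_n\to\infty$ yields $(1-F(a_nt))/(1-F(a_n))\to t^{-\alpha}$; multiplying by $n\,(1-F(a_n))=1$ gives $n\,(1-F(a_nt))\to t^{-\alpha}$, and then $F^n(a_nt)=(1-(1-F(a_nt)))^n\to e^{-t^{-\alpha}}=\Phi_{\alpha}(t)$. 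For $t\le 0$ we have $a_nt\le 0$, so $F(a_nt)\le F(0)<1$ and $F^n(a_nt)\to 0=\Phi_{\alpha}(t)$; hence $(M_n-b_n)/a_n$ with $b_n=0$ converges in distribution to $\Phi_{\alpha}$, which is exactly the asserted validity of the two sequences.

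The only genuine content is the Gnedenko characterization, and that is where I expect the main obstacle to lie if one were not permitted to cite it. A self-contained argument would proceed by: (i) upgrading the distributional convergence $F^n(a_nt+b_n)\to\Phi_{\alpha}(t)$ to the asymptotics $n\,(1-F(a_nt+b_n))\to t^{-\alpha}$ via $n\log F\to\log\Phi_{\alpha}$ together with $F\to 1$; (ii) dividing this relation at two arguments to obtain $(1-F(a_ns))/(1-F(a_nt))\to(s/t)^{-\alpha}$ and then lifting the limit along the discrete sequence $\{a_n\}$ to a true limit as the continuous variable tends to infinity, using monotonicity of $1-F$ and $a_{n+1}/a_n\to 1$ (a standard interpolation argument); and (iii) checking that the shift $b_n$ is negligible compared with $a_n$, which in the Fréchet case follows from the limit law having left endpoint $0$. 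Since the paper deliberately offloads such extreme-value facts to \cite{libroextremo}, I would simply cite the characterization and reduce the proof to the two elementary verifications above.
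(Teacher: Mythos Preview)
Your proposal is correct. The paper does not give its own proof of this lemma; it is stated with a citation to \cite{libroextremo} and used as a black box, so there is nothing to compare against beyond the textbook treatment. Your argument---invoking the Gnedenko tail-regular-variation characterization from the same reference, defining $\ell_F(t)=t^{\alpha}(1-F(t))$, and then verifying slow variation and the convergence $F^n(a_nt)\to\Phi_{\alpha}(t)$ directly from $n(1-F(a_n))=1$---is exactly the standard derivation one finds in \cite{libroextremo}. Your remark that the identity $1-F(t)=(1-P_{\alpha}(t))\ell_F(t)$ holds literally only for $t\ge 1$ (while slow variation concerns only $t\to\infty$) is a fair observation about a slight imprecision in the lemma's statement and does not affect any downstream use in the paper.
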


Observe that this lemma says that if $F$ has extreme type Fr\'echet of parameter $\alpha$, then it essentially corresponds to a perturbation of a Pareto distribution with parameter $\alpha$ by some slowly varying function.
We are now ready to state the main technical lemma.

% \begin{lemma}  \label{lem:frechet-technical-2}
% Let $F$ be a distribution with extreme type Fr\'echet of parameter $\alpha$ and let $\{a_n\}_{n\in \NN}$ be its scaling sequence. 
% Consider a positive sequence $\{T_n\}_{n\in \NN}$ such that  $T_n \to \infty$. 
% Then, the following statements hold:
% \begin{enumerate}[label=\normalfont(\alph*)]
%     \item If there exists $U\in \RR_+$ such that $T_n/a_n\to U$, we have that 
% \[\lim_{n\to \infty}\EE \left(X \vert X > T_n\right)\frac{\sum_{j=1}^{k} \PP(M_n^{j}>T_n)}{\sum_{j=1}^k \EE(M_n^{j})}= \frac{\Gamma(k)}{\Gamma(k+1-1/\alpha)} U \exp(-U^{-\alpha}) \sum_{j=1}^k \sum_{s=j}^{\infty} \frac{U^{-s \alpha}}{s!}.\]
%     \label{lem:frechet-limit}
%     \item For each positive integer $n$ let $T_n$ be a solution of the optimization problem in \eqref{eqn:comp:ratio2:k}. 
%     Then the sequence of positive real values $\{T_n/a_n\}_{n \in \mathbb{N}}$ is bounded.
% \label{lem:frechet-bound}    
% \end{enumerate}

% \end{lemma}

 \begin{lemma}
 \label{lem:frechet-technical-2}
 Let $F$ be a distribution with extreme type Fr\'echet of parameter $\alpha$ and let $\{a_n\}_{n\in \NN}$ be an appropriate scaling sequence. 
 Consider a positive sequence $\{T_n\}_{n\in \NN}$ with $T_n\to \infty$ and for which there exists $U\in \RR_+$ such that $T_n/a_n\to U$, as $n$ goes to infinity.
 Then, we have 
{ \small \[\lim_{n\to \infty}\EE \left(X \vert X > T_n\right)\frac{\sum_{j=1}^{k} \PP(M_n^{j}>T_n)}{\sum_{j=1}^k \EE(M_n^{j})}= \frac{\Gamma(k)}{\Gamma(k+1-1/\alpha)} U \exp(-U^{-\alpha}) \sum_{j=1}^k \sum_{s=j}^{\infty} \frac{U^{-s \alpha}}{s!}.\]}
 \end{lemma}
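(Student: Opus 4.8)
The plan is to evaluate the three asymptotic ingredients separately and then multiply them. First I would handle the conditional expectation $\EE(X \mid X > T_n)$. Writing $\EE(X \mid X > T_n) = T_n + \frac{1}{1-F(T_n)}\int_{T_n}^\infty (1-F(s))\,\dif s$, I would use the Fréchet representation $1-F(s) = s^{-\alpha}\ell_F(s)$ from Lemma~\ref{lem:fre-key} together with Karamata's theorem for regularly varying functions to obtain $\int_{T_n}^\infty (1-F(s))\,\dif s \sim \frac{T_n(1-F(T_n))}{\alpha-1}$ as $n\to\infty$ (this needs $\alpha > 1$, which is assumed). Hence $\EE(X \mid X > T_n) \sim T_n\cdot\frac{\alpha}{\alpha-1}$. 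Since $T_n/a_n \to U$, this gives $\EE(X \mid X > T_n)/a_n \to \frac{\alpha U}{\alpha-1}$.

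Next I would compute the limit of $\sum_{j=1}^k \PP(M_n^j > T_n)$. By Theorem~\ref{thm:leadbetter} with $H(t) = \Phi_\alpha(t) = \exp(-t^{-\alpha})$ and $b_n = 0$, and using $T_n/a_n \to U$ together with the continuity of the relevant limit functions in $t$, one gets $\PP(M_n^j > T_n) \to 1 - \exp(-U^{-\alpha})\sum_{s=0}^{j-1}\frac{U^{-s\alpha}}{s!} = \exp(-U^{-\alpha})\sum_{s=j}^\infty \frac{U^{-s\alpha}}{s!}$, using $\sum_{s=0}^\infty \frac{U^{-s\alpha}}{s!} = \exp(U^{-\alpha})$. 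Summing over $j$ from $1$ to $k$ yields $\sum_{j=1}^k \PP(M_n^j > T_n) \to \exp(-U^{-\alpha})\sum_{j=1}^k\sum_{s=j}^\infty \frac{U^{-s\alpha}}{s!}$.

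Third, I would compute the limit of $\frac{1}{a_n}\sum_{j=1}^k \EE(M_n^j)$. For each fixed $j$, the normalized order statistic $M_n^j/a_n$ converges in distribution to the law with CDF $\Phi_\alpha(t)\sum_{s=0}^{j-1}\frac{(-\log\Phi_\alpha(t))^s}{s!}$ by Theorem~\ref{thm:leadbetter}; to upgrade this to convergence of expectations I would establish uniform integrability of $\{M_n^j/a_n\}_n$ (for $\alpha > 1$, dominating by $M_n^1/a_n = M_n/a_n$, whose convergence in expectation to $\Gamma(1-1/\alpha) = \EE(Z)$ for $Z\sim\Phi_\alpha$ follows as in the commented-out Lemma~\ref{lem:fre-lh} argument). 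A direct computation of $\EE$ of the $j$-th Fréchet order statistic limit, or the known identity $\sum_{j=1}^k \EE(Z^{(j)}) = k\,\EE(Z) \cdot \frac{\Gamma(k+1-1/\alpha)}{\Gamma(k+1)\Gamma(1-1/\alpha)}\cdot\frac{\Gamma(k+1)}{\Gamma(k)}$—more cleanly, $\sum_{j=1}^k \EE(Z^{(j)}) = \frac{\Gamma(k+1-1/\alpha)}{\Gamma(k)}$ after simplification—gives $\frac{1}{a_n}\sum_{j=1}^k \EE(M_n^j) \to \frac{\Gamma(k+1-1/\alpha)}{\Gamma(k)}$. I would verify this closed form by a Beta-integral computation: the $j$-th largest of $n$ i.i.d. $\Phi_\alpha$ variables, suitably scaled, has a limiting density whose first moment is $\frac{1}{\Gamma(j)}\int_0^\infty t^{-1/\alpha} e^{-t} \cdot t^{j-1}\,\dif t / \ldots$; assembling the telescoping sum over $j$ collapses to the stated gamma ratio.

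Finally, combining the three limits: the $a_n$ factors cancel ($a_n$ from the numerator's conditional expectation against $a_n$ from the denominator's order-statistic sum), leaving $\frac{\alpha U/(\alpha-1)}{\Gamma(k+1-1/\alpha)/\Gamma(k)}\cdot \exp(-U^{-\alpha})\sum_{j=1}^k\sum_{s=j}^\infty \frac{U^{-s\alpha}}{s!}$; and I would check that $\frac{\alpha}{\alpha-1}$ is absorbed correctly—in fact the cleanest route is to not split off $\frac{\alpha}{\alpha-1}$ but to keep $\EE(X\mid X>T_n) = T_n(1+o(1))\cdot\frac{\alpha}{\alpha-1}$ and observe that the $\frac{\alpha}{\alpha-1}$ and one power in $\Gamma(k+1-1/\alpha) = (k-1/\alpha)\Gamma(k-1/\alpha)$ interplay to reproduce exactly $\frac{\Gamma(k)}{\Gamma(k+1-1/\alpha)}$ after the substitution; I will re-derive the constant carefully at the end to match \eqref{eq:phi-function}. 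The main obstacle I anticipate is the justification of interchanging limit and expectation for the order statistics (uniform integrability when $\alpha$ is close to $1$) and getting the gamma-function bookkeeping exactly right so that the Karamata constant $\frac{\alpha}{\alpha-1}$ disappears into the ratio $\Gamma(k)/\Gamma(k+1-1/\alpha)$; everything else is a routine application of Theorem~\ref{thm:leadbetter} and Lemma~\ref{lem:fre-key}.
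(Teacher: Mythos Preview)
Your overall strategy is essentially the paper's: decompose into (i) $\EE(X\mid X>T_n)/T_n$, (ii) $\sum_{j=1}^k \PP(M_n^j>T_n)$, and (iii) $\sum_{j=1}^k \EE(M_n^j)/a_n$, compute each limit via Theorem~\ref{thm:leadbetter} and the regular-variation representation from Lemma~\ref{lem:fre-key}, and multiply. Your treatment of (i) via Karamata is a slight variant of the paper's Lemma~\ref{lem:fre-integral} (which does the same computation by hand with dominated convergence), and (ii) is identical to the paper. For the moment convergence in (iii), the paper simply invokes a standard result (\cite[Proposition~2.1]{libroextremo}) rather than arguing uniform integrability directly.

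The one concrete error is in your claimed value for (iii). You assert $\sum_{j=1}^k \EE(Z^{(j)}) = \frac{\Gamma(k+1-1/\alpha)}{\Gamma(k)}$, but the correct value is
\[
\sum_{j=1}^k \EE(Z_j) \;=\; \sum_{j=1}^k \frac{\Gamma(j-1/\alpha)}{\Gamma(j)} \;=\; \frac{\alpha}{\alpha-1}\cdot\frac{\Gamma(k+1-1/\alpha)}{\Gamma(k)},
\]
which the paper proves by a short induction (Claim~\ref{claim:sum:gamma}), after first computing $\EE(Z_j)=\Gamma(j-1/\alpha)/\Gamma(j)$ by a direct integral. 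The extra factor $\frac{\alpha}{\alpha-1}$ in the denominator is precisely what cancels the $\frac{\alpha}{\alpha-1}$ from the conditional expectation; there is no ``interplay'' with the recurrence $\Gamma(k+1-1/\alpha)=(k-1/\alpha)\Gamma(k-1/\alpha)$ as you speculate. Once you correct this one line, the constants match \eqref{eq:phi-function} immediately and no further bookkeeping is needed.
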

%The proof of the lemma is in Appendix \ref{app:missing-frechet}. 
We use this lemma to prove Theorem~\ref{thm:apx:multi}\ref{multiple-frechet}, and then we prove the lemma. 
%and then in the rest of the section we show how to prove Lemmas \ref{lem:fechet-technical-1} and \ref{lem:frechet-technical-2}.

\begin{proof}[Proof of Theorem \ref{thm:apx:multi}\ref{multiple-frechet}]
%Given a distribution $F$ with extreme type Frechet of parameter $\alpha$ and a positive integer $k$, we aim to study the lower bound on $\apx_k(F)$ given by (\ref{eqn:LB-k}).
%We split the proof into three steps.
%First, we apply Lemma~\ref{lem:cociente}. 
%In the remaining two steps we use the well known Stirling's approximation inequality and the Gautschi's inequality to obtain the desire bound.  	
%By \eqref{eqn:LB-k} we have that $\apx_k(F)$ is lower bounded by
%\[
%\sup_{U\in (0,\infty)}\liminf_{n\to \infty} \frac{\EE \left(X \vert X > Ua_n \right)}{Ua_n} \cdot \frac{Ua_n\cdot \sum_{j=1}^{k} \PP(M_n^{j}>Ua_n)}{\sum_{j=1}^k \EE(M_n^{j})},
%\]
%where $\{a_n\}_{n\in \NN}$ is the scaling sequence for $F$.
 Let $F$ be a distribution with extreme type Fr\'echet of parameter $\alpha$. 
 We first prove that for each positive integer $k$ it holds that $\apx_k(F) \ge \varphi_k(\alpha)$. 
%Recall that 
% \begin{equation*}
% 	\apx_k(F)= \liminf_{n\to \infty}\sup_{T \in \RR_+} \EE \left(X \vert X > T\right)\frac{\sum_{j=1}^{k} \PP(M_n^{j}>T)}{\sum_{j=1}^k \EE(M_n^{j})},
% \end{equation*} and 
% \begin{equation*}
% 	\varphi_k(\alpha)=\frac{\Gamma(k)}{\Gamma(k+1-1/\alpha)} \max_{x\in (0, \infty)} x \exp(-x^{-\alpha}) \sum_{j=1}^k \sum_{s=j}^{\infty} \frac{x^{-s \alpha}}{s!}.
% \end{equation*}	
To this end, for each positive integer $n$ and positive real number $U$, let $T_n$ be the threshold given by $T_n=a_n \cdot U$, where $\{ a_n\}_{n \in \mathbb{N}}$ is the scaling sequence for the distribution $F$ given by Lemma~\ref{lem:fre-key}.
Then, 
\begin{eqnarray}\label{ineq:apx}
\apx_k(F) \ge \liminf_{n\to \infty} \EE \left(X \vert X > T_n \right)\frac{\sum_{j=1}^{k} \PP(M_n^{j}>T_n)}{\sum_{j=1}^k \EE(M_n^{j})}. 
\end{eqnarray}

Note that $\lim \inf_{n \to \infty} T_n= \infty$ (and thus $T_n \to \infty$), since $U \in \RR_+$ and $a_n \to \infty$. Furthermore, $\lim_{n \to \infty} T_n/a_n= U$ and then 
applying Lemma~\ref{lem:frechet-technical-2} together with inequality  \eqref{ineq:apx} we obtain that
\[\apx_k(F) \ge  \frac{\Gamma(k)}{\Gamma(k+1-1/\alpha)} U \exp(- U^{-\alpha}) \sum_{j=1}^k \sum_{s=j}^{\infty} \frac{U^{-s \alpha}}{s!}.\]

Given that the inequality above holds for every positive real number $U$, we have 
\[\apx_k(F) \ge  \frac{\Gamma(k)}{\Gamma(k+1-1/\alpha)} \max_{U \in \RR_+} U \exp(- U^{-\alpha}) \sum_{j=1}^k \sum_{s=j}^{\infty} \frac{U^{-s \alpha}}{s!}= \varphi_k (\alpha).\] 

In the rest of the proof we show that, for each positive real number $k$ and each $\alpha \in (1, \infty)$, $\varphi_k(\alpha)$ is lower bounded by $ 1-1/\sqrt{2k \pi}.$ 
To this end, we just need to evaluate the objective function of our optimization problem in a well chosen value.
An inequality for the Gamma function due to Gautschi~\cite{gautschi} states that for every $s \in (0,1)$ and every $x \ge 1$ we have $\Gamma(x+1)>x^{1-s}\cdot \Gamma(x+s)$.
Then, setting $x=k$ and $s=1-1/\alpha$ yields $\Gamma(k+1)>k^{1/\alpha} \Gamma(k+1-1/\alpha).$
Since $\Gamma(k)=\Gamma(k+1)/k$, we therefore obtain
$ k^{1-1/\alpha}>\Gamma(k+1-1/\alpha)/\Gamma(k).$
On the other hand, note that for each $U \in (0, \infty)$ we have 
\begin{eqnarray*}
	U \exp(-U^{-\alpha})\sum_{j=1}^k\sum_{s=j}^\infty \frac{U^{-s \alpha}}{s!} 
	&=& U \exp(-U^{-\alpha})\left( \sum_{s=1}^k \frac{s\cdot  U^{-s \alpha}}{s!}+ k \sum_{s=k+1}^\infty \frac{U^{-s\alpha}}{s!}\right) \\
	&=&U \exp(-U^{-\alpha})\left( U^{-\alpha} \sum_{s=0}^{k-1} \frac{U^{-s \alpha}}{s!}+ k \sum_{s=k+1}^\infty \frac{U^{-s\alpha}}{s!}\right).
\end{eqnarray*}
In particular, by taking $U_{k,\alpha}=k^{-1/\alpha}$ we get that
{\small \begin{eqnarray*}
	\varphi_k(\alpha)\cdot \frac{\Gamma(k+1-1/\alpha)}{\Gamma(k)} &\ge& U_{k,\alpha}\cdot  k \exp(-U_{k,\alpha}^{-\alpha}) \left(\sum_{s=0}^{k-1} \frac{U_{k,\alpha}^{-s \alpha}}{s!}+ \sum_{s=k+1}^\infty \frac{U_{k,\alpha}^{-s\alpha}}{s!}\right) \\
	&=&  U_{k,\alpha}\cdot  k \exp(-U_{k,\alpha}^{-\alpha}) \left(\exp(U_{k,\alpha}^{-\alpha})- \frac{U_{k,\alpha}^{-\alpha k}}{k!}\right) \\
	&= & k^{1-1/\alpha} \left( 1- \frac{e^{-k}k^k}{k!}\right)\ge \frac{\Gamma(k+1-1/\alpha)}{\Gamma(k)}\left(1- \frac{1}{\sqrt{2 \pi k}} \right),
\end{eqnarray*}}
where the first inequality follows since the value of $\varphi_k(\alpha)$ involves the maximum over $(0,\infty)$, the first equality from the Taylor series for the exponential function and the last inequality is obtained by applying Stirling's approximation inequality.
This concludes the proof of the theorem.
\end{proof}

Below, we show a result that we need to prove Lemma~\ref{lem:frechet-technical-2}.
\begin{lemma}
	\label{lem:fre-integral}
	Let $F$ be a distribution with extreme type distribution Fr\'echet of parameter $\alpha\in (1,\infty)$ and let $\{T_n\}_{n\in \NN}$ be a sequence of real values such that $T_n\to \infty$.
	Then, we have that
	\begin{equation*}
		\lim_{n\to \infty}\frac{\EE(X|X> T_n)}{T_n}= \frac{\alpha}{\alpha-1}.
	\end{equation*}
\end{lemma}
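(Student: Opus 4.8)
The strategy is to reduce the statement to the classical asymptotics of integrals of regularly varying functions. Since $F$ has extreme type Fr\'echet of parameter $\alpha\in(1,\infty)$, Lemma~\ref{lem:fre-key} supplies a slowly varying function $\ell_F$ with $1-F(t)=t^{-\alpha}\ell_F(t)$ for every $t\in\RR_+$. A slowly varying function grows slower than any positive power, so the tail is integrable at infinity and $\EE(X)=\int_0^\infty(1-F(t))\,\mathrm dt<\infty$; this finiteness is what legitimizes everything below.

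First I would record the standard tail identity for a non-negative random variable $X$ with finite mean: for every $T$ with $F(T)<1$,
\[
\EE(X\mid X>T)=T+\frac{1}{1-F(T)}\int_T^\infty (1-F(s))\,\mathrm ds,
\]
which follows by writing $x=T+\int_T^x\mathrm ds$ inside $\int_T^\infty x\,\mathrm dF(x)$ and applying Fubini's theorem. Dividing by $T$ reduces the claim to
\[
\lim_{T\to\infty}\frac{\int_T^\infty (1-F(s))\,\mathrm ds}{T\,(1-F(T))}=\frac{1}{\alpha-1},
\]
since $1+\tfrac{1}{\alpha-1}=\tfrac{\alpha}{\alpha-1}$; the lemma then follows by specializing to $T=T_n\to\infty$.

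The heart of the argument is the last display, which is Karamata's theorem applied to the regularly varying tail $1-F(s)=s^{-\alpha}\ell_F(s)$. Concretely, the substitution $s=Tv$ gives
\[
\frac{\int_T^\infty (1-F(s))\,\mathrm ds}{T\,(1-F(T))}=\int_1^\infty v^{-\alpha}\,\frac{\ell_F(Tv)}{\ell_F(T)}\,\mathrm dv,
\]
and by slow variation the integrand converges pointwise to $v^{-\alpha}$ as $T\to\infty$. Potter's bounds yield, for $T$ large and any fixed $\delta\in(0,\alpha-1)$, the uniform domination $v^{-\alpha}\ell_F(Tv)/\ell_F(T)\le 2\,v^{-\alpha+\delta}$ on $[1,\infty)$, which is integrable, so dominated convergence produces the limit $\int_1^\infty v^{-\alpha}\,\mathrm dv=1/(\alpha-1)$. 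Alternatively one may simply invoke Karamata's theorem as a black box (see, e.g., \cite{libroextremo}).

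I expect the only genuinely delicate point to be the interchange of limit and integral in the last display: pointwise convergence $\ell_F(Tv)/\ell_F(T)\to 1$ is immediate from slow variation, but integrating up to $\infty$ requires the uniform control over $v\in[1,\infty)$ furnished by Potter's bounds (equivalently, the full strength of Karamata's theorem). Everything else — the tail identity, the change of variables, finiteness of $\EE(X)$, and the final arithmetic — is routine.
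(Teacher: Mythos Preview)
Your proposal is correct and follows essentially the same route as the paper: the tail identity, the substitution $s=T_n v$, and dominated convergence to reach $\int_1^\infty v^{-\alpha}\,\mathrm dv=1/(\alpha-1)$. If anything, your version is slightly more careful, since you invoke Potter's bounds to furnish an integrable majorant, whereas the paper simply notes that the pointwise limit $v^{-\alpha}$ is integrable and appeals to dominated convergence without exhibiting a dominating function.
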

\begin{proof}[Proof of Lemma \ref{lem:fre-integral}
]
	To prove the lemma, recall the classic dominated convergence theorem: Suppose that $\{ f_n\}_{n\in \NN}$ is a sequence of measurable functions such that $f_n \to f$ pointwise almost everywhere as $n \to \infty$. 
	If there exists an integrable function $g$ satisfying $|f_n|\leq g$ for every positive integer $n$, then f is also integrable and $\int f \mathrm du= \lim_{n \to \infty} \int f_n \mathrm du.$
	We first observe that by expanding the conditional expectation we get that 
	\[\frac{\EE(X|X> T_n)}{T_n}=1+\frac{1}{T_n(1-F(T_n))}\int_{T_n}^{\infty}(1-F(s))\mathrm ds\]
	for every positive integer $n$.
	By Lemma~\ref{lem:fre-key}, there exists a slowly varying function $\ell_F$ such that $1-F(t)=t^{-\alpha}\ell_F(t)$ for every $t\in \RR_+$.
	Therefore, 
	\begin{align*}
		\frac{1}{T_n(1-F(T_n))}\int_{T_n}^{\infty}(1-F(s))\mathrm ds&=\frac{1}{T_n\cdot T_n^{-\alpha}\cdot \ell_F(T_n)}\int_{T_n}^{\infty}s^{-\alpha}\ell_F(s)\mathrm ds \\
		&=\frac{1}{T_n^{1-\alpha}}\int_{T_n}^{\infty}s^{-\alpha}\frac{\ell_F(s)}{\ell_F(T_n)}\mathrm ds.
	\end{align*}
	We perform a change of variables in the integral by setting $s=u\cdot T_n$.
	Then, we obtain that the integral above is equal to
{\small	\begin{align*}
		\frac{1}{T_n^{1-\alpha}}\int_{T_n}^{\infty}s^{-\alpha}\frac{\ell_F(s)}{\ell_F(T_n)}\mathrm ds&=\frac{1}{T_n^{1-\alpha}}\int_{1}^{\infty}u^{-\alpha}T_n^{-\alpha}\frac{\ell_F(uT_n)}{\ell_F(T_n)} \cdot T_n\mathrm du=\int_{1}^{\infty}u^{-\alpha}\frac{\ell_F(uT_n)}{\ell_F(T_n)}\mathrm du.
	\end{align*}}
	Since the function $\ell_F$ is slowly varying and $T_n\to \infty$ we have that for every $u\in [1,\infty)$ it holds  $u^{-\alpha}\ell_F(uT_n)/\ell_F(T_n)\to u^{-\alpha},$
	when $n\to \infty$.
	Furthermore, for every $\alpha\in (1,\infty)$ the function $u^{-\alpha}$ is integrable in $(1,\infty)$ and thus by the dominated convergence theorem we have that 
	\begin{equation*}
		\lim_{n\to \infty}\int_{1}^{\infty}u^{-\alpha}\frac{\ell_F(uT_n)}{\ell_F(T_n)}\mathrm du = \int_{1}^{\infty}u^{-\alpha}\mathrm du=\frac{1}{\alpha-1},
	\end{equation*}
	which finishes the proof of the lemma.	
\end{proof}

\begin{proof}[Proof of Lemma \ref{lem:frechet-technical-2}]
%We start by proving \ref{lem:frechet-limit}.
	First, recall the following fact: If we have a sequence of probability distributions $\{G_n\}_{n\in \NN}$ such that $G_n$ converges point-wise to a probability distribution $G$, then $G_n(z_n)\to G(z)$ when $z_n\to z$.
	Consider a positive sequence $\{T_n\}_{n\in \NN}$ such that the following holds: $T_n\to \infty$ and there exists $U\in \RR_+$ such that $T_n/a_n\to U$.
	Since $F$ has extreme type distribution Fr\'echet of parameter $\alpha$, by Theorem~\ref{thm:leadbetter} we have that  
	 \begin{eqnarray*}
		\lim_{n \to \infty} \PP(M_n^{j}/a_n \le T_n/a_n) 
		&=&  \exp(-U^{-\alpha})\sum_{s=0}^{j-1} \frac{(-\log(\exp(-U^{-\alpha})))^s}{s!} \\
		&=& \exp(-U^{-\alpha})\sum_{s=0}^{j-1} \frac{U^{-\alpha s}}{s!},
	\end{eqnarray*}
	and therefore
	\[
	\lim_{n \to \infty} \sum_{j=1}^{k} \PP(M_n^{j}>T_n)=\exp(-U^{-\alpha})\sum_{j=1}^k\sum_{s=j}^{\infty} \frac{U^{-\alpha s}}{s!},
	\]
	where the equality follows since $\sum_{s=0}^{\infty} U^{-\alpha s}/s!=\exp(U^{-\alpha})$. 
	In the remainder of the proof, we analyze the limit of $T_n/\sum_{j=1}^k\EE(M_n^j)$ when $n\to \infty$, which is equivalent to analyzing the sequence $(T_n/a_n)/\sum_{j=1}^k \EE(M_n^j/a_n)$.
	To this end, we first note that $\{M_n^j/a_n \}_{n\in \NN}$ converges not only in distribution but also in expectation to a random variable distributed according to 
	\[G_j(t)=H(t) \sum_{s=0}^{j-1} \frac{(-\log H(t))^s}{s!},\] 
	with $H(t)=\exp(-t^{-\alpha})$. 
	More specifically, $\lim_{n \to \infty} \EE(M_n^j/a_n)= \EE(Z_j)$
	where $Z_j$ is distributed according to  $G_j.$ 
	This follows from a result on the convergence of moments in extreme value theory~\cite[p. 77, Proposition 2.1]{libroextremo}.
	Furthermore, 
	\begin{align*}
		\EE(Z_j)&=\int_{0}^\infty t \;\mathrm{d}G_j(t) \\
		&= \int_{0}^\infty \exp(-t^{-\alpha})\cdot \alpha \sum_{s=0}^{j-1} \frac{t^{-\alpha(s+1)}}{s!} \mathrm{d}t - \int_{0}^\infty \exp(-t^{-\alpha}) \cdot \alpha \sum_{s=1}^{j-1} \frac{t^{-\alpha s}}{(s-1)!} \mathrm{d}t \\
		&= \int_{0}^\infty \alpha \exp(-t^{-\alpha}) \frac{t^{-\alpha j}}{(j-1)!} \mathrm{d}t\\ 
		&= \frac{1}{\Gamma(j)}  \int_{0}^\infty \alpha \exp(-t^{-\alpha}) \cdot t^{-\alpha j} \mathrm{d}t \\
%		&= \frac{1}{\Gamma(j)}\int_{\infty}^0 \exp(-x) x^j (-x^{-(1+1/\alpha)}) \mathrm{d}x\\
&= \frac{1}{\Gamma(j)}\int_{0}^\infty e^{-x}  x^{j-1-1/\alpha} \mathrm{d}x = \frac{\Gamma(j-1/\alpha)}{\Gamma(j)},
	\end{align*}
	where the fourth equality holds because for every natural number $\Gamma(x)=(x-1)!,$ the fifth by performing the change of variables $x=t^{-\alpha}$ and the last one follows from the definition of the Gamma function.
	\begin{claim} \label{claim:sum:gamma}
		For every positive integer $k$ and real value $\alpha>1$, it holds that
		\begin{equation}\label{eq:sum:gammas}
			\sum_{j=1}^k \frac{\Gamma(j-1/\alpha)}{\Gamma(j)}=\frac{\alpha}{\alpha-1} \cdot \frac{\Gamma(k+1-1/\alpha)}{\Gamma(k)}. 
		\end{equation}
	\end{claim}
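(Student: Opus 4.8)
The claim to prove is the identity
\[
\sum_{j=1}^k \frac{\Gamma(j-1/\alpha)}{\Gamma(j)}=\frac{\alpha}{\alpha-1} \cdot \frac{\Gamma(k+1-1/\alpha)}{\Gamma(k)}.
\]

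The plan is to prove this by induction on $k$, treating $\alpha>1$ as fixed. For the base case $k=1$, the left-hand side is $\Gamma(1-1/\alpha)/\Gamma(1)=\Gamma(1-1/\alpha)$, while the right-hand side is $\frac{\alpha}{\alpha-1}\cdot\frac{\Gamma(2-1/\alpha)}{\Gamma(1)}$. Using the functional equation $\Gamma(z+1)=z\,\Gamma(z)$ with $z=1-1/\alpha$, we get $\Gamma(2-1/\alpha)=(1-1/\alpha)\,\Gamma(1-1/\alpha)=\frac{\alpha-1}{\alpha}\Gamma(1-1/\alpha)$, so the right-hand side equals $\Gamma(1-1/\alpha)$, matching the left-hand side.

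For the inductive step, assume the identity holds for some $k\ge 1$. Then
\[
\sum_{j=1}^{k+1}\frac{\Gamma(j-1/\alpha)}{\Gamma(j)}=\frac{\alpha}{\alpha-1}\cdot\frac{\Gamma(k+1-1/\alpha)}{\Gamma(k)}+\frac{\Gamma(k+1-1/\alpha)}{\Gamma(k+1)}.
\]
Factoring out $\Gamma(k+1-1/\alpha)$ and writing $\Gamma(k+1)=k\,\Gamma(k)$, the right-hand side becomes $\frac{\Gamma(k+1-1/\alpha)}{\Gamma(k)}\left(\frac{\alpha}{\alpha-1}+\frac{1}{k}\right)=\frac{\Gamma(k+1-1/\alpha)}{\Gamma(k)}\cdot\frac{\alpha k+\alpha-1}{k(\alpha-1)}$. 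It remains to check this equals $\frac{\alpha}{\alpha-1}\cdot\frac{\Gamma(k+2-1/\alpha)}{\Gamma(k+1)}$. Using $\Gamma(k+2-1/\alpha)=(k+1-1/\alpha)\,\Gamma(k+1-1/\alpha)$ and $\Gamma(k+1)=k\,\Gamma(k)$, the target expression is $\frac{\alpha}{\alpha-1}\cdot\frac{(k+1-1/\alpha)\,\Gamma(k+1-1/\alpha)}{k\,\Gamma(k)}=\frac{\Gamma(k+1-1/\alpha)}{\Gamma(k)}\cdot\frac{\alpha(k+1-1/\alpha)}{k(\alpha-1)}=\frac{\Gamma(k+1-1/\alpha)}{\Gamma(k)}\cdot\frac{\alpha k+\alpha-1}{k(\alpha-1)}$, which coincides with what we obtained. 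This completes the induction.

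I do not anticipate a serious obstacle here: the whole argument is a routine telescoping/induction using only the Gamma functional equation, and the only care needed is bookkeeping with the shift $j\mapsto j-1/\alpha$ and ensuring $\alpha>1$ so that $\alpha-1\neq 0$ and all Gamma values in play have positive argument (indeed $j-1/\alpha\ge 1-1/\alpha>0$). An alternative, essentially equivalent, route would be to write each summand as a telescoping difference: one checks directly that $\frac{\Gamma(j-1/\alpha)}{\Gamma(j)}=\frac{\alpha}{\alpha-1}\left(\frac{\Gamma(j+1-1/\alpha)}{\Gamma(j+1)}\cdot j-\frac{\Gamma(j-1/\alpha)}{\Gamma(j)}\cdot(j-1)\right)$ — more precisely, that $\frac{\alpha}{\alpha-1}\bigl(c_{j+1}-c_j\bigr)=\frac{\Gamma(j-1/\alpha)}{\Gamma(j)}$ where $c_j:=(j-1)\,\Gamma(j-1/\alpha)/\Gamma(j)=\Gamma(j-1/\alpha)/\Gamma(j-1)$ — and then sum the telescope from $j=1$ to $k$, with $c_1=0$. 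Either presentation is short; I would go with the induction for transparency.
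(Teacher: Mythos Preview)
Your proof is correct and is essentially the same as the paper's: both argue by induction on $k$ using only the functional equation $\Gamma(z+1)=z\,\Gamma(z)$, with the paper stepping from $k-1$ to $k$ and you from $k$ to $k+1$, which amounts to the same computation.
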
	
    \begin{proof}
    We prove the claim by induction on $k$. 
    Note that when $k=1$ the claim follows directly by observing that $\Gamma(2-1/\alpha)=(1-1/\alpha)\cdot \Gamma(1-1/\alpha)$.
    When $k>1$, by manipulating the left hand side of  \eqref{eq:sum:gammas} we obtain 
	\begin{align*}
		\sum_{j=1}^k \frac{\Gamma(j-1/\alpha)}{\Gamma(j)}&= \sum_{j=1}^{k-1} \frac{\Gamma(j-1/\alpha)}{\Gamma(j)}+ \frac{\Gamma(k-1/\alpha)}{\Gamma(k)}\\
		&=\frac{1}{1-1/\alpha} \cdot \frac{\Gamma(k-1/\alpha)}{\Gamma(k-1)}+  \frac{\Gamma(k-1/\alpha)}{\Gamma(k)} \\
		%&=\frac{k-1}{1-1/\alpha} \cdot \frac{\Gamma(k-1/\alpha)}{\Gamma(k)}+  \frac{\Gamma(k-1/\alpha)}{\Gamma(k)}\\
		&= \frac{\Gamma(k-1/\alpha)}{\Gamma(k)} \left( \frac{k-1}{1-1/\alpha}+1\right)\\
		&= \frac{\Gamma(k-1/\alpha)}{\Gamma(k)} \cdot  \frac{k-1/\alpha}{1-1/\alpha} =  \frac{\alpha}{\alpha-1} \cdot \frac{\Gamma(k+1-1/\alpha)}{\Gamma(k)},
	\end{align*}
	where the second equality follows by applying the induction hypothesis and the third and the last equalities follow from the identity $\Gamma(x)\cdot x=\Gamma(x+1)$ for $x=k-1$ and $x=k-1/\alpha$, respectively.
    \end{proof}
	Using the above claim, along with the fact that $T-n/a_n \to U$ as $n$ goes to infinity, we have that 
	\[\lim_{n\to \infty}\frac{T_n/a_n}{\sum_{j=1}^k \EE(M_n^j/a_n)}= U\cdot \frac{\alpha-1}{\alpha} \cdot \frac{\Gamma(k)}{\Gamma(k+1-1/\alpha)}\]
	when $n\to \infty$.
	Thus, combining the above with Lemma~\ref{lem:fre-integral}, we get that 
	\begin{align*}
	&\EE \left(X \vert X > T_n\right)\frac{\sum_{j=1}^{k} \PP(M_n^{j}>T_n)}{\sum_{j=1}^k \EE(M_n^{j})}\\
	&=\frac{\EE \left(X \vert X > T_n\right)}{T_n}\cdot \frac{T_n/a_n}{\sum_{j=1}^k \EE(M_n^j/a_n)}\cdot \sum_{j=1}^{k} \PP(M_n^{j}>T_n)\\
	&\to \frac{\alpha}{\alpha-1}\cdot U\cdot \frac{\alpha-1}{\alpha} \cdot \frac{\Gamma(k)}{\Gamma(k+1-1/\alpha)}\exp(-U^{-\alpha}) \sum_{j=1}^k \sum_{s=j}^{\infty} \frac{U^{-s \alpha}}{s!}\\
	&=\frac{\Gamma(k)}{\Gamma(k+1-1/\alpha)} U \exp(-U^{-\alpha}) \sum_{j=1}^k \sum_{s=j}^{\infty} \frac{U^{-s \alpha}}{s!},
	\end{align*}
	when $n\to \infty$.
\end{proof}

%Now we are ready to prove Theorem \ref{thm:tightness}.

\subsection{Proof of Theorem \ref{thm:apx:multi}\ref{multiple-Gumbel}: Gumbel and Reversed Weibull Family }
\label{sec:thm-gumbel-section}

In what follows we consider a distribution $F$ with extreme type Gumbel or in the reversed Weibull family.
%We consider both cases separately.
\begin{comment}
Recall that if $F$ has extreme type in the reversed Weibull family then it holds that $\omega_1(F)<\infty$, that is, $F$ has bounded support.
We start by showing that when $\omega_1(F)<\infty$ we have $\apx_k(F)=1$ for every positive integer $k$.
In particular, this immediately gives the approximation result for a distribution $F$ with extreme type in the Reverse Weibull family.
When the support of $F$ is upper bounded by $\omega_1(F)<\infty$, we have $\EE(M_n^j)\le \omega_1(F)$ for every $j\in \{1,\ldots,k\}$. 
For every $\varepsilon>0$ consider $T_{\varepsilon}=(1-\varepsilon)\cdot \omega_1(F)$.
Then, by the expression in \eqref{eqn:comp:ratio2:k}
%(\ref{eqn:LB-k})
we have that $\apx_k(F)$ can be lower bounded as
$\apx_k(F)\ge (1-\varepsilon)\cdot \omega_1(F)\cdot \liminf_{n\to \infty}\sum_{j=1}^{k} \PP(M_n^{j}>T_{\varepsilon})/(k\cdot \omega_1(F))=1-\varepsilon$,
and we conclude that $\apx_k(F)=1$.
\end{comment}
In what follows, we restrict attention to the distributions $F$ with extreme type Gumbel where $\omega_1(F)=\infty$.
%since otherwise we recover immediately that $\apx_k(F)=1$ by using the same argument shown before.
%\subsection{Analyzing the Approximation Factor by using Von Mises Representations}
%In what follows we prove Theorem~\ref{thm:apx:multi} \ref{multiple-Gumbel} for the Gumbel family.
Key to our analysis are Lemmas~\ref{lem:von-mises-prop} and~\ref{prop:Gumbel-prop} about Von Mises representations for
distributions in the Gumbel family.
Before proving the theorem, we need a couple of lemmas regarding the structure of a distribution in the Gumbel family.
%The proof of the following lemmas can be found in Appendix \ref{appendix-section-4}.

\begin{lemma}
	\label{lem:mises-to-gumbel}
	Let $F$ be a distribution with extreme type in the Gumbel family such that $\omega_1(F)=\infty$ and let $(V,\eta)$ be a Von Mises representation of $F$ such that $\lim_{t\to \infty}\eta(t)=\eta^{\star}$.
	Let $\{a_n\}_{n\in\NN}$ and $\{b_n\}_{n\in\NN}$ be scaling and shifting sequences, respectively, for $V$.
	For every positive integer $n$ consider $b_n^{\eta}=b_n+a_n\log \eta^{\star}$.
	Then, the following holds:
	\begin{enumerate}[label=\normalfont(\roman*)]
%		\item For every $t\in \RR_+$ we have that $\PP\left(M_n-b_n\le a_n\cdot t\right)=\exp(- \eta^{\star} e^{-t})$.\label{limite-gumbel}
		\item $\{a_n\}_{n\in\NN}$ and $\{b_n^{\eta}\}_{n\in\NN}$ are scaling and shifting sequences, respectively, for $F$.\label{sequences-gumbel}
		\item For every $U\in \RR$ we have $\lim_{n\to \infty}(a_nU+b_n^{\eta})= \infty$. \label{limit-between}
		\item For every $U\in \RR$ and every positive integer $k$ we have that $\lim_{n\to \infty}(a_nU+b_n^{\eta})/\sum_{j=1}^k \EE(M_n^{j})=1/k$,
where $M_n^1,\ldots,M_n^n$ are the order statistics for $F$. \label{lem:gumbel-k-second}
	\end{enumerate}
\end{lemma}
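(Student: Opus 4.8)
The plan is to leverage the Von Mises representation $1 - F(t) = \eta(t)(1-V(t))$ together with the known behavior of the scaling and shifting sequences $\{a_n\}, \{b_n\}$ for $V$ guaranteed by Lemma~\ref{lem:von-mises-prop}. For part~\ref{sequences-gumbel}, I would show directly that $F^n(a_n t + b_n^\eta) \to \Lambda(t) = \exp(-e^{-t})$ pointwise. Writing $n(1 - F(a_n t + b_n^\eta)) = n\,\eta(a_n t + b_n^\eta)\,(1 - V(a_n t + b_n^\eta))$, the key is that $a_n t + b_n^\eta \to \infty$ (since $b_n^\eta = b_n + a_n \log\eta^\star$ and $b_n \to \infty$, $a_n = o(b_n)$ by Lemma~\ref{lem:von-mises-prop}), so $\eta(a_n t + b_n^\eta) \to \eta^\star$; meanwhile $n(1 - V(a_n t + b_n^\eta))$ must be analyzed using the shift by $a_n\log\eta^\star$. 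Using the representation \eqref{V-M:repr} of $V$, a shift of the argument by $a_n\log\eta^\star$ multiplies $1 - V$ asymptotically by $e^{-\log\eta^\star} = 1/\eta^\star$ (because the auxiliary function $\aux$ is slowly varying relative to its argument and $\aux(b_n) = a_n$), so $n\,\eta^\star\,(1-V(a_n t + b_n^\eta)) \to e^{-t}$. Combining, $n(1-F(a_n t + b_n^\eta)) \to e^{-t}$, and then $F^n(a_n t + b_n^\eta) = (1 - \tfrac{1}{n}\cdot n(1-F))^n \to \exp(-e^{-t})$, which is exactly the Gumbel limit.

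Part~\ref{limit-between} is immediate from what was just established: $a_n U + b_n^\eta = a_n U + b_n + a_n\log\eta^\star = b_n + a_n(U + \log\eta^\star)$, and since $b_n \to \infty$ with $a_n = o(b_n)$ by Lemma~\ref{lem:von-mises-prop}, this tends to $\infty$ regardless of the sign of $U + \log\eta^\star$. (Alternatively, this is precisely the statement $\lim_{t\to\infty}(t + x\,\aux(t)) = \infty$ from Lemma~\ref{lem:von-mises-prop} applied at $t = b_n$.)

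For part~\ref{lem:gumbel-k-second}, the plan is to show $\EE(M_n^j/a_n - b_n^\eta/a_n) \to \EE(Z_j)$ where $Z_j$ has the limiting distribution $\Lambda(t)\sum_{s=0}^{j-1}\tfrac{(-\log\Lambda(t))^s}{s!} = \exp(-e^{-t})\sum_{s=0}^{j-1}\tfrac{e^{-st}}{s!}$ from Theorem~\ref{thm:leadbetter}, invoking convergence of moments in extreme value theory (\cite[Proposition 2.1, p.~77]{libroextremo}), exactly as in the Fréchet case. Then
\[
\frac{a_n U + b_n^\eta}{\sum_{j=1}^k \EE(M_n^j)} = \frac{U + b_n^\eta/a_n}{\sum_{j=1}^k \EE(M_n^j/a_n)} = \frac{U + b_n^\eta/a_n}{\sum_{j=1}^k \big(b_n^\eta/a_n + \EE(M_n^j/a_n - b_n^\eta/a_n)\big)}.
\]
Since $b_n^\eta/a_n \to \infty$ (as $a_n = o(b_n)$) while $U$ and each $\EE(M_n^j/a_n - b_n^\eta/a_n) \to \EE(Z_j)$ stay bounded, dividing numerator and denominator by $b_n^\eta/a_n$ gives the limit $1/k$. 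The main obstacle is the rigorous justification in part~\ref{sequences-gumbel} that shifting the argument of $1-V$ by $a_n \log\eta^\star$ rescales it asymptotically by $1/\eta^\star$; this requires carefully using that $\aux$ is an auxiliary function with $\aux(b_n) = a_n$ and $\aux'\to 0$, so that $\int_{b_n}^{b_n + a_n\log\eta^\star}\tfrac{1}{\aux(s)}\,ds \to \log\eta^\star$ — a uniform-convergence-type estimate on $\aux$ near infinity. The rest is bookkeeping with the order-statistics limit law and moment convergence already available from the tools in Appendix~\ref{app:EVT-prelim}.
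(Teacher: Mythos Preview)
Your proposal is correct and follows essentially the same approach as the paper. For part~\ref{sequences-gumbel}, the paper bypasses your ``main obstacle'' by first showing $F^n(a_n t + b_n) \to \exp(-\eta^\star e^{-t})$ (using only that $(a_n,b_n)$ normalize $V$ to the standard Gumbel, so $n(1-V(a_nt+b_n))\to e^{-t}$) and then noting $\PP(M_n - b_n^\eta \le a_n t) = \PP(M_n - b_n \le a_n(t+\log\eta^\star))$, so no direct integral estimate $\int_{b_n}^{b_n+a_n\log\eta^\star}\tfrac{1}{\aux(s)}\,ds\to\log\eta^\star$ is ever needed; parts~\ref{limit-between} and~\ref{lem:gumbel-k-second} match the paper's argument.
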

\begin{proof}
	To prove the first point we start by checking that for every $t$ we have $\PP\left(M_n-b_n\le a_n\cdot t\right)=F^n(a_nt+b_n)\to \exp(- \eta^{\star} e^{-t})$. 
%	$F^n(a_nt+b_n)\to \exp(-\eta^{\star} e^{-t})$ when $n\to \infty$.
	For every positive integer $n$, let $\eta_n=\eta(a_nt+b_n)$ and $d_n=V(a_nt+b_n)$.
	Observe that, by Lemma~\ref{prop:Gumbel-prop}, $F^n(a_nt+b_n)=\left(1-\eta_n(1-d_n)n\cdot \frac{1}{n}\right)^n,$
	and therefore, it is enough to check that $\eta_n(1-d_n)n\to \eta^{\star} e^{-t}$.
	Recall that by Lemma~\ref{lem:von-mises-prop} we have that $a_n=\aux(b_n)$, where $\mu$ is the auxiliary function of $V$ and $b_n=\inf\{y\in \RR:V(y)\ge 1-1/n\}\to \infty$ (since $\omega_1(V)=\infty$).
	Therefore, we have that $a_nt+b_n=\mu(b_n)t+b_n\to \infty$, where the limit holds by Lemma~\ref{lem:von-mises-prop} for $x = t$ and $t = b_n$, since $b_n\to \infty$.  
	Furthermore, since $d_n \to 1$, we have  $\lim_{x\to 1}(1-x)^{-1}\log(1/x)=1$ and therefore $\lim_{n\to \infty}n(1-d_n)=\lim_{n\to \infty}n\log(1/d_n)=e^{-t}$, since $d_n^n\to \exp(-e^{-t})$. Thus, for every $t > 0$, we have $\lim_{n \to \infty} \eta_n(1 - d_n) n = \eta^{\star} e^{-t}$
%	That concludes \ref{limite-gumbel}.
	On the other hand, from the definition of the sequences, we have
	\begin{align*}
		\PP\left(M_n-b_n^{\eta}\le a_n\cdot t\right)&=\PP\left(M_n-b_n-a_n\log \eta^{\star}\le a_n\cdot t\right)\\
		&=\PP\left(M_n-b_n\le a_n(t+\log \eta^{\star})\right)\\ 
		&\to \exp\Big(- \eta^{\star} e^{-(t+\log \eta^{\star})}\Big)=\exp(-e^{-t}),
	\end{align*}
	which is the Gumbel distribution and that concludes \ref{sequences-gumbel}.
	To prove \ref{limit-between},
	note that for every $U$, we have that $a_nU+b_n^{\eta}=a_nU+b_n+a_n\log \eta^{\star}=\mu(b_n)(U+\log\eta^{\star})+b_n\to \infty$,  
where this limit holds since $b_n\to \infty$ and by Lemma~\ref{lem:von-mises-prop}. 

We now prove part \ref{lem:gumbel-k-second}.
We show that for every $j\in \{1,\ldots,k\}$ we have that $\EE(M_n^j/(a_nU+b_n^{\eta}))\to 1$ when $n\to \infty$.
By part \ref{sequences-gumbel} we have that $\{a_n\}_{n\in\NN}$ and $\{b_n^{\eta}\}_{n\in\NN}$ are scaling and shifting sequences, respectively, for $F$.
We first note that $\{(M_n^j-b^{\eta}_n)/a_n \}_{n\in \NN}$ converges not only in distribution but also in expectation to a random variable distributed according to 
\[G_j(t)=H(t) \sum_{s=0}^{j-1} \frac{(-\log H(t))^s}{s!},\] 
with $H(t)=\exp(- e^{-t})$. 
More specifically, $\lim_{n \to \infty} \EE((M_n^j-b^{\eta}_n)/a_n)= \EE(Z_j)$
where $Z_j$ is distributed according to  $G_j$.
This follows from a result on the convergence of moments in extreme value theory~\cite[Proposition 2.1, p. 77]{libroextremo}.
Furthermore, we have $\EE(Z_j)\le \EE(Z_1)<\infty$. 
We claim that $b^{\eta}_n/a_n\to \infty$.
Since $\omega_1(V)=\infty$ we have that $b_n=\inf\left\{t\in \RR:V(t)\ge 1-1/n\right\}\to \infty$ when $n\to \infty$.
Therefore, by Lemma~\ref{lem:von-mises-prop} it follows that
$b_n/a_n=b_n/\aux(b_n)\to \infty$ when $n\to \infty$, where $\mu$ is the auxiliary function of $V$, and therefore $b^{\eta}_n/a_n=b_n/\mu(b_n)+\log(\eta^{\star})\to \infty$, where the limit holds since $b_n/\mu(b_n)\to \infty$ by Lemma~\ref{lem:von-mises-prop} and the fact that $b_n\to \infty$. 
\begin{claim}
\label{claim:seq-limit}
Suppose we have three positive sequences of real values $\{z_n\}_{n\in \NN}$, $\{w_n\}_{n\in \NN}$ and $\{v_n\}_{n\in \NN}$ such that $(z_n-w_n)/v_n\to L<\infty$ and $w_n/v_n\to \infty$. 
Then, $z_n/v_n\to \infty$ and $z_n/w_n\to 1$.
\end{claim}
\begin{proof}
For every $M\ge L-1$, there exists a positive integer $n_0$ such that for every $n\ge n_0$ the following inequalities hold: $w_n>(M-L+1)v_n$ and $z_n-w_n>(L-1)v_n$.
Therefore, for every $n\ge n_0$ we have that $z_n>(M-L+1)v_n+(L-1)v_n=Mv_n$ and thus we conclude that $z_n/v_n\to \infty$.
For the other one, observe that there exists $n_1$ such that for every $n\ge n_1$ we have that $(L/2)v_n<z_n-w_n<(3L/2)v_n$.
Therefore, for every $n\ge n_1$ we have that $(L/2)\frac{v_n}{w_n}<z_n/w_n-1<(3L/2)\frac{v_n}{w_n}$ and since $v_n/w_n\to 0$ we conclude that $z_n/w_n\to 1$.
\end{proof}
Using Claim \ref{claim:seq-limit} with $z_n=\EE(M^j_n)$, $w_n=b^{\eta}_n$ and $v_n=a_n$ we conclude that $\EE(M^j_n)/a_n\to \infty$ and $\EE(M^j_n)/b^{\eta}_n\to 1$.
Then, we have that $U\cdot a_n/\EE(M^j_n)+b^{\eta}_n/\EE(M^j_n) \to 1$ for every $U\in \RR$.
\end{proof}

\begin{lemma}
	\label{lem:gumbel-first-term}
	Let $F$ be a distribution with extreme type in the Gumbel family and let $\{\Theta_n\}_{n\in \NN}$ be a sequence of real values such that $\Theta_n\to \infty$.
	Then, we have $\lim_{n\to \infty}\EE(X|X>\Theta_n)/\Theta_n=1$, where $X$ is distributed according to $F$.
%	\[\lim_{n\to \infty}\frac{1}{\Theta_n(1-F(\Theta_n))}\int_{\Theta_n}^{\infty}(1-F(s))\mathrm ds= 0.\]
\end{lemma}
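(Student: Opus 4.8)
The plan is to reduce the statement to a sublinear bound on the mean residual life of $F$. Writing $\mu_F(t) := \big(\int_t^\infty (1-F(s))\,\mathrm ds\big)\big/(1-F(t))$, the elementary identity $\EE[(X-t)\mathbf 1_{X>t}] = \int_t^\infty (1-F(s))\,\mathrm ds$ gives $\EE(X\mid X>t) = t + \mu_F(t)$, so that
\[
\frac{\EE(X\mid X>\Theta_n)}{\Theta_n} \;=\; 1 + \frac{\mu_F(\Theta_n)}{\Theta_n}.
\]
Since $\Theta_n\to\infty$, it suffices to show $\mu_F(t)/t\to 0$ as $t\to\infty$. (We work under the standing assumption of this section that $\omega_1(F)=\infty$, so $1-F(t)>0$ for all $t$; the estimates below will also show that $1-F$ decays faster than every power of $t$, so $\EE X<\infty$ and $\mu_F$ is well defined.)

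First I would invoke the Von Mises representation: by Lemma~\ref{prop:Gumbel-prop} there are a Von Mises function $V$ (with $\omega_1(V)=\infty$) and a positive function $\eta$ with $\eta(t)\to\eta^{\star}>0$ such that $1-F(t)=\eta(t)(1-V(t))$. Since $\eta$ is bounded away from $0$ and $\infty$ for large $t$, one gets $\mu_F(t)\le C\,\mu_V(t)$ for all large $t$ with an absolute constant $C$, where $\mu_V$ denotes the mean residual life of $V$. Hence it is enough to prove $\mu_V(t)/t\to 0$.

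For the Von Mises function $V$ I would compare $\mu_V$ with the auxiliary function $\mu$ of $V$. Differentiating $1-V(t)=\theta\exp\!\big(-\int_{z_0}^t \mu(s)^{-1}\mathrm ds\big)$ shows that the density of $V$ is $(1-V(s))/\mu(s)$, i.e. $1-V(s) = -\mu(s)(1-V)'(s)$. Integrating this over $[t,\infty)$ and integrating by parts yields
\[
\int_t^\infty (1-V(s))\,\mathrm ds \;=\; \mu(t)(1-V(t)) + \int_t^\infty \mu'(s)(1-V(s))\,\mathrm ds,
\]
so $\mu_V(t) = \mu(t) + \big(\int_t^\infty \mu'(s)(1-V(s))\,\mathrm ds\big)/(1-V(t))$. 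Given $\varepsilon\in(0,1)$, since $\mu'(s)\to 0$ (Lemma~\ref{lem:von-mises-prop}) the last term is at most $\varepsilon\,\mu_V(t)$ in absolute value once $t$ is large, so $\mu_V(t)\le\mu(t)/(1-\varepsilon)$ eventually; as $\mu(t)/t\to 0$ (Lemma~\ref{lem:von-mises-prop}) this gives $\mu_V(t)/t\to 0$, and combining the three steps finishes the proof.

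I expect the only delicate point to be the legitimacy of the integration by parts — namely that the boundary term $\mu(s)(1-V(s))$ vanishes as $s\to\infty$ and that $\int_t^\infty\mu'(s)(1-V(s))\,\mathrm ds$ converges absolutely. Both follow once one records that $\mu(s)/s\to 0$ forces super-polynomial decay of $1-V$: from the representation, $\mu(s)\le\varepsilon s$ for $s\ge s_\varepsilon$ gives $1-V(s)\le (1-V(s_\varepsilon))(s/s_\varepsilon)^{-1/\varepsilon}$, hence $s(1-V(s))\to 0$ and in particular $\mu(s)(1-V(s))\le s(1-V(s))\to 0$. The remaining manipulations are routine.
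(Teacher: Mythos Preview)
Your argument is correct and takes a genuinely different route from the paper's proof.

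The paper rewrites the normalized excess $\frac{1}{\Theta_n(1-F(\Theta_n))}\int_{\Theta_n}^\infty(1-F)$ via the change of variables $s=\omega\Theta_n$ as $\int_1^\infty\frac{\eta(\omega\Theta_n)}{\eta(\Theta_n)}\cdot\frac{1-V(\omega\Theta_n)}{1-V(\Theta_n)}\,\mathrm d\omega$, proves the integrand tends to $0$ pointwise in $\omega$ (essentially rapid variation of the Gumbel tail), and then invokes dominated convergence. Your approach instead compares the mean residual life $\mu_V$ directly with the auxiliary function $\mu$ via integration by parts, obtaining $\mu_V(t)\le\mu(t)/(1-\varepsilon)$ for large $t$, and then appeals to $\mu(t)/t\to 0$. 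This is more self-contained: it sidesteps the need to exhibit an integrable dominating function (a point the paper's proof leaves implicit), and as a by-product it recovers the classical fact that the mean residual life of a Von Mises function is asymptotically equivalent to its auxiliary function. The paper's change-of-variables argument, by contrast, stays closer to the pattern used elsewhere in the analysis (e.g., the Fr\'echet case in Lemma~\ref{lem:fre-integral}).

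One small citation slip: the fact $\mu'(s)\to 0$ is part of the \emph{definition} of a Von Mises function (equation~\eqref{V-M:repr} and the text preceding it), not a conclusion of Lemma~\ref{lem:von-mises-prop}; that lemma is the right reference for $\mu(t)/t\to 0$.
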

\begin{proof}
	Let $(V,\eta)$ be a Von Mises representation of $F$ such that $\lim_{t\to \infty}\eta(t)=\eta^{\star}$, guaranteed by Lemma \ref{prop:Gumbel-prop}.
	First observe that when $X$ is distributed according to $F$, we have that 
	\[\frac{\EE(X|X>\Theta_n)}{\Theta_n}=1+\frac{1}{\Theta_n(1-F(\Theta_n))}\int_{\Theta_n}^{\infty}(1-F(s))\mathrm ds.\]
	Let $n_0$ be a sufficiently large positive integer such that $\Theta_n>0$ for every $n\ge n_0$.
	Then, for every $n\ge n_0$ and from the Von Mises representation we obtain that
	\begin{align*}
		\frac{1}{\Theta_n(1-F(\Theta_n))}\int_{\Theta_n}^{\infty}(1-F(s))\mathrm ds &= \int_{\Theta_n}^{\infty}\frac{\eta(s)}{\eta(\Theta_n)}\cdot \frac{1-V(s)}{\Theta_n(1-V(\Theta_n))}\mathrm ds \\
		&=\int_{1}^{\infty}\frac{\eta(\omega \Theta_n)}{\eta(\Theta_n)}\cdot \frac{1-V(\omega \Theta_n)}{1-V(\Theta_n)}\mathrm d\omega,
	\end{align*} 
	where the last equality comes by performing a change of variables.
	\begin{claim}\label{clm:tail-equiv-von-mises}
		For every $\omega\in (1,\infty)$ we have that $\displaystyle \frac{1-V(\omega\Theta_n)}{1-V(\Theta_n)}\to 0.$
	\end{claim}
    \begin{proof}
    Since $V$ is a Von Mises function, there exists an auxiliary function $\mu$ and a $z_0 \in \R$ such that 
	\[
	\frac{1-V(\omega\Theta_n)}{1-V(\Theta_n)}=
	\exp\left(- \int\limits_{z_0}^{\omega \Theta_n} \frac{1}{\aux(s)}\mathrm  ds  \right) 
	\exp\left(  \;\int\limits_{z_0}^{\Theta_n} \frac{1}{\aux(s)}\mathrm  ds \right) 
	=\exp\left(  -\int\limits_{\Theta_n}^{\omega \Theta_n} \frac{1}{\aux(s)}\mathrm  ds \right)
	\]
	To complete the proof it is sufficient to show that the integral above goes to $\infty$ when $n\to \infty$.
	By changing variables we have that
	\[
	\int\limits_{\Theta_n}^{\omega \Theta_n} \frac{1}{\aux(s)}\mathrm  ds = \int\limits_1^\omega \frac{\Theta_n}{\aux(y\Theta_n)} \mathrm d y.
	\]
	Since $\Theta_n\to \infty$, for every $y\in (1,\omega)$ we have that
	\begin{equation*}
		\lim_{n\to \infty}\frac{\Theta_n}{\aux(y \Theta_n)}=\frac{y \Theta_n}{\aux(y \Theta_n)}\cdot \frac{1}{y}= \infty,
	\end{equation*}
	where the limit holds thanks to Lemma~\ref{lem:von-mises-prop}. 
	We therefore conclude that  
	\begin{equation*}
		\lim_{n\to \infty}\frac{1-V(\omega\Theta_n)}{1-V(\Theta_n)}=\lim_{n\to \infty}\exp\left( -\int\limits_1^\omega \frac{\Theta_n}{\aux(y \Theta_n)} \mathrm d y \right)= 0.\qedhere
	\end{equation*}
    \end{proof}
	We conclude the proof of the lemma, using Claim~\ref{clm:tail-equiv-von-mises}.
	Since $\Theta_n\to \infty$, we have that $\eta(\Theta_n)\to \eta^{\star}$ and $\eta(\omega \Theta_n)\to \eta^{\star}$ for every $\omega\in (1,\infty)$.
	Therefore, we have that 
	\[\lim_{n\to \infty}\frac{\eta(\omega \Theta_n)}{\eta(\Theta_n)}\cdot \frac{1-V(\omega \Theta_n)}{1-V(\Theta_n)}= \frac{\eta^{\star}}{\eta^{\star}}\cdot 0=0,\]
	and thanks to the dominated convergence theorem this implies that
	\[
	\frac{1}{\Theta_n(1-F(\Theta_n))}\int_{\Theta_n}^{\infty}(1-F(s))\mathrm ds =\int_{1}^{\infty}\frac{\eta(\omega \Theta_n)}{\eta(\Theta_n)}\cdot \frac{1-V(\omega \Theta_n)}{1-V(\Theta_n)}\mathrm d\omega \to 0.\qedhere
	\]
\end{proof}

We are now ready to prove Theorem~\ref{thm:apx:multi}\ref{multiple-Gumbel} for the Gumbel family.

\begin{proof}[Proof of Theorem \ref{thm:apx:multi}\ref{multiple-Gumbel} for the Gumbel family]
Let $F$ be a distribution with extreme type in the Gumbel family such that $\omega_1(F)=\infty$. 
Consider a Von Mises pair $(V,\eta)$ that represents $F$ with $\lim_{t\to \infty}\eta(t)=\eta^{\star}>0$, guaranteed to exist by Lemma \ref{prop:Gumbel-prop}.
Let $\{a_n\}_{n\in\NN}$ and $\{b_n\}_{n\in\NN}$ be scaling and shifting sequences, respectively, for $V$.
For every positive integer $n$ consider $b_n^{\eta}=b_n+a_n\log \eta^{\star}$.
By Proposition~\ref{prop:numerator-multi}, we can lower bound the value of $\apx_k(F)$ by 
\begin{equation}\label{eq:gumbel-apx-program}
\sup_{U\in \RR}\liminf_{n\to \infty} \frac{\EE \left(X \vert X > a_nU+b_n^{\eta} \right)}{a_nU+b_n^{\eta}} \cdot \frac{a_nU+b_n^{\eta}}{\sum_{j=1}^k \EE(M_n^{j})}\cdot \sum_{j=1}^{k} \PP(M_n^{j}>a_nU+b_n^{\eta}).
\end{equation}
Our threshold $T_n$ will be defined as $T_n = a_n U + b_n^{\eta}$, where $U$ is the optimal solution to \eqref{eq:gumbel-apx-program}.
By Lemma~\ref{lem:mises-to-gumbel}\ref{limit-between}, we have  $a_nU+b_n^{\eta}\to \infty$ for every $U$ when $n\to \infty$, and therefore from Lemma~\ref{lem:gumbel-first-term} we obtain
\[\lim_{n\to \infty}\frac{\EE \left(X \vert X > a_nU+b_n^{\eta} \right)}{a_nU+b_n^{\eta}}=1
%\lim_{n\to \infty}\left(1+\frac{1}{(a_nU+b_n^{\eta})(1-F(a_nU+b_n^{\eta}))}\int_{a_nU+b_n^{\eta}}^{\infty}(1-F(s))\mathrm ds\right)= 1
,\]
for every $U$.
Furthermore, Lemma~\ref{lem:mises-to-gumbel}\ref{lem:gumbel-k-second} implies that for every $U$ and every positive integer $k$ it holds $(a_nU+b_n^{\eta})/\sum_{j=1}^k \EE(M_n^{j})\to 1/k$.
We conclude that for every $U$ 
\[\lim_{n\to \infty} \frac{\EE \left(X \vert X > a_nU+b_n^{\eta} \right)}{a_nU+b_n^{\eta}} \cdot \frac{a_nU+b_n^{\eta}}{\sum_{j=1}^k \EE(M_n^{j})}=\frac{1}{k}.\]
By Lemma \ref{lem:mises-to-gumbel}\ref{sequences-gumbel}, $\{a_n\}_{n\in\NN}$ and $\{b_n^{\eta}\}_{n\in\NN}$ are scaling and shifting sequences, respectively, for $F$.
Therefore, by Theorem \ref{thm:leadbetter} we have 
\begin{align*}
\lim_{n\to \infty}\sum_{j=1}^{k} \PP(M_n^{j}>a_nU+b_n^{\eta})&= \lim_{n\to \infty}\sum_{j=1}^{k} \PP\left(\frac{M_n^{j}-b^{\eta}_n}{a_n}>U\right)\\
&=\sum_{j=1}^{k}\left(1-\exp\Big(- e^{-U}\Big)\sum_{s=0}^{j-1}\frac{ e^{-sU}}{s!}\right)\\
&=k-\exp\Big(-e^{-U}\Big)\sum_{j=1}^{k}\sum_{s=0}^{j-1}\frac{e^{-sU}}{s!},
\end{align*}
where the second equality follows from the fact that $F$ has extreme type Gumbel. Notice that the last term is non-negative for every $U$.
Furthermore, we get that 
\[\lim_{U\to \infty}\exp\Big(-e^{-U}\Big)\sum_{j=1}^{k}\sum_{s=0}^{j-1}\frac{ e^{-sU}}{s!}=\inf_{U\in \RR}\exp\Big(-e^{-U}\Big)\sum_{j=1}^{k}\sum_{s=0}^{j-1}\frac{ e^{-sU}}{s!}=0\]
since $\sum_{s=0}^{\infty}e^{-sU}/s!=\exp(-e^{-U})$.
We conclude that 
\[\sup_{U\in \RR}\lim_{n\to \infty} \frac{\EE \left(X \vert X > a_nU+b_n^{\eta} \right)}{a_nU+b_n^{\eta}} \cdot \frac{a_nU+b_n^{\eta}}{\sum_{j=1}^k \EE(M_n^{j})}\cdot \sum_{j=1}^{k} \PP(M_n^{j}>a_nU+b_n^{\eta})=\frac{1}{k}\cdot k=1,\]
and therefore $\apx_k(F)= 1$. 
That concludes the proof for the Gumbel family.
\end{proof}

\section{Tightness: Proof of Theorem \ref{thm:tightness}}
\label{app:tightness}

Before proving Theorem~\ref{thm:tightness} we state some technical results first.
In what follows we denote $\calP(y,k)=\sum_{j=0}^{k}e^{-y}y^{j}/j!$ the probability that that a random variable distributed according to a Poisson of mean $y$ is at most $k$.
% \begin{lemma}
% \label{lem:fechet-technical-1}
% For every positive integer $k$ and every $\alpha\in (1,\infty)$ the there exists a unique optimal solution $x_k(\alpha)$ for the optimization problem (\ref{eq:phi-function}) and 
% \[\varphi_k(\alpha)=\frac{\Gamma(k)}{\Gamma(k+1-1/\alpha)} \left(x_k(\alpha)^{1-\alpha}\calP\left(\frac{1}{x_k(\alpha)^{\alpha}},k-1\right)+kx_k(\alpha)\left(1-\calP\left(\frac{1}{x_k(\alpha)^{\alpha}},k\right)\right)\right).\]
% Furthermore, the following holds:
% \begin{enumerate}[label=\normalfont(\alph*)]
% 	\item For every $\alpha\in (1,2]$ we have that $x_k(\alpha)\ge (k+1)^{-1/\alpha}$. 
% 	\item For every $\alpha\in [2,\infty)$ we have that $x_k(\alpha)\le k^{-1/\alpha}$.
% \end{enumerate}
% In particular, when $\alpha=2$ we get that $(k+1)^{-1/2}\le x_k(2)\le k^{-1/2}$ for every positive integer $k$.
% \end{lemma}

\begin{lemma}
\label{lem:frechet-eq}
Let $F$ be the Pareto distribution with parameter $\alpha=2$. Then, for every positive integer $k$ we have $\apx_k(F)=\varphi_k(2)$.
\end{lemma}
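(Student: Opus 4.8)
The inequality $\apx_k(F)\ge\varphi_k(2)$ is immediate from Theorem~\ref{thm:apx:multi}\ref{multiple-frechet}, since the Pareto distribution of parameter $2$ has extreme type Fr\'echet of parameter $2$. Hence the content of the lemma is the reverse bound $\apx_k(F)\le\varphi_k(2)$, and the plan is to exploit that for the Pareto distribution every quantity appearing in Proposition~\ref{prop:numerator-multi} is available in closed form, so the supremum over $T$ can be evaluated in the limit $n\to\infty$ without any slowly-varying-function estimates. By Proposition~\ref{prop:numerator-multi} it suffices to prove $\limsup_{n\to\infty}\sup_{T\in\RR_+}g_n(T)\le\varphi_k(2)$, where $g_n(T)=\frac{\EE(X\mid X>T)\sum_{j=1}^k\PP(M_n^j>T)}{\sum_{j=1}^k\EE(M_n^j)}$. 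For the Pareto of parameter $2$ we have $a_n=F^{-1}(1-1/n)=\sqrt n$; and for $T\ge1$, $1-F(T)=T^{-2}$ and $\EE(X\mid X>T)=T+T^2\int_T^\infty s^{-2}\mathrm ds=2T$ exactly. For $T<1$ we have $\EE(X\mid X>T)=\EE(X)=2$ and $\PP(M_n^j>T)=1$, so $g_n(T)=2k/\sum_{j=1}^k\EE(M_n^j)\to0$; thus only the range $T\ge1$ matters.

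Writing $S_n=\sum_{j=1}^k\EE(M_n^j)$, for $T\ge1$ I reparametrize by $\lambda:=n\,(1-F(T))=nT^{-2}\in(0,n]$, so $T=\sqrt{n/\lambda}$. By the identity established inside the proof of Proposition~\ref{prop:numerator-multi}, $\sum_{j=1}^k\PP(M_n^j>T)=\EE(\min\{k,\calB\})$ with $\calB\sim\mathrm{Bin}(n,\lambda/n)$. Moreover $S_n=\sqrt n\,(c_k+o(1))$, where $c_k:=\sum_{j=1}^k\Gamma(j-1/2)/\Gamma(j)=2\,\Gamma(k+1/2)/\Gamma(k)$: the first equality uses the convergence of moments in extreme value theory (exactly as in the proof of Lemma~\ref{lem:frechet-technical-2}, which gives $\EE(M_n^j/a_n)\to\Gamma(j-1/\alpha)/\Gamma(j)$) and the second is Claim~\ref{claim:sum:gamma} at $\alpha=2$. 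Combining, $g_n(T)=\frac{2}{\sqrt\lambda}\cdot\frac{\EE(\min\{k,\calB\})}{c_k+o(1)}$.

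It remains to control this uniformly in $\lambda$. For the tails, the crude bound $\EE(\min\{k,\calB\})\le\min\{k,\EE\calB\}=\min\{k,\lambda\}$ gives $g_n(T)\le 2\min\{k,\lambda\}/(\sqrt\lambda\,(c_k+o(1)))$, which tends to $0$ as $\lambda\to0$ (at most $2\sqrt\lambda/(c_k/2)$) and as $\lambda\to\infty$ (at most $2k/(\sqrt\lambda\,(c_k/2))$), once $n$ is large enough that $S_n/\sqrt n\ge c_k/2$. Hence one can fix $\delta>0$ and $M<\infty$, independent of $n$, with $\sup_{\lambda\in(0,\delta]\cup[M,n]}g_n(T)<\varphi_k(2)$ for all large $n$. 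On the compact window $\lambda\in[\delta,M]$, Le~Cam's Poisson approximation yields $d_{\mathrm{TV}}(\mathrm{Bin}(n,\lambda/n),\mathrm{Poi}(\lambda))\le\lambda^2/n\le M^2/n$, so $\EE(\min\{k,\calB\})\to\EE(\min\{k,\mathrm{Poi}(\lambda)\})=\sum_{j=1}^k\sum_{s=j}^\infty e^{-\lambda}\lambda^s/s!$ uniformly over $\lambda\in[\delta,M]$; together with $c_k+o(1)\to c_k$ this gives, uniformly on $[\delta,M]$, that $g_n(T)\to \frac{2}{c_k\sqrt\lambda}\sum_{j=1}^k\sum_{s=j}^\infty e^{-\lambda}\lambda^s/s!$, which after the substitution $x=\lambda^{-1/2}$ and $c_k=2\Gamma(k+1/2)/\Gamma(k)$ equals $\frac{\Gamma(k)}{\Gamma(k+1/2)}\,x\,e^{-x^{-2}}\sum_{j=1}^k\sum_{s=j}^\infty x^{-2s}/s!\le\varphi_k(2)$. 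Combining the compact window with the tail estimates gives $\limsup_{n\to\infty}\sup_{T\in\RR_+}g_n(T)\le\varphi_k(2)$, which with Proposition~\ref{prop:numerator-multi} and the lower bound yields $\apx_k(F)=\varphi_k(2)$.

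I expect the main obstacle to be precisely this last step: ruling out that the supremum over $T$ escapes as $T\to\infty$ (i.e.\ $\lambda\to0$) or $T\downarrow1$ (i.e.\ $\lambda\to\infty$), and upgrading the pointwise-in-$\lambda$ Poisson limit to one that is uniform on compact $\lambda$-windows. Everything else is bookkeeping made transparent by the exact identities $a_n=\sqrt n$ and $\EE(X\mid X>T)=2T$ special to the Pareto, which replace the asymptotic estimates (Lemma~\ref{lem:fre-integral}, slowly varying functions) needed for general Fr\'echet distributions.
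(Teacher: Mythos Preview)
Your proof is correct and takes a genuinely different route from the paper's. The paper works with the sequence of optimizers $T_n$: it shows $T_n/a_n$ is bounded via a Bernoulli-inequality tail estimate, extracts a convergent subsequence $T_n/a_n\to U$, invokes the general Fr\'echet Lemma~\ref{lem:frechet-technical-2} to identify the limit ratio as the $\varphi_k(2)$-integrand evaluated at $U$, and then argues by contradiction that $U$ must be a maximizer. You instead reparametrize by $\lambda=n(1-F(T))=nT^{-2}$, use the exact Pareto identities $\EE(X\mid X>T)=2T$ and $a_n=\sqrt n$, and control $\sup_\lambda$ directly: the tails $\lambda\to 0,\infty$ are handled by the crude bound $\EE\min\{k,\calB\}\le\min\{k,\lambda\}$, and on compact $\lambda$-windows Le~Cam's inequality gives \emph{uniform} Binomial-to-Poisson convergence, so the supremum converges to something $\le\varphi_k(2)$ by definition. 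What each approach buys: the paper's argument is economical given that Lemma~\ref{lem:frechet-technical-2} is already available, but needs the subsequence-plus-contradiction maneuver (and strictly speaking leaves the case $U=0$ implicit); your argument is more self-contained, relying only on $\EE(M_n^j)/\sqrt n\to\Gamma(j-1/2)/\Gamma(j)$ and Claim~\ref{claim:sum:gamma} from the paper, and it directly establishes the stronger uniform statement $\limsup_n\sup_T g_n(T)\le\varphi_k(2)$.
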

\begin{proof}

Consider a positive integer $k$ and let $T_n$ be an optimal solution of the problem \footnote{ Note that $T_n$ could be $\infty$ for some $n$.} 
\[
 \sup_{T \in \RR_+} \EE \left(X \vert X > T\right)\frac{\sum_{j=1}^{k} \PP(M_n^{j}>T)}{\sum_{j=1}^k \EE(M_n^{j})},
 \]
 and let $\{a_n\}_{n\in \NN}$ be the scaling sequence for $F$, which by Lemma~\ref{lem:fre-key} is equal to $a_n=\sqrt{n}$. 
 We will show that the sequence of positive real values $\{T_n/a_n\}_{n\in \NN}$ is bounded. Suppose this is not the case, that is, the sequence $\{T_n/a_n\}_{n \in \NN}$ is unbounded. 
 From the optimallity of $T_n$ together with equation \eqref{eqn:comp:ratio2:k}, we have  
 $\apx_k(F)=\liminf_{n\to \infty} A_n\cdot B_n$
 where 
 \begin{align*}
     A_n&=\frac{\EE \left(X \vert X > T_n\right)}{T_n}\cdot \frac{1}{\sum_{j=1}^k \EE(M_n^j/a_n)},\quad B_n=\sum_{j=1}^{k} \frac{T_n}{a_n} (1-G_{n,j}(T_n)),
 \end{align*}
and $G_{n,j}$ is the distribution of the $j$-th order statistic $M_n^j$.
 Notice that the sequence $\{A_n\}_{n\in \NN}$ converges to  $\Gamma(k)/\Gamma(k+1-1/2)$ when $n\to \infty$, thanks to Lemma~\ref{lem:fre-integral} together with the argument we have already seen in the proof of Lemma~\ref{lem:frechet-technical-2}. 
 On the other hand, for each $j \in \{ 1, \dots, k\}$, it holds that $1-G_{n,j}(T_n)$ is at most $1-G_{n,1}(T_n)=1-(1-1/T_n^2)^n$, by the definition of the CDF of the Pareto distribution with parameter $\alpha = 2$, and therefore 
 \[
 B_n \le \frac{T_n}{a_n} k \left( 1-\left(1-\frac{1}{T_n^2}\right)^n \right) \le \frac{T_n}{a_n} k \cdot \frac{n}{T^2_n} = k \frac{a_n}{T_n} \to 0,
 \]
where the second inequality follows using the Bernoulli's inequality, the fact that $a_n=\sqrt{n}$, and the limit holds since we are assuming that the sequence ${T_n/a_n}_{n\in \NN}$ is unbounded. 

 Therefore, putting all together we obtain that if the sequence $\{T_n/a_n \}_{n \in \NN}$ is unbounded, we get $\apx_k(F)=0$ for all positive integer $k$, which contradicts the optimality of the sequence $\{T_n\}_{n \in \NN}$. Then, the sequence $\{T_n/a_n\}_{n \in N}$ is bounded and therefore there exist a convergent subsequence, namely $\{U_n\}_{n \in \NN}$, and  $U \in \RR_+$ such that $\lim_{n \to \infty} U_n=U$.
Applying Lemma~\ref{lem:frechet-technical-2} to the sequence $\{U_n\}_{n \in \NN}$  we obtain that 
\[\apx_k(F)= \frac{\Gamma(k)}{\Gamma(k+1-1/2)} U \exp(- U^{-2}) \sum_{j=1}^k \sum_{s=j}^{\infty} \frac{U^{-s 2}}{s!}.\]
It remains to show that 
$U$ is a maximizer in the optimization problem \eqref{eq:phi-function} defining the value $\varphi_k(2)$.
By contradiction, suppose that there exist $\widetilde{U} \neq U$ such that 
\begin{equation}
\label{ineq:main:thm}
  \widetilde{U} \exp(-\widetilde{U}^{-2}) \sum_{j=1}^k \sum_{s=j}^{\infty} \frac{\widetilde{U}^{-s 2}}{s!} >  U \exp(-U^{-2}) \sum_{j=1}^k \sum_{s=j}^{\infty} \frac{U^{-s 2}}{s!}.   
\end{equation}
Consider the sequence $\{\widetilde{T}_n \}_{n\in \NN}$, where $\widetilde{T}_n=a_n \widetilde{U}$ for every positive integer $n$. Note that $\widetilde{T}_n \to \infty$ and $\widetilde{T}_n/a_n \to \widetilde{U}$, and thus, due to Lemma~\ref{lem:frechet-technical-2} together with the inequality \eqref{ineq:main:thm},  we have 
 \[
\lim_{n\to \infty}\EE \left(X \vert X > T_n\right)\frac{\sum_{j=1}^{k} \PP(M_n^{j}>T_n)}{\sum_{j=1}^k \EE(M_n^{j})} < \lim_{n\to \infty}\EE \left(X \vert X > \widetilde{T}_n\right)\frac{\sum_{j=1}^{k} \PP(M_n^{j}>\widetilde{T}_n)}{\sum_{j=1}^k \EE(M_n^{j})}.\]

The last inequality implies that there exists a positive integer $n_0$ such that for all $n>n_0$, 
 \[ 
 \EE \left(X \vert X > T_n\right)\frac{\sum_{j=1}^{k} \PP(M_n^{j}>T_n)}{\sum_{j=1}^k \EE(M_n^{j})} <  \EE \left(X \vert X > \widetilde{T}_n\right)\frac{\sum_{j=1}^{k} \PP(M_n^{j}>\widetilde{T}_n)}{\sum_{j=1}^k \EE(M_n^{j})},
 \]
 which is a contradiction since $T_n$ was taken optimally for each positive integer $n$. 
 Therefore we conclude that 
 $U$ is a maximizer in the optimization problem \eqref{eq:phi-function} defining the value $\varphi_k(2)$,
obtaining that $\apx_k(F)=\varphi_k(2)$. 
\end{proof}
\begin{lemma}
\label{lem:fechet-technical-1}
For every positive integer $k$, there exists a unique optimal solution $x_k$ for the optimization problem (\ref{eq:phi-function}) when $\alpha=2$, and furthermore we have that $(k+1)^{-1/2}\le x_k\le k^{-1/2}$.
In particular, we have
\[\varphi_k(2)=\frac{\Gamma(k)}{\Gamma(k+1-1/2)} \left(x_k^{-1}\calP\left(x_k^{-2},k-1\right)+kx_k\left(1-\calP\left(x_k^{-2},k\right)\right)\right).\]
\end{lemma}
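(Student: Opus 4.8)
The plan is to turn the maximization in \eqref{eq:phi-function} with $\alpha=2$ into a one–dimensional problem with a transparent first–order condition, and then locate the maximizer. Throughout write $p_m(y)=e^{-y}y^m/m!$ for the $\mathrm{Pois}(y)$ mass at $m$, so that $\calP(y,m)=\sum_{j=0}^m p_j(y)$ and $\tfrac{d}{dy}\calP(y,m)=-p_m(y)$.

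First I would put the objective in closed form. With $g(x)=x\exp(-x^{-2})\sum_{j=1}^{k}\sum_{s=j}^{\infty}x^{-2s}/s!$ we have $\varphi_k(2)=\tfrac{\Gamma(k)}{\Gamma(k+1/2)}\sup_{x>0}g(x)$. Swapping the two sums, $\sum_{j=1}^{k}\sum_{s=j}^\infty a_s=\sum_{s=1}^{k}s\,a_s+k\sum_{s=k+1}^\infty a_s$ with $a_s=x^{-2s}/s!$, and $\sum_{s=1}^{k}s\,a_s=x^{-2}\sum_{s=0}^{k-1}x^{-2s}/s!$; substituting $y=x^{-2}$ and collecting partial sums of $e^{y}$ gives
\[
g(x)=x^{-1}\calP(x^{-2},k-1)+k\,x\bigl(1-\calP(x^{-2},k)\bigr).
\]
Hence, once the unique maximizer $x_k$ is located, the claimed formula for $\varphi_k(2)$ drops out. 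Setting $y=x^{-2}$, which is an order–reversing bijection of $(0,\infty)$ onto itself, maximizing $g$ over $x>0$ is equivalent to maximizing $h(y):=y^{1/2}\calP(y,k-1)+k\,y^{-1/2}\bigl(1-\calP(y,k)\bigr)$ over $y>0$, with maximizers related by $x_k=y_k^{-1/2}$.

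Next I would differentiate. Using the identity $k\,p_k(y)=y\,p_{k-1}(y)$, the two Poisson-mass terms in $h'$ cancel and
\[
h'(y)=\tfrac12\,y^{-3/2}\,\psi(y),\qquad \psi(y):=y\,\calP(y,k-1)-k\bigl(1-\calP(y,k)\bigr),
\]
so everything comes down to the sign of $\psi$ on $(0,\infty)$. One checks $\psi(0^+)=0$ with $\psi(y)\sim y$ as $y\to 0^+$ (so $\psi>0$ near $0$) and $\psi(y)\to -k<0$ as $y\to\infty$. For the sign pattern I would use that the same identity yields $\psi'(y)=\calP(y,k-1)-2y\,p_{k-1}(y)$, hence $\psi'(y)/p_{k-1}(y)=\calP(y,k-1)/p_{k-1}(y)-2y$, where $\calP(y,k-1)/p_{k-1}(y)=\sum_{j=0}^{k-1}\tfrac{(k-1)!}{(k-1-j)!}\,y^{-j}$ is non-increasing on $(0,\infty)$ (strictly decreasing for $k\ge 2$), running from $+\infty$ (from the constant $1$ when $k=1$) down to $1$, while $2y$ strictly increases from $0$ to $\infty$. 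Thus $\psi'$ has a single zero $y^\star$, positive before and negative after; so $\psi$ increases then decreases, and combined with $\psi(0^+)=0$ and $\psi(\infty)=-k$ it has exactly one zero $y_k\in(0,\infty)$, with $\psi>0$ on $(0,y_k)$ and $\psi<0$ on $(y_k,\infty)$. Consequently $h$ strictly increases on $(0,y_k)$ and strictly decreases on $(y_k,\infty)$; being continuous and vanishing at both ends of $(0,\infty)$, it attains its maximum uniquely at $y_k$, which gives the unique maximizer $x_k=y_k^{-1/2}$ of $g$, hence of \eqref{eq:phi-function}.

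Finally, for $(k+1)^{-1/2}\le x_k\le k^{-1/2}$, equivalently $k\le y_k\le k+1$: since $\psi>0$ on $(0,y_k)$ and $\psi<0$ on $(y_k,\infty)$, it suffices to prove $\psi(k)\ge 0$ and $\psi(k+1)\le 0$. Substituting $1-\calP(y,k)=1-\calP(y,k-1)-p_k(y)$ and invoking Ramanujan's classical identity $e^{-n}\sum_{j=0}^{n-1}n^j/j!=\tfrac12-\theta_n\,p_n(n)$ with $\theta_n\in(\tfrac13,\tfrac12)$, one obtains $\psi(k)=k(1-2\theta_k)\,p_k(k)$ and, writing $q:=\PP(\mathrm{Pois}(k+1)=k)=p_k(k+1)=p_{k+1}(k+1)$, $\psi(k+1)=\tfrac12-q\bigl[(k+1)+(2k+1)\theta_{k+1}\bigr]$. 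Then $\theta_k<\tfrac12$ gives $\psi(k)\ge 0$, while $\theta_{k+1}>\tfrac13$ reduces $\psi(k+1)\le 0$ to $q\ge\tfrac{3}{10k+8}$; and the elementary bound $k!\le e\,(k/e)^k\sqrt{k}$ (a one-line induction) together with $(1+1/k)^k\ge 2$ give $q\ge 2e^{-2}k^{-1/2}$, whence $q(10k+8)\ge 2e^{-2}(10\sqrt{k}+8/\sqrt{k})\ge 36e^{-2}>3$ for all $k\ge 1$. I expect the step $\psi(k+1)\le 0$ to be the main obstacle, since it genuinely relies on the sharp Ramanujan lower bound $\theta_n>1/3$ rather than on crude Poisson-tail estimates; everything preceding it is a careful but essentially routine manipulation of the functions $\calP(\cdot,m)$.
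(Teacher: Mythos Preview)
Your proof is correct and takes a genuinely different route from the paper's, with a cleaner structural argument at the cost of a more computational endpoint.

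\textbf{Where you differ, and what each buys.} Both proofs reduce to analyzing the sign of (essentially) the same function: the paper's $f_k'(x)$ equals $-\psi(x^{-2})$ in your notation. For \emph{uniqueness of the critical point}, the paper writes $G_k(y)=f_k'(y^{-1/2})e^{y}$ as the difference of an exponential and a polynomial and then invokes a Rolle-type lemma (two roots of $H-L$ force a root of $H''-L''$, contradicting $H_k''>L_k''$). Afterwards it still has to verify, via a separate and lengthy computation of $b_k'(x_k)=\exp(x_k^{-2})f_k''(x_k)$, that the unique critical point is in fact a maximum. Your route is more direct: the observation that $\psi'(y)/p_{k-1}(y)=\calP(y,k-1)/p_{k-1}(y)-2y$ is strictly decreasing gives a single sign change of $\psi'$, hence unimodality of $\psi$, hence exactly one zero $y_k$; this simultaneously shows $h$ increases then decreases, so the second-derivative check is unnecessary. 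That is a real simplification.

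For the \emph{location} $y_k\in[k,k+1]$, both of you handle $\psi(k)\ge 0$ the same way (Ramanujan's $\theta_k<1/2$). For $\psi(k+1)\le 0$ the paper uses a two-line argument: the stochastic-comparison inequality $y\calP(y,k-1)\le k\calP(y,k)$ (applied at $y=k+1$) together with the Poisson median fact $\calP(k+1,k)<1/2$ immediately give $(k+1)\calP(k+1,k-1)\le k\calP(k+1,k)<k(1-\calP(k+1,k))$. Your derivation $\psi(k+1)=\tfrac12-q[(k+1)+(2k+1)\theta_{k+1}]$ with $q=p_k(k+1)=p_{k+1}(k+1)$ is correct, and the chain $\theta_{k+1}>1/3$, $k!\le e\sqrt{k}(k/e)^k$, $(1+1/k)^k\ge 2$ does yield $q(10k+8)\ge 36e^{-2}>3$; but this is noticeably more laborious than the paper's approach and, as you anticipated, relies on the sharp Ramanujan lower bound rather than the soft median/monotonicity facts the paper uses.
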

To prove the lemma we use the following two propositions. 
\begin{proposition}
	\label{lem:inequalities}	
	For every positive integer $k$ the following holds:
	\begin{enumerate}[label=\normalfont(\roman*)] 
	\item For every value $y\in (0,\infty)$ we have that $y\calP(y,k-1)\le k\calP(y,k)$. \label{poisson-ineq-0}
	\item $\calP(k,k)+\calP(k,k-1)>1$. \label{poisson-ineq-1}
	\item $k(1-\calP(k+1,k))-(k+1)\calP(k+1,k-1)>0$. \label{poisson-ineq-2}
	\end{enumerate}
\end{proposition}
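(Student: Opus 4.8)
Each of the three inequalities is an elementary statement about the tail of a Poisson random variable, and it helps to read it that way. Part \ref{poisson-ineq-0} follows by a direct rearrangement; part \ref{poisson-ineq-2} by matching Poisson point masses in pairs; and part \ref{poisson-ineq-1} is where the real content lies.

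For \ref{poisson-ineq-0}, write $y\,\calP(y,k-1)=\sum_{j=0}^{k-1}e^{-y}\frac{y^{j+1}}{j!}=\sum_{j=1}^{k}j\,e^{-y}\frac{y^{j}}{j!}$ and $k\,\calP(y,k)=\sum_{j=0}^{k}k\,e^{-y}\frac{y^{j}}{j!}$, so that
\[
k\,\calP(y,k)-y\,\calP(y,k-1)=\sum_{j=0}^{k}(k-j)\,e^{-y}\frac{y^{j}}{j!}\ \ge\ 0,
\]
since $k-j\ge 0$ for every $j\le k$ (in fact the inequality is strict for $y>0$).

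For \ref{poisson-ineq-2}, let $Y\sim\mathrm{Poisson}(k+1)$, so that $1-\calP(k+1,k)=\PP(Y\ge k+1)$ and $\calP(k+1,k-1)=\PP(Y\le k-1)=\sum_{i=1}^{k}\PP(Y=k-i)$. The plan is to compare, for each $1\le i\le k$, the right-tail mass $\PP(Y=k+i)$ with the left-tail mass $\PP(Y=k-i)$. Using the factorization $(k+i)!/(k-i)!=(k+1)(k-i+1)\prod_{m=1}^{i-1}\bigl((k+1)^{2}-m^{2}\bigr)$ one gets
\[
\frac{k\,\PP(Y=k+i)}{(k+1)\,\PP(Y=k-i)}=\frac{k}{k-i+1}\prod_{m=1}^{i-1}\frac{(k+1)^{2}}{(k+1)^{2}-m^{2}}\ \ge\ 1 \qquad (1\le i\le k),
\]
with equality only at $i=1$, since both the prefactor and every factor in the product are at least $1$ for $1\le i\le k$. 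Summing over $i=1,\dots,k$, and observing that $\PP(Y\ge k+1)=\sum_{i\ge 1}\PP(Y=k+i)$ carries the extra strictly positive terms with $i>k$, we obtain $k\,\PP(Y\ge k+1)>(k+1)\,\PP(Y\le k-1)$, which is \ref{poisson-ineq-2}.

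The main obstacle is \ref{poisson-ineq-1}. Let $X\sim\mathrm{Poisson}(k)$. Since $\calP(k,k)=\calP(k,k-1)+\PP(X=k)$, we have $\calP(k,k)+\calP(k,k-1)=2\,\PP(X\le k-1)+\PP(X=k)$, and subtracting the identity $1=\PP(X\le k-1)+\PP(X=k)+\PP(X\ge k+1)$ shows that \ref{poisson-ineq-1} is equivalent to $\PP(X\le k-1)>\PP(X\ge k+1)$, i.e.\ to $\calP(k,k-1)>\tfrac12(1-\PP(X=k))$. This is a classical strengthening of the fact that the median of $\mathrm{Poisson}(k)$ equals $k$: it is precisely the assertion that the coefficient $\theta_k$ in Ramanujan's expansion $\tfrac12 e^{k}=\sum_{j=0}^{k-1}\tfrac{k^{j}}{j!}+\theta_k\tfrac{k^{k}}{k!}$ satisfies $\theta_k<\tfrac12$ (equivalently, $\sum_{j>k}k^{j}/j!<\sum_{j<k}k^{j}/j!$), and I would invoke that result. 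The reason this step is delicate is that, unlike in \ref{poisson-ineq-2}, the termwise comparison of $\PP(X=k-i)$ with $\PP(X=k+i)$ — whose ratio equals $\prod_{m=1}^{i}\bigl(1+\tfrac{k-m(m-1)}{k^{2}}\bigr)$ — reverses once $i$ is of order $\sqrt{k}$, so the inequality survives only through a near-exact cancellation between the two tails and genuinely needs the sharp constant. A self-contained alternative is to note that $\lambda\mapsto 2\,\PP(\mathrm{Poisson}(\lambda)\le k-1)+\PP(\mathrm{Poisson}(\lambda)=k)$ is strictly decreasing, with derivative $-e^{-\lambda}\lambda^{k-1}(\lambda+k)/k!$, which reduces \ref{poisson-ineq-1} to evaluating this function at $\lambda=k$; but closing that off still requires a quantitative estimate of the same strength, which is why I regard \ref{poisson-ineq-1} as the crux.
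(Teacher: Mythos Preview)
Your proof is correct. Parts \ref{poisson-ineq-0} and \ref{poisson-ineq-2} take a genuinely different route from the paper, while part \ref{poisson-ineq-1} is essentially the same.

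For \ref{poisson-ineq-0}, the paper writes $\calP(y,k)=\Gamma(k+1,y)/\Gamma(k+1)$ and uses the integral bound $y\int_y^\infty x^{k-1}e^{-x}\,\mathrm dx\le \int_y^\infty x^{k}e^{-x}\,\mathrm dx$; your finite-sum identity $k\,\calP(y,k)-y\,\calP(y,k-1)=\sum_{j=0}^{k}(k-j)e^{-y}y^j/j!$ is more elementary and gives the same conclusion without the Gamma representation. For \ref{poisson-ineq-2}, the paper's argument is shorter: it applies \ref{poisson-ineq-0} at $y=k+1$ to get $(k+1)\calP(k+1,k-1)\le k\,\calP(k+1,k)$, and then invokes the Poisson-median fact $\calP(k+1,k)<\tfrac12$ to obtain $k\,\calP(k+1,k)<k(1-\calP(k+1,k))$. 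Your termwise pairing of $\PP(Y=k+i)$ with $\PP(Y=k-i)$ is longer but fully self-contained, avoiding the external median input. For \ref{poisson-ineq-1}, the paper does exactly what you do: it defines $\theta(k)$ via $\sum_{j<k}k^j/j!+\theta(k)k^k/k!=\tfrac12 e^k$, rewrites $\calP(k,k)+\calP(k,k-1)=1+(1-2\theta(k))e^{-k}k^k/k!$, and cites Choi's bound $\theta(k)<0.37$ (your $\theta_k<\tfrac12$ is the same statement with a slightly weaker constant). So you correctly identified \ref{poisson-ineq-1} as the step that rests on an external sharp estimate.
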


\begin{proof}
    Recall that $\calP(y,k)=\sum_{j=0}^k\frac{e^{-y}y^j}{j!} = \frac{\Gamma(k+1, y)}{\Gamma(k+1)}$.
	For every $y\in (0,\infty)$, the inequality in \ref{poisson-ineq-0} comes directly from the following chain:
	\begin{align*}
    y \calP(y,k-1) = y \frac{\Gamma(k, y)}{\Gamma(k)} = y \frac{\int_y^{\infty} x^{k-1} e^{-x} \dif y}{\Gamma(k)} \leq \frac{\int_y^{\infty} x^k e^{-x} \dif y}{\Gamma(k)} = \frac{\Gamma(k+1, y)}{\Gamma(k)} = k \frac{\Gamma(k+1, y)}{\Gamma(k+1)} = k \calP(y,k)
	\end{align*}
	% \begin{align*}
	% 	\calP(y,k-1)&=\sum_{j=0}^{k-1}\frac{e^{-y}y^j}{j!}=\sum_{j=1}^{k}\frac{e^{-y}y^{j-1}}{(j-1)!}=\frac{1}{y}\sum_{j=1}^{k}\frac{e^{-y}y^{j}j}{j!}\le \frac{k}{y}\sum_{j=1}^{k}\frac{e^{-y}y^{j}}{j!}=\frac{k}{y}\calP(y,k).
	% \end{align*}
	Next, we prove \ref{poisson-ineq-1}. 
	For every strictly positive integer $k$, let $\theta(k)$ be the value such that 
	\[\sum_{j=0}^{k-1}\frac{k^j}{j!}+\theta(k)\frac{k^k}{k!}=\frac{e^{k}}{2}.\]
	Therefore, we have that 
	\begin{align*}
	\calP(k,k)+\calP(k,k-1)&=\frac{e^{-k}k^k}{k!}+2\calP(k,k-1)\\
	                        &=\frac{e^{-k}k^k}{k!}+2\left(\frac{1}{2}-\theta(k)\frac{e^{-k}k^k}{k!}\right)=1+(1-2\theta(k))\frac{e^{-k}k^k}{k!}.
	\end{align*}
	Chow~\cite[Theorem 3]{choi1994medians} proved that for every $k$ we have that $1/3\le \theta(k)< 0.37$. 
	Therefore, we have that $\calP(k,k)+\calP(k,k-1)\ge 1+(1-2\cdot 0.37)\cdot e^{-k}k^k/k!>1$.
	
	Now we prove \ref{poisson-ineq-2}.
	By applying part \ref{poisson-ineq-0} with $y=k+1$ we get that $(k+1)\calP(k+1,k-1)\le k\calP(k+1,k)$.
	On the other hand, the median of a Poisson random variable of mean $k+1$ is at most $k+1$, and therefore $\calP(k+1,k)<1/2 \implies \calP(k+1,k)< 1-\calP(k+1,k)$.
	We conclude that $(k+1)\calP(k+1,k-1)\le k\calP(k+1,k)<k(1-\calP(k+1,k))$. 
\end{proof}

\begin{proposition}
\label{prop:unique-root}
Let $H$ and $L$ be two functions that are twice differentiable over $\RR_+$ and  satisfying the following conditions:
\begin{enumerate}[label=\normalfont(\alph*)] 
    \item $H(0)=L(0)$ and there exists $x\in (0,\infty)$ such that $H(x)=L(x)$.
%    \item $H$ and $L$ are both strictly increasing.
    \item $H''(y)>L''(y)$ for every $y\in (0,\infty)$.
\end{enumerate}
Then, $x$ is the unique solution in $(0,\infty)$ of the equation $H(y)=L(y)$.
\end{proposition}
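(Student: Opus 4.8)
The plan is to reduce everything to a statement about the single function $g = H - L$, which is twice differentiable on $\RR_+$. Hypothesis (a) says $g(0) = 0$ and $g(x) = 0$ for the given $x \in (0,\infty)$, while hypothesis (b) says $g''(y) > 0$ for every $y \in (0,\infty)$. So $g$ is a twice differentiable function that is strictly convex on $(0,\infty)$ — and, by continuity at the origin, strictly convex on all of $[0,\infty)$ — and it already has two zeros, namely $0$ and $x$. The goal is precisely to show that $x$ is its only zero in $(0,\infty)$.

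I would argue by contradiction via Rolle's theorem. Suppose there were some $y_0 \in (0,\infty)$ with $y_0 \ne x$ and $g(y_0) = 0$. Together with the zero at the origin, $g$ then vanishes at three distinct points $0 < p < q$, where $\{p,q\} = \{x, y_0\}$. Applying Rolle's theorem to $g$ on $[0,p]$ and on $[p,q]$ produces points $\xi_1 \in (0,p)$ and $\xi_2 \in (p,q)$ with $g'(\xi_1) = g'(\xi_2) = 0$. Applying Rolle's theorem once more to $g'$ on $[\xi_1,\xi_2]$ produces $\eta \in (\xi_1,\xi_2) \subseteq (0,\infty)$ with $g''(\eta) = 0$, i.e.\ $H''(\eta) = L''(\eta)$, contradicting (b). Hence no such $y_0$ exists, and $x$ is the unique solution of $H(y) = L(y)$ in $(0,\infty)$. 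Equivalently and more conceptually, a strictly convex function on an interval cannot have three distinct zeros, since the middle one would lie strictly below the chord joining the other two; here the three candidate zeros are $0$, $x$, and $y_0$, and $0$ is the smallest of them, so the middle one is positive and the chord inequality applies.

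I do not anticipate a genuine obstacle; the statement is elementary. The only point that requires a little care is that (b) is assumed only on the open half-line $(0,\infty)$, so I must ensure the point $\eta$ where $g''$ vanishes lies strictly inside $(0,\infty)$ — which it does, since $\eta \in (\xi_1,\xi_2)$ with $\xi_1 > 0$. If one prefers the convexity phrasing over the Rolle argument, the only extra ingredient is the standard fact that $g'' > 0$ on $(0,\infty)$ together with continuity of $g$ at $0$ forces $g$ to be strictly convex on the closed half-line $[0,\infty)$, so that the chord inequality can legitimately be applied to the triple $\{0,x,y_0\}$.
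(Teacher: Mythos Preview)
Your proof is correct and follows essentially the same approach as the paper: define $\phi = H - L$, assume a second positive root exists, apply Rolle's theorem twice to obtain two critical points of $\phi$, and then once more to force $\phi''$ to vanish at some point in $(0,\infty)$, contradicting (b). The paper's argument is identical up to notation.
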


\begin{proof}
Suppose that there is a value $\tilde x\ne x$ such that $H(\tilde x)=L(\tilde x)$ and assume without loss of generality that $x<\tilde x$.
Consider the function $\phi(y)=H(y)-L(y)$ over $\RR_+$.
Since $\phi(0)=H(0)-L(0)=0$ and $\phi(x)=H(x)-L(x)=0$, the mean value theorem guarantees the existence of a point $x_1\in (0,x)$ such that $\phi'(x_1)=0$, that is $H'(x_1)=L'(x_1)$.
On the other hand, since $\phi(\tilde x)=H(\tilde x)-L(\tilde x)=0=\phi(x)$, the mean value theorem guarantees the existence of a point $x_2\in (x,\tilde x)$ such that $\phi'(x_2)=0$, that is, $H'(x_2)=L'(x_2)$.
Then, we have that $x_1<x_2$ and $\phi'(x_1)=\phi'(x_2)$, and therefore the mean value theorem once more guarantees the existence of a point $x_3\in (x_1,x_2)$ such that $\phi''(x_3)=0$, that is, $H''(x_3)=L''(x_3)$, but this contradicts the condition $(b)$ satisfied by $H$ and $L$. 
\end{proof}

\begin{proof}[Proof of Lemma \ref{lem:fechet-technical-1}]
	Consider the optimization problem in
		\eqref{eq:phi-function}, and for every positive integer $k$ consider its objective function when $\alpha=2$,
		\begin{equation*}
		    f_k(x)=x \exp(-x^{-\alpha}) \sum_{j=1}^k \sum_{s=j}^{\infty} \frac{x^{-2s }}{s!}.
		\end{equation*}
	The derivative of $f_k$ is given by
	\[f'_k(y)=\exp(-y^{-2})\left(\left(1+\frac{2}{y^2}\right)\sum_{j=1}^k \sum_{s=j}^{\infty} \frac{y^{-2 s }}{s!}-2 \sum_{j=1}^k \sum_{s=j}^{\infty} \frac{y^{-2 s }}{(s-1)!}\right).\]
	\noindent By expanding the last summation we have that 
	\[
	2 \sum_{j=1}^k \sum_{s=j}^{\infty} \frac{y^{-2 s }}{(s-1)!}=
	\frac{2}{y^{2}}\sum_{j=1}^k \sum_{s=j}^{\infty} \frac{y^{-2 s }}{s!}+\frac{2}{y^{2}}\sum_{j=0}^{k-1}\frac{y^{-2 j}}{j!},
	\]
	and therefore we recover that 
	\begin{align*}
	f'_k(y)&=\exp(-y^{-2})\left(\left(1+\frac{2}{y^2}\right)\sum_{j=1}^k \sum_{s=j}^{\infty} \frac{y^{-2 s }}{s!}-\frac{2}{y^{2}}\sum_{j=1}^k \sum_{s=j}^{\infty} \frac{y^{-2 s }}{s!}-\frac{2}{y^{2}}\sum_{j=0}^{k-1}\frac{y^{-2 j}}{j!}\right)\\
	%&=\exp(-y^{-2})\left(\sum_{j=1}^k \sum_{s=j}^{\infty} \frac{y^{-2 s }}{s!}-\frac{2}{y^{2}}\sum_{j=0}^{k-1}\frac{y^{-2 j}}{j!}\right)\\
	&=\exp(-y^{-2})\left(\sum_{s=1}^{\infty} \sum_{j=1}^{\min(k,s)} \frac{y^{-2 s }}{s!}-\frac{2}{y^{2}}\sum_{j=0}^{k-1}\frac{y^{-2 j}}{j!}\right)\\
	%&=\exp(-y^{-2})\left(\sum_{s=1}^{k} \frac{y^{-2 s }}{(s-1)!}+k\sum_{s=k+1}^{\infty} \frac{y^{-2 s }}{s!}-\frac{2}{y^{2}}\sum_{j=0}^{k-1}\frac{y^{-2 j}}{j!}\right)\\
	&=\exp(-y^{-2})\left(\sum_{s=0}^{k-1} \frac{y^{-2(s+1) }}{s!}+k\sum_{s=k+1}^{\infty} \frac{y^{-2 s }}{s!}-\frac{2}{y^{2}}\sum_{j=0}^{k-1}\frac{y^{-2 j}}{j!}\right)\\
	&=\exp(-y^{-2})\left(k\sum_{s=k+1}^{\infty} \frac{y^{-2 s }}{s!}-\frac{1}{y^{2}}\sum_{j=0}^{k-1}\frac{y^{-2 j}}{j!}\right)\\
	&=k(1-\calP(y^{-2},k))-y^{-2}\calP(y^{-2},k-1)
	\end{align*}
%	We have that the functions $\calP(\cdot,k)$ and $\calP(\cdot,k-1)$ are monotone decreasing for $y\in (0,\infty)$, and therefore we have that $\frac{\partial f_k}{\partial y}(\alpha,\cdot)$ is monotone decreasing in $(0,\infty)$ for every $\alpha\in (1,\infty)$.
	In what follows we prove the existence of a root for the derivative.
	Consider $\bar y=k^{-1/2}$.
	Then, we have that 
    $f'_k(\bar y)= k\left(1-\calP\left(k,k\right)-\calP\left(k,k-1\right)\right)< 0$, 
	where the last inequality holds by Proposition~\ref{lem:inequalities}\ref{poisson-ineq-1}. 
	%We conclude that $x_k(\alpha)\le k^{-1/\alpha}$.\\
%	Consider now $\tilde y=k^{-1/2}$.
%	\begin{align*}
%	\frac{1}{\alpha-1}\frac{\partial f_k}{\partial y}(\alpha,\tilde y)&=\frac{k}{\alpha-1}(1-\calP(k^{\alpha/2},k))-k^{\alpha/2}\calP(k^{\alpha/2},k-1)
%	\end{align*}
%	Consider $\tilde y$ such that ${\tilde y}^{-2}=k+2$.
%	Then, we have that 
%	\[\exp(-\tilde y^{-2})g_k(\tilde y)=k(1-\calP_k(k+2))-(k+2)\calP_{k-1}(k+2)< k/2-(k+2)/2=-1.\]
%	Therefore, we conclude that there exists a unique value $x_k\in (0,\infty)$ such that $\exp(-x_k^{-2})g_k(x_k)=0$.
%	In particular, we have that $k(1-\calP_k(x_k^{-2}))-x_k^{-2}\calP_{k-1}(x_k^{-2})=0$ and $x_k\le 1/\sqrt{k}$.
%	By using Proposition \ref{lem:inequalities} with $y=x_k^{-2}$ we have that $k(1-\calP_k(x_k^{-2}))=x_k^{-2}\calP_{k-1}(x_k^{-2})\le k\calP_k(x_k^{-2})$,
%	and therefore we have that $1-\calP_k(x_k^{-2})\le \calP_k(x_k^{-2})$.
%	This implies that $\calP_k(x_k^{-2})\ge 1/2$.
%	Let $N$ be a random variable distributed according to a Poisson of mean $x_k^{-2}$. 
%	Since $\calP_k(x_k^{-2})\ge 1/2$ we have that the median of $N$ is at most $k$.
%	It is known that the median of a Poisson random variable of mean $\lambda$ is larger than $\lambda-1$ and therefore we have that $x_k^{-2}-1\le k$.
%	We conclude that $x_k\ge 1/\sqrt{k+1}$. 
	Consider now $\tilde y$ such that $\bar y^{-2}=k+1$.
	Then, we have that
	$f'_k(\tilde y)=k(1-\calP(k+1,k))-(k+1)\calP(k+1,k-1)>0$,
	where the last inequality holds by Proposition \ref{lem:inequalities}\ref{poisson-ineq-2}.
	Therefore, the function $f'_k$ changes sign in the interval $[(k+1)^{-1/2},k^{-1/2}]$ and by the continuity of $f_k'$ we conclude that there exists $x_k\in [(k+1)^{-1/2},k^{-1/2}]$ such that $f_k'(x_k)=0$.
	We now show that in fact $x_k$ is the unique root of $f'_k$.
	Since $y^{-2}$ is a bijective function from $(0,\infty)$ to $(0,\infty)$, proving that $f_k'$ has a single root is equivalent to show that the function defined by $G_k(y)=f'_k(y^{-1/2})\cdot \exp(y)$ has a unique root.
	First observe that for every $y\in \RR_+$ we have that 
	\begin{align*}
	G_k(y)&=ke^y-ke^y\calP(y,k)-ye^y\calP(y,k-1)\\
	 %     &=ke^y-k\sum_{j=0}^k\frac{y^j}{j!}-\sum_{j=0}^{k-1}\frac{y^{j+1}}{j!}\\
	      &=k(e^y-1)-k\sum_{j=1}^k\frac{y^j}{j!}-\sum_{j=0}^{k-1}\frac{y^{j+1}}{j!}=k(e^y-1)-\sum_{j=0}^{k-1}\frac{y^{j+1}}{j!}\left(\frac{k}{j+1}-1\right).
	\end{align*}
	Consider the functions over $\RR_+$ defined by $H_k(y)=k(e^y-1)$ and $L_k(y)=G_k(y)-H_k(y)$.
	Observe that $H_k(0)=L_k(0)=0$ and $H_k(x_k^{-2})=L(x_k^{-2})$, since $H_k(x_k^{-2})-L_k(x_k^{-2})=G_k(x_k^{-2})=f_k'(x_k)\exp(x_k^{-2})=0$.
	Furthermore, we claim that $H_k''(y)> L_k''(y)$ for every $y\in \RR_+$.
	By Proposition \ref{prop:unique-root} we conclude that $x_k^{-2}$ is the unique solution in $(0,\infty)$ of the equation $H_k(y)=L_k(y)$, which is equivalent to $x_k$ being the unique solution in $(0,\infty)$ of $f'_k(y)=0$.
	Observe that for every $y\in \RR_+$ we have that 
	\begin{align*}
	L_k(y)&=\sum_{j=0}^{k-1}\frac{y^{j+1}}{j!}\left(\frac{k}{j+1}-1\right)=y(k-1)+\sum_{j=0}^{k-2}\frac{y^{j+2}}{(j+1)!}\left(\frac{k}{j+2}-1\right).    
	\end{align*}
	Therefore, for every $y\in \RR_+$ we have that 
	\begin{align*}
	L_k''(y)&=\sum_{j=0}^{k-2}\frac{y^{j}(j+2)(j+1)}{(j+1)!}\left(\frac{k}{j+2}-1\right)\\
	&=\sum_{j=0}^{k-2}\frac{y^{j}}{j!}\left(k-(j+2)\right)\le k\sum_{j=0}^{k-2}\frac{y^{j}}{j!}<ke^y=H_k''(y),    
	\end{align*}
	where the strict inequality holds by the Taylor expansion of $e^y$.
%	Consider now $\tilde y=k^{-1/2}$.
%	\begin{align*}
%		\frac{1}{\alpha-1}\frac{\partial f_k}{\partial y}(\alpha,\tilde y)&=\frac{k}{\alpha-1}(1-\calP(k^{\alpha/2},k))-k^{\alpha/2}\calP(k^{\alpha/2},k-1)
%	\end{align*}

	To conclude that $x_k$ maximizes $f_k$ it is sufficient to show that the second derivative of $f_k$ in $x_k$ is negative.
	Observe that this is sufficient since $f_k$ is non-negative over $\RR_+$ and $\lim_{y\to 0}f_k(y)=\lim_{y\to \infty}f_k(y)=0$.
	Consider the function $b_k(y)=f_k'(y)\exp(y^{-2})$.
	In particular, we have that 
	$$b_k(y)=k\sum_{s=k+1}^{\infty} \frac{y^{-2 s }}{s!}-\frac{1}{y^{2}}\sum_{j=0}^{k-1}\frac{y^{-2 j}}{j!}$$
	for every $y\in \RR_+$.
	Furthermore, we have that $b_k'(x_k)=\exp(x_k^{-2})f_k''(x_k)-2x_k^{-3}\exp(x_k^{-2})f'_k(x_k)=\exp(x_k^{-2})f''_k(x_k)$, since $f'_k(x_k)=0$.
	Therefore, it remains to check that $b_k'(x_k)<0$ since this implies that $f_k''(x_k)<0$ by the positivity of the exponential $\exp(x_k^{-2})$.
	Rearranging terms, we get that 
	\begin{align*}
		b_k'(x_k)&=-2k\sum_{j=k+1}^{\infty}\frac{x_k^{-2j-1}}{(j-1)!}+2\sum_{j=0}^{k-1}\frac{x_k^{-2j-3}}{j!}(j+1)\\
		&=-2kx_k^{-3}\sum_{j=k}^{\infty}\frac{x_k^{-2j}}{j!}+2\sum_{j=1}^{k-1}\frac{x_k^{-2j-3}}{(j-1)!}+2\sum_{j=0}^{k-1}\frac{x_k^{-2j-3}}{j!}\\
		&=-2kx_k^{-3}\sum_{j=k}^{\infty}\frac{x_k^{-2j}}{j!}+2x_k^{-5}\sum_{j=0}^{k-2}\frac{x_k^{-2j}}{j!}+2x_k^{-3}\sum_{j=0}^{k-1}\frac{x_k^{-2j}}{j!}\\
		&=-2kx_k^{-3}\sum_{j=k}^{\infty}\frac{x_k^{-2j}}{j!}+(2x_k^{-3}+2x_k^{-1})x_k^{-2}\sum_{j=0}^{k-1}\frac{x_k^{-2j}}{j!}-2\frac{x_k^{-2k-3}}{(k-1)!}\\
		&=-2kx_k^{-3}\sum_{j=k}^{\infty}\frac{x_k^{-2j}}{j!}+(2x_k^{-3}+2x_k^{-1})k\sum_{j=k+1}^{\infty}\frac{x_k^{-2j}}{j!}-2\frac{x_k^{-2k-3}}{(k-1)!},
	\end{align*}
	where the last equality holds since $x_k$ satisfies that $f'_k(x_k)=0$.
	Therefore, we get that
	\begin{align*}
		x_kb_k'(x_k)&=-2k\frac{x_k^{-2k-2}}{k!}+2k\sum_{j=k+1}^{\infty}\frac{x_k^{-2j}}{j!}-2\frac{x_k^{-2k-2}}{(k-1)!}\\
		&=-4k\frac{x_k^{-2k-2}}{k!}+2x_k^{-2}\sum_{j=0}^{k-1}\frac{x_k^{-2j}}{j!}\le 2x_k^{-2}\left(-2\frac{k^{k+1}}{k!}+\sum_{j=0}^{k-1}\frac{(k+1)^{j}}{j!}\right),
	\end{align*}
	the second equality holds by using that $f'_k(x_k)=0$ and the inequality holds since $k\le x_k^{-2}\le k+1$ for every positive integer $k$.
	To finish we have for every positive integer $k$ it holds that
	\[\sum_{j=0}^{k-1}\frac{(k+1)^{j}}{j!}\le \frac{1}{2}e^{k+1}-\frac{(k+1)^k}{k!}<2\frac{k^{k+1}}{k!}.\]
	From this we get directly that $b_k'(x_k)$ and therefore $f_k''(x_k)<0$ for every positive integer $k$.   
\end{proof}

\begin{proof}[Proof of Theorem \ref{thm:tightness}]
We first observe that thanks to Lemma~\ref{lem:frechet-eq}, showing Theorem~\ref{thm:tightness}  is equivalent to prove that if $F$ is the Pareto distribution with parameter $\alpha=2$, then  for every $\varepsilon>0$ there exists a positive integer $k_{\varepsilon}$ such that for every $k\ge k_{\varepsilon}$ it holds that $\varphi_k(2)\le 1-(1-\varepsilon)/\sqrt{2\pi k}$. In what follows, we prove the latter statement. 
Notice that thanks to Lemma \ref{lem:fechet-technical-1} we have that there exists an optimal solution $x_k\in [(k+1)^{-1/2},k^{-1/2}]$ such that
	\begin{align*}
		\varphi_k(2)&=\frac{\Gamma(k)}{\Gamma(k+1/2)}\left(\frac{1}{x_k}\calP_{k-1}(x_k^{-2})+kx_k(1-\calP_k(x_k^{-2}))\right)\\
		&=\frac{\Gamma(k)\sqrt{k-1}}{\Gamma(k+1/2)}\left(\frac{1}{x_k\sqrt{k-1}}\calP_{k-1}(x_k^{-2})+\frac{kx_k}{\sqrt{k-1}}(1-\calP_k(x_k^{-2}))\right).
	\end{align*}
	\begin{claim}
	\label{claim:gamma-upper}
	For every positive integer $k$ we have that $\displaystyle \frac{\Gamma(k)\sqrt{k-1}}{\Gamma(k+1/2)}\le 1$.
	\end{claim}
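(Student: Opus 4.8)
The plan is to derive the claim directly from Gautschi's inequality for the Gamma function, which has already been invoked in the proof of Theorem~\ref{thm:apx:multi}: namely, $\Gamma(x+1) > x^{1-s}\,\Gamma(x+s)$ for every $s\in(0,1)$ and every $x\ge 1$. Rewriting the claim, it suffices to show $\Gamma(k)\sqrt{k-1}\le \Gamma(k+1/2)$, equivalently $\Gamma(k+1/2)/\Gamma(k)\ge \sqrt{k-1}$.

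First I would dispose of the case $k=1$, which is immediate: the left-hand side of the claimed inequality is $\Gamma(1)\sqrt{0}=0$, while $\Gamma(3/2)=\sqrt{\pi}/2>0$, so the ratio is $0\le 1$.

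For $k\ge 2$, I would apply Gautschi's inequality with $x=k-\tfrac12$ (so that $x\ge \tfrac32\ge 1$, satisfying the hypothesis) and $s=\tfrac12$. This gives $\Gamma(k+\tfrac12)=\Gamma(x+1) > x^{1/2}\,\Gamma(x+\tfrac12) = (k-\tfrac12)^{1/2}\,\Gamma(k)$. Since $k-\tfrac12 > k-1$ and $t\mapsto\sqrt{t}$ is increasing, $(k-\tfrac12)^{1/2} > (k-1)^{1/2}$, hence $\Gamma(k+\tfrac12) > \sqrt{k-1}\,\Gamma(k)$, which rearranges to $\Gamma(k)\sqrt{k-1}/\Gamma(k+1/2) < 1$, establishing the claim.

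I do not anticipate any genuine obstacle here; the only point requiring care is that Gautschi's inequality is stated for $x\ge 1$, which is precisely why the $k=1$ case must be peeled off separately (there $x=\tfrac12\notin[1,\infty)$), though in that case the bound holds trivially because $\sqrt{k-1}$ vanishes. One could alternatively absorb both cases by noting $\sqrt{k-1}\le\sqrt{k-1/2}$ always and only invoking Gautschi when $k\ge 2$.
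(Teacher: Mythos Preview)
Your proof is correct and follows essentially the same route as the paper, namely Gautschi's inequality for the Gamma function. The only difference is the parameterization: the paper applies the upper-bound side $\Gamma(x+1)/\Gamma(x+s)\le(x+1)^{1-s}$ at $x=k$, $s=\tfrac12$ and then uses $\Gamma(k)=\Gamma(k+1)/k$ to reach $\sqrt{k^2-1}/k<1$, whereas you apply the lower-bound side (exactly the form already quoted earlier in the paper) at $x=k-\tfrac12$, obtaining $\Gamma(k+\tfrac12)>\sqrt{k-\tfrac12}\,\Gamma(k)>\sqrt{k-1}\,\Gamma(k)$ directly; your version is marginally cleaner and self-contained with respect to the inequality already cited.
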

    \begin{proof}
    Thanks to the Gautschi inequality we have that $\frac{\Gamma(k+1)}{\Gamma(k+1/2)}\le \sqrt{k+1}$.
	Therefore, we have that 
	\[\frac{\Gamma(k)\sqrt{k-1}}{\Gamma(k+1/2)}=\frac{\Gamma(k+1)}{\Gamma(k+1/2)}\cdot \frac{\sqrt{k-1}}{k}\le \sqrt{k+1}\cdot \frac{\sqrt{k-1}}{k}=\frac{\sqrt{k^2-1}}{k}<1,\]
	which concludes the proof. 
    \end{proof}
	\noindent By using Claim \ref{claim:gamma-upper} we have that for every positive integer $k\ge 2$
	\begin{align*}
		\varphi_k(2)&\le\frac{1}{x_k\sqrt{k-1}}\calP_{k-1}(x_k^{-2})+\frac{kx_k}{\sqrt{k-1}}(1-\calP_k(x_k^{-2}))\\
		&\le \frac{\sqrt{k+1}}{\sqrt{k-1}}\calP_{k-1}(x_k^{-2})+\frac{\sqrt{k}}{\sqrt{k-1}}(1-\calP_k(x_k^{-2}))\\
		&\le \frac{\sqrt{k+1}}{\sqrt{k-1}}\Big(\calP_{k-1}(x_k^{-2})+1-\calP_k(x_k^{-2})\Big),
	\end{align*}
	where the second inequality follows thanks to Lemma \ref{lem:fechet-technical-1}.
	We recall that $\calP_k(x_k^{-2})$ is the probability that a Poisson random variable of mean $x_k^{-2}$ is at most $k$, and therefore we have that 
	\[\calP_{k-1}(x_k^{-2})+1-\calP_k(x_k^{-2})=1-\frac{e^{-x_k^{-2}}x_k^{-2k}}{k!},\]
	and this equality together with the above inequality imples that 
	\[\varphi_k(2)\le \left(\frac{k+1}{k-1}\right)^{1/2}\left(1-\frac{e^{-x_k^{-2}}x_k^{-2k}}{k!}\right).\]
	Then, we get that 
	\begin{equation*}
		(1-\varphi_k(2))\sqrt{2 k \pi}\ge \left(1-\left(\frac{k+1}{k-1}\right)^{1/2}\right)\sqrt{2 k \pi} +\left(\frac{k+1}{k-1}\right)^{1/2}\frac{e^{-x_k^{-2}}x_k^{-2k}}{k!}\sqrt{2 k \pi}.
	\end{equation*}
	Note that the function $e^{-y}y^k$ is decreasing in the interval $[k,\infty)$ and since $k\le x_k^{-2}\le k+1$ we have that $e^{-x_k^{-2}}x_k^{-2k}\ge e^{-(k+1)}(k+1)^k$.
	Therefore, we have that 
	\begin{align*}
		&(1-\varphi_k(2))\sqrt{2 k \pi}\\
		&\ge \left(1-\left(\frac{k+1}{k-1}\right)^{1/2}\right)\sqrt{2 k \pi} +\left(\frac{k+1}{k-1}\right)^{1/2}\frac{e^{-(k+1)}(k+1)^k}{k!}\sqrt{2 k \pi}\\
		&= \left(1-\left(\frac{k+1}{k-1}\right)^{1/2}\right)\sqrt{2 k \pi} +\left(\frac{k+1}{k-1}\right)^{1/2}\cdot \frac{e^{-k}k^k}{k!}\sqrt{2 k \pi}\cdot e^{-1}\cdot \left(1+\frac{1}{k}\right)^k.
	\end{align*}
	We conclude by observing that the following limits hold:
	\begin{align*}\lim_{k\to \infty}\left(1-\left(\frac{k+1}{k-1}\right)^{1/2}\right)\sqrt{2 k \pi}=0,&\quad \lim_{k\to \infty}\frac{e^{-k}k^k}{k!}\sqrt{2 k \pi}=1,\\
	\lim_{k\to \infty}\left(\frac{k+1}{k-1}\right)^{1/2}=1,&\quad\lim_{k\to \infty}\left(1+\frac{1}{k}\right)^k=e,
	\end{align*}
	where the second limit is given by the Stirling approximation.
\end{proof}

\section{Revenue Guarantees: Proof of Theorem \ref{thm:stability}}
\label{app:VVEV}

In the following, we prove Theorem~\ref{thm:stability}.
Recall that that a pair $(V,\eta)$ {\it smoothly represents} a distribution $F$ if $(V,\eta)$ is a Von Mises representation of $F$ and $\lim_{t \to \infty} \eta'(t)=0$ (see Lemma \ref{prop:Gumbel-prop}).
We first state the following simple lemma.
\begin{lemma}
\label{lem:nice-virtual}
Let $F$ be a distribution satisfying the extreme value condition. 
\begin{enumerate}[label=\normalfont(\roman*)]
	\item If $F$ has an extreme type Fr\'echet of parameter $\alpha\in (1,\infty)$ and it satisfies the asymptotic regularity condition, then $\lim_{t\to \infty}\phi_F(t)/t=1-1/\alpha.$\label{regular-a}
	\item If $F$ is a Von Mises function with auxiliary function $\aux$, then for every $t\in (\omega_0(F),\omega_1(F))$ we have that $\phi_F(t)=t-\aux(t)$.\label{regular-b} 
\end{enumerate}
\end{lemma}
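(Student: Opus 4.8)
The plan is to obtain both statements directly by unwinding the definition $\phi_F(t)=t-(1-F(t))/f(t)$; no machinery beyond the hypotheses is needed.

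For part \ref{regular-a}, I would simply divide by $t$. For all sufficiently large $t$ (recall that a distribution with extreme type Fr\'echet satisfies $\omega_1(F)=\infty$ and has positive density in a left–neighborhood of $\omega_1(F)$), one may write
\[
\frac{\phi_F(t)}{t}=1-\frac{1-F(t)}{t\,f(t)}.
\]
The asymptotic regularity condition states precisely that $(1-F(t))/(t f(t))\to 1/\alpha$ as $t\to\infty$, so $\phi_F(t)/t\to 1-1/\alpha$, which is the claim. This part is immediate once the definition is expanded.

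For part \ref{regular-b}, the key step is to differentiate the Von Mises representation. By definition there exist $z_0$, $\theta>0$, and an absolutely continuous, strictly positive auxiliary function $\aux$ with $1-F(t)=\theta\exp\bigl(-\int_{z_0}^{t}\aux(s)^{-1}\,ds\bigr)$ for $t>z_0$. Since $\aux$ is continuous and positive, $1/\aux$ is continuous, so the right-hand side is differentiable with derivative $-(1-F(t))/\aux(t)$, while the left-hand side has derivative $-f(t)$ (using that $F$ is absolutely continuous, as assumed throughout, with density $f=F'$). Equating the two derivatives yields $f(t)=(1-F(t))/\aux(t)$, i.e. $(1-F(t))/f(t)=\aux(t)$, and hence $\phi_F(t)=t-\aux(t)$.

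I expect the only point requiring a little care to be the domain of validity: the Von Mises representation is guaranteed only on $(z_0,\infty)$, whereas the statement asserts the identity on all of $(\omega_0(F),\omega_1(F))$. I would resolve this by recalling that in the Von Mises representation the point $z_0$ may be taken arbitrarily close to $\omega_0(F)$ — or, alternatively, simply observe that the identity on $(z_0,\omega_1(F))$ is all that the asymptotic arguments of Theorem~\ref{thm:stability} actually use. Beyond this bookkeeping and the routine justification of termwise differentiation under absolute continuity, there is no genuine obstacle; the lemma is essentially a restatement of the defining identities.
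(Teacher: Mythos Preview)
Your proposal is correct and follows essentially the same approach as the paper: both parts are obtained by directly expanding the definition $\phi_F(t)=t-(1-F(t))/f(t)$, using the asymptotic regularity condition for \ref{regular-a} and differentiating the Von Mises representation to get $f(t)=(1-F(t))/\aux(t)$ for \ref{regular-b}. Your remark about the domain $(z_0,\infty)$ versus $(\omega_0(F),\omega_1(F))$ is a fair bookkeeping point that the paper simply elides (it writes $\omega_0(F)$ as the lower limit without comment), and as you note, only the tail behavior is actually used downstream.
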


\begin{proof}
If $F$ satisfies the asymptotic regularity condition it follows directly that
$\lim_{t\to \infty}\phi_F(t)/t=1-\lim_{t\to \infty}(1-F(t))/(tf(t))=1-1/\alpha.$
That proves \ref{regular-a}. 
Let $F$ be a Von Mises function with auxiliary function $\aux$ and $t\in (\omega_0(F),\omega_1(F))$. 
By the Von Mises representation~(\ref{V-M:repr}) we have that 
\[f(t)=\theta\exp\left( - \int\limits_{\omega_0(F)}^x \frac{1}{\aux(u)} \mathrm du  \right)\cdot \frac{1}{\aux(t)}=\frac{1-F(t)}{\aux(t)},\]
%\vlcom{Should $\eta$ above be $\theta$? Or $\eta^\star$? I think it definitely isn't $\eta$, since we use $\eta$ as a function $\eta(t)$.}
and therefore $\phi_F(t)=t-\aux(t)$.
That concludes \ref{regular-b}.
\end{proof}

%
%\begin{example}
%In the case of a Pareto distribution with parameter $\alpha\in (1,\infty)$, we have that $\phi_{P_{\alpha}}(t)=1-1/\alpha$ for every $t\in \RR_+$. 
%For the exponential distribution $F$ of parameter $\lambda$, recall that $\phi_F(t)=t-1/\lambda$ and the constant function $1/\lambda$ is the auxiliary function in its Von Mises representation.
%\end{example}

%To prove Theorem~\ref{thm:stability} we need a result from the extreme value theory stating that for distributions that exhibit the same tail behavior, if one of them has a certain extreme type, the other has an extreme type as well, and furthermore, it belongs to the same family. 
We say that two distributions $F$ and $G$ are {\it tail equivalent} if $\omega_1(F)=\omega_1(G)=\omega_1$ and there is a $\beta>0$ such that
$\lim_{x\to \omega_1}(1-G(x))/(1-F(x))=\beta.$
We call the value $\beta$ the {\it asymptotic tail ratio} between $F$ and $G$.
We state the following result from the extreme value theory literature.
\begin{theorem}[see, e.g., \cite{libroextremo}]
\label{thm:tail-behavior}
%(Proposition 1.19, p.67)
Let $F$ and $G$ be two tail equivalent distributions with asymptotic tail ratio $\beta$, and suppose that $F$ satisfies the extreme value condition.
Then, the following holds:
\begin{enumerate}[label=\normalfont(\alph*)] 
	\item If $F$ has extreme type Fr\'echet of parameter $\alpha\in (1,\infty)$, we have that $G$ has an extreme type distribution equal to $\Phi_{\alpha}(\beta^{-1/\alpha} t)$, where $\Phi_{\alpha}$ is the Fr\'echet distribution of parameter $\alpha$.\label{thm:tail-behavior-frechet}
	\item If $F$ has extreme type Gumbel, we have that $G$ has an extreme type distribution equal to $\Lambda(t-\ln(\beta))$, where $\Lambda$ is the Gumbel distribution. \label{thm:tail-behavior-gumbel}
\end{enumerate}
\end{theorem}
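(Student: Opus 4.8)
The plan is to reduce the statement to the classical \emph{Poisson approximation} dictionary for domains of attraction: for any distribution $D$ and any real sequence $\{x_n\}_{n\in\NN}$, one has $D^n(x_n)\to e^{-\tau}$ (with the conventions $e^{-\infty}=0$ and $e^{-0}=1$) if and only if $n(1-D(x_n))\to\tau\in[0,\infty]$. This equivalence is immediate after taking logarithms: whenever $D^n(x_n)$ converges in $[0,1)$ or $n(1-D(x_n))$ converges in $[0,\infty)$ one has $D(x_n)\to 1$, whence $\ln D(x_n)=-(1-D(x_n))+O\bigl((1-D(x_n))^2\bigr)$; the case $\tau=\infty$ follows from $1-u\le e^{-u}$.

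First I would translate the hypothesis on $F$. Since $F$ lies in the domain of attraction of $H$ (with $H=\Phi_\alpha$ in case (a) and $H=\Lambda$ in case (b)), there are a positive scaling sequence $\{a_n\}_{n\in\NN}$ and a shifting sequence $\{b_n\}_{n\in\NN}$ — with $b_n\equiv 0$ in the Fr\'echet case by Lemma~\ref{lem:fre-key} — such that $F^n(a_nt+b_n)\to H(t)$ for every $t\in\RR$. By the dictionary above this is equivalent to
\[
n\bigl(1-F(a_nt+b_n)\bigr)\;\longrightarrow\;-\ln H(t)\;=:\;\tau(t)\qquad\text{for every }t\in\RR,
\]
where $\tau(t)\in[0,\infty]$, with $\tau(t)=\infty$ precisely where $H(t)=0$ and $\tau(t)=0$ precisely where $H(t)=1$.

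Next I would check that, for each fixed $t$, the argument $x_n:=a_nt+b_n$ tends to the common right endpoint $\omega_1$ as $n\to\infty$. In the Fr\'echet case this is clear for $t>0$ since $b_n\equiv 0$, $a_n\to\infty$ and $\omega_1=\infty$; for $t\le 0$ one has $x_n\to-\infty$, so $1-F(x_n),1-G(x_n)\to 1$ and hence $F^n(x_n),G^n(x_n)\to 0=\Phi_\alpha(\beta^{-1/\alpha}t)$ directly. In the Gumbel case the admissible sequences are known to satisfy $b_n\to\infty$ with $a_n=o(b_n)$ when $\omega_1=\infty$, and $b_n\to\omega_1$ with $a_n=o(\omega_1-b_n)$ when $\omega_1<\infty$ (cf.\ Lemmas~\ref{lem:von-mises-prop} and~\ref{prop:Gumbel-prop} for the former), and in either case $x_n\to\omega_1$. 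Now tail equivalence applies: since $x_n\to\omega_1$ and $(1-G)/(1-F)\to\beta$ along every sequence approaching $\omega_1$,
\[
n\bigl(1-G(x_n)\bigr)\;=\;n\bigl(1-F(x_n)\bigr)\cdot\frac{1-G(x_n)}{1-F(x_n)}\;\longrightarrow\;\beta\,\tau(t)
\]
for every $t$ (with the convention $\beta\cdot\infty=\infty$), so the dictionary gives $G^n(a_nt+b_n)\to e^{-\beta\tau(t)}$.

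Finally I would identify the limit in each family. For $H=\Phi_\alpha$ we have $\tau(t)=t^{-\alpha}$ on $t>0$, hence $e^{-\beta\tau(t)}=e^{-\beta t^{-\alpha}}=\exp\!\bigl(-(\beta^{-1/\alpha}t)^{-\alpha}\bigr)=\Phi_\alpha(\beta^{-1/\alpha}t)$, which is again a nondegenerate extreme value law; thus $G$ satisfies the extreme value condition with the asserted extreme type (equivalently, after absorbing the affine factor into fresh sequences, with extreme type Fr\'echet of the same parameter $\alpha$), proving (a). For $H=\Lambda$ we have $\tau(t)=e^{-t}$, hence $e^{-\beta\tau(t)}=e^{-\beta e^{-t}}=\exp\!\bigl(-e^{-(t-\ln\beta)}\bigr)=\Lambda(t-\ln\beta)$, proving (b). I expect the middle step to be the main obstacle: verifying $a_nt+b_n\to\omega_1$ and ensuring the tail-ratio limit is legitimately applicable at the boundary values $\tau(t)\in\{0,\infty\}$ (which arise only in the Fr\'echet regime, where one must separately dispatch $t\le 0$ as above); the remaining manipulations are routine bookkeeping with the $n(1-D(x_n))\leftrightarrow D^n(x_n)$ correspondence.
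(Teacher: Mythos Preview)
The paper does not prove this theorem; it is stated as a known result from the extreme value literature with a citation to \cite{libroextremo}. Your proof is correct and follows the standard route: translate the domain-of-attraction hypothesis on $F$ into $n(1-F(a_nt+b_n))\to -\ln H(t)$ via the Poisson/logarithm dictionary, multiply by the tail ratio $(1-G)/(1-F)\to\beta$ (legitimate because $a_nt+b_n\to\omega_1$), translate back, and identify the resulting limit as $\Phi_\alpha(\beta^{-1/\alpha}t)$ or $\Lambda(t-\ln\beta)$.

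One small remark: in the Gumbel case with finite right endpoint you invoke $a_n=o(\omega_1-b_n)$, which is indeed the correct analogue of $a_n=o(b_n)$ but is not covered by the paper's Lemma~\ref{lem:von-mises-prop} (which is stated only for $\omega_1(V)=\infty$). Since the theorem itself is quoted from outside the paper, relying on this standard companion fact is entirely appropriate; you have correctly identified this as the one place where some care is needed.
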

%\vvcom{hablar un poco de que son traslaciones, pero viven en la misma familia, etc}
%Given a valuation $v$ distributed according to $G$, we denote by $G_{\phi}$ the distribution of the virtual valuation $\phi(v)$.
The following lemma, together with Theorem~\ref{thm:tail-behavior} are the main tools to prove Theorem~\ref{thm:stability}.

\begin{lemma}
\label{lem:virtual-tail}
Let $F$ be a distribution that satisfies the extreme value condition. 
Then, the following holds.
\begin{enumerate}[label=\normalfont(\alph*)] 
	\item If $F$ has extreme type Fr\'echet of parameter $\alpha\in (1,\infty)$, and if its satisfies the asymptotic regularity condition, then $F$ and $F_{\phi}$ are tail equivalent with asymptotic tail ratio 
\begin{equation*}
\beta_{\alpha}=\lim_{t\to \infty}\frac{1-F_{\phi}(t)}{1-F(t)}=\left(1-\frac{1}{\alpha}\right)^{\alpha}.
\end{equation*}\label{virtual-tail-a}
	\item If $F$ has extreme type Gumbel, then $F$ and $F_{\phi}$ are tail equivalent with asymptotic tail ratio 
\begin{equation*}
\beta=\lim_{t\to \infty}\frac{1-F_{\phi}(t)}{1-V(t)}=\frac{\eta^{\star}}{e},
\end{equation*}
%\vlcom{Again, $\eta$ here should be $\eta^\star$, right?}
where $(V,\eta)$ is a Von Mises representation of $F$ according to \eqref{V-M:repr}.\label{virtual-tail-b}
\end{enumerate}
\end{lemma}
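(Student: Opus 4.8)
The plan is to work directly with the tail $1-F_\phi(t)=\PP(\phi_F(v)>t)$, where $v\sim F$, and to pin down the event $\{\phi_F(v)>t\}$ accurately enough to compare its probability with the target tail. The basic reduction is that $\phi_F(v)=v-(1-F(v))/f(v)<v$ throughout the interior of the support, so $\{\phi_F(v)>t\}\subseteq\{v>t\}$ and only the behavior of $\phi_F$ at large arguments is relevant; there I would invoke the asymptotics of $\phi_F$ supplied by Lemma~\ref{lem:nice-virtual}. In each extreme-value family this localizes $\{\phi_F(v)>t\}$ between two explicit half-lines, after which the result falls out from monotonicity of $1-F$ and the structure of the tail, combined with Lemma~\ref{lem:fre-key} (Fréchet) or the Von Mises representation~\eqref{V-M:repr} (Gumbel).

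For part (a): fix $\epsilon>0$ (small enough that $1-1/\alpha-\epsilon>0$). By Lemma~\ref{lem:nice-virtual}\ref{regular-a} there is $v_0$ with $(1-1/\alpha-\epsilon)v\le\phi_F(v)\le(1-1/\alpha+\epsilon)v$ for $v\ge v_0$, and combining this with $\phi_F(v)<v$ gives, for all large $t$,
\[
\big\{\,v>\tfrac{t}{1-1/\alpha-\epsilon}\,\big\}\ \subseteq\ \{\phi_F(v)>t\}\ \subseteq\ \big\{\,v>\tfrac{t}{1-1/\alpha+\epsilon}\,\big\},
\]
hence $1-F\!\left(\tfrac{t}{1-1/\alpha-\epsilon}\right)\le 1-F_\phi(t)\le 1-F\!\left(\tfrac{t}{1-1/\alpha+\epsilon}\right)$. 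Using $1-F(t)=t^{-\alpha}\ell_F(t)$ with $\ell_F$ slowly varying (Lemma~\ref{lem:fre-key}) and dividing by $1-F(t)$, the power of $t$ contributes $(1-1/\alpha\mp\epsilon)^{\alpha}$ while each slowly varying ratio $\ell_F(ct)/\ell_F(t)\to1$ at the fixed dilation $c=1/(1-1/\alpha\mp\epsilon)$. Sending $\epsilon\to0$ squeezes $\lim_t\frac{1-F_\phi(t)}{1-F(t)}$ to $(1-1/\alpha)^{\alpha}$, and since $\omega_1(F)=\omega_1(F_\phi)=\infty$ (as $\phi_F(v)\to\infty$) this is exactly tail equivalence with the claimed ratio $\beta_\alpha$.

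For part (b) I would assume $\omega_1(F)=\infty$ and take a smooth Von Mises representation $(V,\eta)$ of $F$ with auxiliary function $\mu$ and $\eta(t)\to\eta^\star$. Differentiating $1-F=\eta(1-V)$ and using $(1-V)'=-(1-V)/\mu$ from~\eqref{V-M:repr} gives $f=(1-V)(\eta/\mu-\eta')$, so $\frac{1-F(t)}{f(t)}=\frac{\eta(t)\mu(t)}{\eta(t)-\eta'(t)\mu(t)}=\mu(t)(1+o(1))$, whence $\phi_F(v)=v-\mu(v)(1+o(1))$ for large $v$. Fixing $\epsilon>0$, the maps $g_\epsilon(v)=v-(1+\epsilon)\mu(v)$ and $h_\epsilon(v)=v-(1-\epsilon)\mu(v)$ are eventually strictly increasing to $\infty$ (their derivatives tend to $1$ because $\mu'\to0$, and $\mu(v)=o(v)$), and for large $t$ one has $\{v>g_\epsilon^{-1}(t)\}\subseteq\{\phi_F(v)>t\}\subseteq\{v>h_\epsilon^{-1}(t)\}$, i.e. $1-F(g_\epsilon^{-1}(t))\le 1-F_\phi(t)\le 1-F(h_\epsilon^{-1}(t))$. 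Inverting gives $g_\epsilon^{-1}(t)=t+(1+\epsilon)\mu(t)(1+o(1))$ and likewise for $h_\epsilon^{-1}$, using $g_\epsilon^{-1}(t)\sim t$ and $\mu'\to0$. Writing $1-F(s)=\eta(s)(1-V(s))$ with $\eta(s)\to\eta^\star$, and $(1-V(s))/(1-V(t))=\exp\!\left(-\int_t^s \frac{\mathrm ds'}{\mu(s')}\right)$, the estimate $\int_t^{g_\epsilon^{-1}(t)}\frac{\mathrm ds'}{\mu(s')}\to1+\epsilon$ (valid because $\mu'\to0$ keeps $\mu$ nearly constant over the window of width $O(\mu(t))=o(t)$) yields $\frac{1-F(g_\epsilon^{-1}(t))}{1-V(t)}\to\eta^\star e^{-(1+\epsilon)}$, and symmetrically $\frac{1-F(h_\epsilon^{-1}(t))}{1-V(t)}\to\eta^\star e^{-(1-\epsilon)}$. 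Letting $\epsilon\to0$ gives $\lim_t\frac{1-F_\phi(t)}{1-V(t)}=\eta^\star/e=\beta$, and tail equivalence of $F$ and $F_\phi$ follows since $1-F=\eta(1-V)$ with $\eta\to\eta^\star$. Theorem~\ref{thm:tail-behavior} can then be fed these ratios to recover the extreme type of $F_\phi$ in Theorem~\ref{thm:stability}.

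The slowly varying computations in (a) and all the set inclusions are routine; the two genuine obstacles are both in (b). The first is establishing $\frac{1-F(t)}{f(t)}=\mu(t)(1+o(1))$, which reduces to $\eta'(t)\mu(t)\to0$; since $\eta'(t)\to0$ alone does not in general force $\eta'(t)\mu(t)\to0$ when $\mu(t)\to\infty$, this is precisely the place where the smoothness hypothesis on the Von Mises representation must be exploited (and some care about what ``smoothly represents'' guarantees is warranted). The second is justifying $\int_t^{t+(1+\epsilon)\mu(t)(1+o(1))}\frac{\mathrm ds'}{\mu(s')}\to1+\epsilon$, which I would handle via $|\mu(s')-\mu(t)|\le o(1)\cdot(s'-t)=o(\mu(t))$ uniformly over the window (a consequence of $\mu'\to0$), together with checking that $g_\epsilon^{-1}(t)\sim t$ feeds back correctly into $\mu(g_\epsilon^{-1}(t))\sim\mu(t)$. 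These are the steps I expect to demand the most effort.
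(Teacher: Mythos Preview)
Your proposal is correct and follows essentially the same approach as the paper: sandwich the event $\{\phi_F(v)>t\}$ between explicit half-lines using the asymptotics of $\phi_F$ from Lemma~\ref{lem:nice-virtual}, then compute the tail ratios via the slowly varying representation (part~(a)) or the Von Mises representation (part~(b)), and finally squeeze by sending $\varepsilon\to 0$. Part~(a) is virtually identical to the paper's argument.

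In part~(b) there is a small organizational difference worth noting. You work with the eventually monotone functions $g_\epsilon(v)=v-(1+\epsilon)\mu(v)$ and $h_\epsilon(v)=v-(1-\epsilon)\mu(v)$ and their inverses, deriving $g_\epsilon^{-1}(t)=t+(1+\epsilon)\mu(t)(1+o(1))$. The paper instead avoids inverses by arguing directly with affine thresholds of the form $t+c\,\mu(t)$; for the lower bound it must additionally cap $v$ from above by $t+\chi(\varepsilon)\mu(t)$ (so that $\mu(v)$ can be controlled in terms of $\mu(t)$ via $|\mu(u)-\mu(t)|\le\varepsilon(u-t)$) and then subtract off the negligible tail beyond the cap. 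Your monotonicity-based route sidesteps this capping step and is arguably cleaner, at the cost of having to justify the inverse asymptotics. Both routes ultimately reduce to the same key limit $(1-F(t+c\mu(t)))/(1-V(t))\to\eta^{\star}e^{-c}$, and your identification of the two delicate points---that $\eta'(t)\mu(t)\to 0$ needs the smoothness hypothesis (the paper handles this in Proposition~\ref{prop:asympt}\ref{prop4b}), and that the integral $\int_t^{t+(1+\epsilon)\mu(t)}\mathrm ds'/\mu(s')\to 1+\epsilon$ needs uniform control of $\mu$ over the window---is accurate.
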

Observe that the asymptotic tail ratio for distributions with extreme type in the Fr\'echet family satisfies $\lim_{\alpha\to \infty}\beta_{\alpha}=1/e$.
Furthermore, if $F$ is a Von Mises function then the asymptotic tail ratio is equal to $1/e$ since we can always take the representation $(F,\mathsf{1}_{\RR})$.
\begin{comment}
To get more intuition about why these values hold, consider the Pareto distribution of parameter $\alpha\in (1,\infty)$.
In this case, observe that for every $t\in [1,\infty)$ we have that $\phi_{P_{\alpha}}(t)=t(\alpha-1)/\alpha$, since it satisfies the asymptotic regularity condition exactly.
In particular, for $G=P_{\alpha}$, we have that 
$1-G_{\phi}(t)=1-P_{\alpha}(\alpha t /(\alpha-1))=(1-1/\alpha)^{\alpha}t^{-\alpha}=(1-1/\alpha)^{\alpha}(1-P_{\alpha}(t)),$
for every $t\in [1,\infty)$.
Using the machinery we have from the Fr\'echet family we can show that the factor remains valid as long as a distribution has an extreme type of Fr\'echet parameter $\alpha\in (1,\infty)$.
For the Von Mises family, consider as example the exponential distribution of parameter $\lambda>0$.
In this case, we have that $\phi_G(t)=t-1/\lambda$ and therefore for every $t\ge 0$ it holds that $1-G_{\phi}(t)=e^{-\lambda(t+1/\lambda)}=\frac{1}{e}(1-G(t)).$\\
\end{comment}
We first show that the asymptotic tail ratio is preserved for Von Mises functions, and then we show how to extend for the Gumbel case.  
In the following we conclude Theorem~\ref{thm:stability} using Lemma~\ref{lem:virtual-tail}.

\begin{proof}[Proof of Theorem~\ref{thm:stability}]
Let $F$ be a distribution with extreme type Fr\'echet of parameter $\alpha\in (1,\infty)$.
By Lemma~\ref{lem:virtual-tail}, the distributions $F$ and $F_{\phi}$ are tail equivalent with asymptotic tail ratio equal to $\beta=((\alpha-1)/\alpha)^{\alpha}$.
By Theorem~\ref{thm:tail-behavior}\ref{thm:tail-behavior-frechet} we have that $F_{\phi}(t)=\Phi_{\alpha}(\beta^{-1/\alpha}t)=\Phi_{\alpha}(\alpha t/(\alpha-1))$. 
Now let $F$ be a distribution with extreme type Gumbel, and let $(V,\eta)$ a Von Mises representation according to Lemma~\ref{prop:Gumbel-prop}.
Thanks to Lemma~\ref{lem:virtual-tail}, we have that
%\vlcom{In many instances below, I think $\eta$ should be $\eta^\star$.}
\begin{align*}
    \lim_{t\to \infty}\frac{1-F_{\phi}(t)}{1-F(t)}&=\lim_{t\to \infty}\frac{1-F_{\phi}(t)}{1-V(t)}\cdot \frac{1-V(t)}{1-F(t)}=\lim_{t\to \infty}\frac{1-F_{\phi}(t)}{1-V(t)}\cdot \frac{1}{\eta(t)}=\frac{\eta^{\star}}{e}\cdot \frac{1}{\eta^{\star}}=\frac{1}{e},
\end{align*}
where the limits follows thanks to Lemma~\ref{lem:virtual-tail} and since $\lim_{t\to \infty}\eta(t)=\eta$.
Therefore, according to Theorem~\ref{thm:tail-behavior}\ref{thm:tail-behavior-gumbel} we have that that the virtual valuation has an extreme type with distribution $F_{\phi}(t)=\Lambda(t-\ln(1/e))=\Lambda(t+1)$.
To conclude the theorem, observe that for every $t\ge 0$ we have that $\PP(\phi_F(v)>t)=\PP(\phi_F^+(v)>t)$ and therefore $F_{\phi}$ and $F_{\phi}^+$ are tail equivalent. 
\end{proof}

Before proving Lemma~\ref{lem:virtual-tail} we need a few properties regarding the Von Mises representation and the corresponding auxiliary function. 

\begin{proposition}
\label{prop:asympt}
Let $F$ be a distribution with extreme type Gumbel that can be smoothly represented by a pair $(V,\eta)$ with auxiliary function $\aux:(z_0,\infty)\to \RR_+$.
Then, the following holds.
\begin{enumerate}[label=\normalfont(\roman*)] 
    \item For every $\delta>0$ there exists $t_{\delta}\in \RR_+$ such that for every $u>t\ge t_{\delta}$ we have $|\aux(u)-\aux(t)|\le \delta(u-t)$.\label{prop4a}
    \item For every $\delta>0$ there exists $t_{\delta}\in \RR_+$ such that for every $v\ge t_{\delta}$ we have that $\left|\frac{1-F(v)}{f(v)\aux(v)}-1\right|\le \delta$.\label{prop4b}
\end{enumerate}
\end{proposition}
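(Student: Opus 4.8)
The plan is to treat the two parts separately, each reducing to a short estimate on the auxiliary function $\aux$.

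For part~\ref{prop4a}, I would use only the definition of a Von Mises function: by \eqref{V-M:repr} (see also Lemma~\ref{lem:von-mises-prop}), $\aux$ is absolutely continuous on $(z_0,\infty)$ and $\lim_{u\to\infty}\aux'(u)=0$. Given $\delta>0$, choose $t_\delta$ so that $|\aux'(s)|\le\delta$ for every $s\ge t_\delta$; then for $u>t\ge t_\delta$, absolute continuity gives $\aux(u)-\aux(t)=\int_t^u\aux'(s)\,\mathrm ds$, whence $|\aux(u)-\aux(t)|\le\int_t^u|\aux'(s)|\,\mathrm ds\le\delta(u-t)$.

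For part~\ref{prop4b}, the plan is to differentiate the smooth Von Mises representation of $F$ and read the ratio off directly. Let $(V,\eta)$ smoothly represent $F$, so that $1-F(t)=\eta(t)(1-V(t))$ with $\eta(t)\to\eta^{\star}>0$ and $\eta'(t)\to 0$. From \eqref{V-M:repr} the density of $V$ equals $(1-V(t))/\aux(t)$, so $\tfrac{\mathrm d}{\mathrm dt}(1-V(t))=-(1-V(t))/\aux(t)$; differentiating $1-F=\eta\cdot(1-V)$ then yields $f(t)=(1-V(t))\bigl(\eta(t)/\aux(t)-\eta'(t)\bigr)$, and hence
\[
\frac{1-F(v)}{f(v)\,\aux(v)}=\frac{\eta(v)}{\eta(v)-\eta'(v)\,\aux(v)}=\Bigl(1-\frac{\eta'(v)\,\aux(v)}{\eta(v)}\Bigr)^{-1}.
\]
Since $\eta(v)\to\eta^{\star}>0$, the right-hand side tends to $1$ as soon as $\eta'(v)\,\aux(v)\to 0$, and translating this limit into the quantitative form of the statement finishes part~\ref{prop4b}.

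So the whole argument collapses to the single estimate $\eta'(v)\,\aux(v)\to 0$, which I expect to be the only substantive point. It is immediate when $\aux$ is bounded (directly from $\eta'(v)\to0$), and in general it is precisely the content of the ``mild smoothness condition'' attached to the notion of a smooth Von Mises representation; I would invoke that condition, or, starting only from $\eta'(t)\to0$, first argue that whenever $F$ lies in the Gumbel domain with $\omega_1(F)=\infty$ one can pass to a representation $(V,\eta)$ with $\eta'(v)\,\aux(v)\to0$. Everything else is the elementary bookkeeping above, using Lemmas~\ref{lem:von-mises-prop} and~\ref{prop:Gumbel-prop}.
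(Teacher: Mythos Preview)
Your argument for part~\ref{prop4a} is correct and coincides with the paper's proof.

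For part~\ref{prop4b}, your derivation of
\[
\frac{1-F(v)}{f(v)\,\aux(v)}=\frac{\eta(v)}{\eta(v)-\eta'(v)\,\aux(v)}
\]
also matches the paper, and you correctly isolate $\eta'(v)\,\aux(v)\to 0$ as the only substantive point. The gap is in how you propose to discharge it. The ``mild smoothness condition'' defining a smooth Von Mises representation in this paper is exactly $\lim_{t\to\infty}\eta'(t)=0$---nothing stronger---so you cannot simply ``invoke that condition'' to obtain $\eta'(v)\,\aux(v)\to 0$, and your alternative of passing to a different representation is neither what the paper does nor something you have justified.

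The paper's route to $\eta'(v)\,\aux(v)\to 0$ is to combine $\eta'\to 0$ with the fact $\aux(t)/t\to 0$ from Lemma~\ref{lem:von-mises-prop}: from $\eta'(t)\to 0$ one takes $t_1$ with $|\eta'(t)|\le 1/t$ for $t\ge t_1$, sandwiches
\[
\frac{\eta(t)}{\eta(t)+\aux(t)/t}\ \le\ \frac{\eta(t)}{\eta(t)-\eta'(t)\,\aux(t)}\ \le\ \frac{\eta(t)}{\eta(t)-\aux(t)/t},
\]
and then uses $\aux(t)/t\to 0$ together with $\eta(t)\to\eta^\star>0$. The ingredient your outline is missing is precisely this appeal to $\aux(t)/t\to 0$ from Lemma~\ref{lem:von-mises-prop}; once you add it, the two proofs are essentially identical.
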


\begin{proof}[Proof of Proposition~\ref{prop:asympt}]
Since $\lim_{t\to \infty}\aux'(t)=0$, there exists $t_{\delta}\in \RR_+$ such that for every $t\ge t_{\delta}$ we have $|\aux'(t)|\le \delta$.
In particular, we have that for every $u>t\ge t_{\delta}$, 
\begin{align*}
\aux(u)&=\aux(t)+\int_t^u \aux'(s)\mathrm ds\le \aux(t)+\delta(u-t),\\
\aux(u)&=\aux(t)+\int_t^u \aux'(s)\mathrm ds\ge \aux(t)-\delta(u-t),
\end{align*}
which concludes part \ref{prop4a}.
For part \ref{prop4b}, observe first from the Von Mises representation $(V,\eta)$ we have that there exists $t_0\in \RR_+$ such that for every $t\ge t_0$ we have $\aux(t)=(1-V(t))/V'(t)$. Since $1 - F(t) = \eta(t) \left(1 - V(t)\right)$ by Lemma~\ref{prop:Gumbel-prop}, we have that
\[
\frac{f(t)}{V'(t)} = \eta(t) - \eta'(t)\frac{1 - V(t)}{V'(t)} = \eta(t) - \eta'(t) \aux(t),
\]
and thus
\begin{equation*}
    \frac{1-F(t)}{f(t)\aux(t)}=\eta(t)\frac{V'(t)}{f(t)}=\frac{\eta(t)}{\eta(t)-\eta'(t)\aux(t)}.
\end{equation*}
Since $\lim_{t \to \omega_1(F)} \eta'(t)=0$ there exists $t_1\in \RR_+$ such that for every $t\ge t_1$, we have that $|\eta'(t)|\le 1/t$.
Therefore, for every $t\ge \max\{t_0,t_1\}$ we have that
\begin{equation*}
  \frac{\eta(t)}{\eta(t)+\frac{\aux(t)}{t}}\le  \frac{\eta(t)}{\eta(t)-\eta'(t)\aux(t)}\le \frac{\eta(t)}{\eta(t)-\frac{\aux(t)}{t}}.
\end{equation*}
Since $\lim_{t\to \infty}\aux(t)/t=0$ and $\eta^\star = \lim_{t \to \infty} \eta(t)$, we conclude the result.
%Thus, for every $t\ge t_0$, if $v-f(v)>t$ we have that 
\end{proof}

\begin{proof}[Proof of Lemma~\ref{lem:virtual-tail}]
Consider first a valuation $v$ with extreme type Fr\'echet of parameter $\alpha$, with $\alpha\in (1,\infty)$, and satisfying the asymptotic regularity condition. 
Thanks to Lemma~\ref{lem:nice-virtual}\ref{regular-a}, for every $\varepsilon>0$ there exists $t_0\in \RR$ such that for every $x\ge t_0$ we have that
\begin{equation}
\label{eq:asympt}
\frac{\alpha-1}{\alpha}(1-\varepsilon)x\le \phi_F(x)\le \frac{\alpha-1}{\alpha}(1+\varepsilon)x.
\end{equation}
Therefore for every $t\ge t_0$, if we have $\phi_F(v)>t$ then $v>\frac{\alpha (1+\varepsilon)^{-1}t}{(\alpha-1)}$.
It follows on the one hand that
\begin{equation}
\label{eq:asym-ub}
1-F_{\phi}(t)=\PP(\phi_F(v)>t)\le \PP\left(v>\frac{\alpha (1+\varepsilon)^{-1}t}{(\alpha-1)}\right),
\end{equation}
and on the other hand, if $v>\frac{\alpha (1-\varepsilon)^{-1}t}{(\alpha-1)}$ then from (\ref{eq:asympt}) we have that $\phi_F(v)>t$.
Thus,
\begin{equation}
\label{eq:asym-lb}
1-F_{\phi}(t)=\PP(\phi_F(v)>t)\ge \PP\left(v>\frac{\alpha (1-\varepsilon)^{-1}t}{(\alpha-1)}\right).
\end{equation}
By Lemma~\ref{lem:fre-key}, there exists a slowly varying function $\ell_F$ such that $1-F(s)=s^{-\alpha}\ell_F(s)$ for every $s\in \RR_+$.
In particular, we have
\begin{align*}
\PP\left(v>\frac{\alpha (1+\varepsilon)^{-1}t}{(\alpha-1)}\right)&=t^{-\alpha}\left(1-\frac{1}{\alpha}\right)^{\alpha}\left(1+\varepsilon\right)^{\alpha}\ell_{F}\left(\frac{\alpha (1+\varepsilon)^{-1}t}{(\alpha-1)}\right)\\									&=(1-F(t))\left(1-\frac{1}{\alpha}\right)^{\alpha}\left(1+\varepsilon\right)^{\alpha}\ell_{F}\left(\frac{\alpha (1+\varepsilon)^{-1}t}{(\alpha-1)}\right)\frac{1}{\ell_F(t)},
\end{align*}
and furthermore we get
\begin{align*}
\PP\left(v>\frac{\alpha (1-\varepsilon)^{-1}t}{(\alpha-1)}\right)&=t^{-\alpha}\left(1-\frac{1}{\alpha}\right)^{\alpha}\left(1-\varepsilon\right)^{\alpha}\ell_{F}\left(\frac{\alpha (1-\varepsilon)^{-1}t}{(\alpha-1)}\right)\\
	&=(1-F(t))\left(1-\frac{1}{\alpha}\right)^{\alpha}\left(1-\varepsilon\right)^{\alpha}\ell_{F}\left(\frac{\alpha (1-\varepsilon)^{-1}t}{(\alpha-1)}\right)\frac{1}{\ell_F(t)}.
\end{align*}
Since $\ell_F$ is slowly varying, there exists $t_1\in \RR_+$ such that for every $t\ge t_1$ we have that 
\begin{equation*}
1-\varepsilon\le \ell_{F}\left(\frac{\alpha (1+\varepsilon)^{-1}t}{(\alpha-1)}\right)\frac{1}{\ell_F(t)}\le 1+\varepsilon,
\end{equation*}
\begin{equation*}
1-\varepsilon\le \ell_{F}\left(\frac{\alpha (1-\varepsilon)^{-1}t}{(\alpha-1)}\right)\frac{1}{\ell_F(t)}\le 1+\varepsilon.
\end{equation*}
Then, from inequalities (\ref{eq:asym-ub})-(\ref{eq:asym-lb}) together with the inequalities above we get that for every $t\ge \max\{t_0,t_1\}$, we have
\begin{equation*}
\left(1-\frac{1}{\alpha}\right)^{\alpha}(1-\varepsilon)^{\alpha+1}\le \frac{1-F_{\phi}(t)}{1-F(t)}\le \left(1-\frac{1}{\alpha}\right)^{\alpha}(1+\varepsilon)^{\alpha+1},
\end{equation*}
from where we obtain that 
$\lim_{t\to \infty}(1-F_{\phi}(t))/(1-F(t))=(1-1/\alpha)^{\alpha}.$
That concludes part \ref{virtual-tail-a}.  

Now let $F$ be a distribution with extreme type Gumbel and consider a Von Mises representation $(V,\eta)$, with auxiliary function $\aux:(z_0,\infty)\to \RR_+$ and $z_0>\omega_0(F)$.
By Lemma~\ref{lem:nice-virtual}\ref{regular-b}, for every $t\in (z_0,\infty)$ we have that $\phi_V(t)=t-\aux(t)$ and then $1-V_{\phi}(t)=\PP(\phi_V(v)>t)=\PP(v-\aux(v)>t)$.
\noindent Thus, using Proposition~\ref{prop:asympt} we have that for every $\varepsilon\in (0,1/3)$ there exists $t_0\in \RR_+$ such that for every $v\ge t\ge t_0$ we have that $(1-F(v))/f(v)\ge (1-\varepsilon)\aux(v)$ and $\aux(v)\ge \aux(t)+\varepsilon(v-t).$
Therefore, if $\phi_F(v)>t$ we have 
\begin{align*}
v>\frac{1-F(v)}{f(v)}+t&\ge (1-\varepsilon)\aux(v)+t\\
            &\ge (1-\varepsilon)(\aux(t)-\varepsilon(v-t))+t\\
            &=(1-\varepsilon)\aux(t)-(\varepsilon-\varepsilon^2) v+(1-\varepsilon^2+\varepsilon)t\\
            &=(1-\varepsilon)\aux(t)+t-(\varepsilon-\varepsilon^2) (v-t) \hfill \iff \\
(v-t)(1+\varepsilon-\varepsilon^2) &\geq (1-\varepsilon)\aux(t) \hfill \iff \\
v &\geq t + \frac{1-\varepsilon}{1+\varepsilon-\varepsilon^2} \aux(t),
\end{align*}
where the third inequality holds by the positivity of $\mu$ and the fact that $v>(1-\varepsilon)\aux(v)+t\ge t\ge t_0$. Let $\xi(\varepsilon)=(1-\varepsilon)/(1-\varepsilon^2+\varepsilon)$. Then, from the inequality above we get
\begin{align}
\label{eq:lower-von}
\PP(\phi_F(v)>t)&\le \PP\left(v>t+\xi(\varepsilon)\aux(t)\right)=1-F\left(t+\xi(\varepsilon)\aux(t)\right).
\end{align}
On the other hand, by Proposition~\ref{prop:asympt} there exists $t_1\in \RR_+$ such that for every $v\ge t\ge t_1$ we have $(1-F(v))/f(v)\le (1+\varepsilon)\aux(v)$ and $\aux(t)\ge \aux(v)-\varepsilon^2(v-t)$. 
Then, for every $t\ge \max\{t_0,t_1\}$, if $v>t+(1+2\varepsilon)\aux(t)$ and $v\le t+\frac{1}{\varepsilon(1+\varepsilon)}\aux(t)$ we have
\begin{align*}
v&> t+\varepsilon \aux(t)+(1+\varepsilon)\aux(t)\\
			&\ge t+\varepsilon \aux(t)+(1+\varepsilon)(\aux(v)-\varepsilon^2(v-t))\\
			&= t+(1+\varepsilon)\aux(v)+(1+\varepsilon)\varepsilon^2\left(t+\frac{1}{\varepsilon(1+\varepsilon)}\aux(t)-v\right)\ge t+\frac{1-F(v)}{f(v)}.
\end{align*}
Therefore, for every $t\ge \max\{t_0,t_1\}$ we have
\begin{align*}
\PP(\phi_F(v)>t)&\ge \PP\left(t+\frac{1}{\varepsilon(1+\varepsilon)}\aux(t)\ge v>t+(1+2\varepsilon)\aux(t)\right) \\
				&=1-F(t+(1+2\varepsilon)\aux(t))-\left(1-F\left(t+\chi(\varepsilon)\aux(t)\right)\right),
\end{align*}
where $\chi(\varepsilon)=1/(\varepsilon(1+\varepsilon))$.
In what follows we show that the previous upper and lower bounds on (\ref{eq:lower-von}) imply the tail equivalence between $V$ and $F_{\phi}$.
By applying repeatedly Lemma~\ref{lem:von-mises-prop} and the same argument used in the proof of Lemma~\ref{lem:mises-to-gumbel} for $b_n = t$ and $t = \xi(\varepsilon), (1+2\varepsilon)$ and $\chi(\varepsilon)$, respectively, we have that the following holds,
 \begin{align*}
\lim_{t\to \infty}\frac{1-F\left(t+\xi(\varepsilon)\aux(t)\right)}{1-V(t)}&=\lim_{t\to \infty}\eta(t+\xi(\varepsilon)\aux(t))\cdot \frac{1-V\left(t+\xi(\varepsilon)\aux(t)\right)}{1-V(t)}\\
&=\eta^\star e^{-\xi(\varepsilon)},\\
\lim_{t\to \infty}\frac{1-F\left(t+(1+2\varepsilon)\aux(t)\right)}{1-V(t)}&=\lim_{t\to \infty}\eta(t+(1+2\varepsilon)\aux(t))\cdot \frac{1-V\left(t+(1+2\varepsilon)\aux(t)\right)}{1-V(t)}\\
&=\eta^\star e^{-(1+2\varepsilon)},\\
\lim_{t\to \infty}\frac{1-F\left(t+\chi(\varepsilon)\aux(t)\right)}{1-V(t)}&=\lim_{t\to \infty}\eta(t+\chi(\varepsilon)\aux(t))\cdot \frac{1-V\left(t+\chi(\varepsilon)\aux(t)\right)}{1-V(t)}\\
&=\eta^\star e^{-\chi(\varepsilon)}.
\end{align*}

\noindent Since the above holds for every $\varepsilon\in (0,1/3)$, and observing that the function $\xi(\varepsilon)$ is increasing in the interval $(0,1/3)$ we conclude that 
\begin{align*}
\limsup_{t\to \infty}\frac{\PP(\phi_F(v)>t)}{1-V(t)}&\le \eta^{\star} \inf_{\varepsilon\in (0,1/3)}\exp\left(-\frac{1-\varepsilon}{1-\varepsilon^2+\varepsilon}\right)=\frac{\eta^{\star}}{e},\\
\liminf_{t\to \infty}\frac{\PP(\phi_F(v)>t)}{1-V(t)}&\ge \eta^{\star} \inf_{\varepsilon\in (0,1/3)}\left\{\exp\left(-(1+2\varepsilon)\right)-\exp\left(-\frac{1}{\varepsilon(1+\varepsilon)}\right)\right\}=\frac{\eta^{\star}}{e},
\end{align*}
that in turn implies that the claimed limit exists and its value is equal to
\begin{equation*}
\lim_{t\to \infty}\frac{1-F_{\phi}(t)}{1-V(t)}=\lim_{t\to \infty}\frac{\PP(\phi_F(v)>t)}{1-V(t)}=\frac{\eta^{\star}}{e}.
\end{equation*}
This concludes that $V$ and $F_{\phi}$ are tail equivalent with an asymptotic tail ratio equal to $\eta/e$.
\end{proof}

\section{Competition Complexity Guarantees: Proof of Theorem \ref{thm:comp-complexity-evt}}\label{sec:analysis-competition}
In this section, we prove Theorem~\ref{thm:comp-complexity-evt} about the large market competition complexity of distributions satisfying the extreme value condition. 
We start by obtaining a very useful quantile characterization of the expected value of the optimal policy; the quantile depends only on $\gamma$ and $n$. Afterward, we use this characterization and jointly with a set of lemmas on the order statistics to show our main result. 
%Finally, we show that if a distribution is MHR, its domain of attraction must lie in some range, yielding tighter results on the competition complexity of MHR distributions.
We start by providing a simple proposition about the optimal policy. Recall that $\{G_n(F)\}_{n\in \NN}$ satisfies the following recurrence: $G_0(F)=0$, $G_1(F)=\EE(X)$, and $G_{n+1}(F)=\EE(\max(X,G_n(F)))$, where $X$ is distributed according to $F$.
\begin{lemma}\label{lem:opt-dp}
For every $n \in \N$, and every distribution $F$, the following holds:
\begin{enumerate}[label=\normalfont(\roman*)]    
    \item $\Gmax{n+1} = \Gmax{n} + \int^{\omega_1(F)}_{\Gmax{n}} (1 - F(u))\dif u$.\label{lem:opt-dp-g}
    \item $\Gmax{n+1} = \Gmax{n} F(\Gmax{n}) + \int^{1 - F(\Gmax{n})}_0 {F^{\inv}(1-u)}\dif u$.\label{lem:opt-dp-inv}
\end{enumerate}    
\end{lemma}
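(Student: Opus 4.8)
The plan is to prove both identities directly from the recurrence $G_{n+1}(F)=\EE(\max(X,G_n(F)))$, where $X\sim F$. For part~\ref{lem:opt-dp-g}, I would start by writing $\max(X,c)=c+(X-c)^+$ for the constant $c=G_n(F)$, so that
\begin{equation*}
G_{n+1}(F)=\EE\bigl(\max(X,G_n(F))\bigr)=G_n(F)+\EE\bigl((X-G_n(F))^+\bigr).
\end{equation*}
Then I would use the standard tail formula for the expectation of a nonnegative random variable, $\EE((X-c)^+)=\int_c^{\infty}\PP(X>u)\dif u=\int_c^{\infty}(1-F(u))\dif u$, and note that the integrand vanishes for $u\ge\omega_1(F)$, so the upper limit can be replaced by $\omega_1(F)$. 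This yields exactly $G_{n+1}(F)=G_n(F)+\int_{G_n(F)}^{\omega_1(F)}(1-F(u))\dif u$. One small point to check is that $G_n(F)\le\omega_1(F)$ for all $n$, which follows by induction: $G_0(F)=0\le\omega_1(F)$ (the support is in $\RR_+$), and if $G_n(F)\le\omega_1(F)$ then $\max(X,G_n(F))\le\omega_1(F)$ almost surely, hence $G_{n+1}(F)\le\omega_1(F)$; this guarantees the integral in the statement is over a nonempty (possibly degenerate) interval.

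For part~\ref{lem:opt-dp-inv}, I would again split on the event $\{X\le G_n(F)\}$ versus its complement:
\begin{equation*}
G_{n+1}(F)=\EE\bigl(\max(X,G_n(F))\bigr)=G_n(F)\,\PP(X\le G_n(F))+\EE\bigl(X\cdot\mathbb{1}[X>G_n(F)]\bigr)=G_n(F)F(G_n(F))+\EE\bigl(X\cdot\mathbb{1}[X>G_n(F)]\bigr).
\end{equation*}
The remaining task is to rewrite $\EE(X\,\mathbb{1}[X>G_n(F)])$ as $\int_0^{1-F(G_n(F))}F^{-1}(1-u)\dif u$. The natural route is the quantile (inverse-CDF) change of variables: if $U$ is uniform on $[0,1]$ then $F^{-1}(U)$ has law $F$, so $\EE(X\,\mathbb{1}[X>G_n(F)])=\EE\bigl(F^{-1}(U)\,\mathbb{1}[F^{-1}(U)>G_n(F)]\bigr)$. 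Since $F$ is absolutely continuous, $F^{-1}(U)>G_n(F)$ holds (up to a null set) exactly when $U>F(G_n(F))$, i.e.\ when $1-U<1-F(G_n(F))$; substituting $u=1-U$ (also uniform on $[0,1]$) turns the expectation into $\int_0^{1-F(G_n(F))}F^{-1}(1-u)\dif u$, as desired.

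I do not expect a serious obstacle here; the only things requiring a little care are the boundary/measure-zero issues: verifying $G_n(F)\le\omega_1(F)$ by induction (so the integral limits in~\ref{lem:opt-dp-g} make sense), and justifying the equivalence of the events $\{F^{-1}(U)>G_n(F)\}$ and $\{U>F(G_n(F))\}$ up to a null set, which uses absolute continuity of $F$ (ensuring $F(F^{-1}(y))=y$ on the relevant range and that $F$ has no atoms). Both identities are then immediate from the displayed manipulations. As a sanity check, one can differentiate~\ref{lem:opt-dp-g} with respect to the ``input level'' or directly verify consistency between the two forms by an integration by parts on $\int_0^{1-F(G_n(F))}F^{-1}(1-u)\dif u$, which recovers the tail-integral expression in~\ref{lem:opt-dp-g}.
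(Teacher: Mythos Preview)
Your proof is correct. Part~\ref{lem:opt-dp-g} is handled exactly as in the paper (with the added care of checking $G_n(F)\le\omega_1(F)$, which the paper omits). For part~\ref{lem:opt-dp-inv} you take a slightly different route: you derive it directly from the recurrence by conditioning on $\{X\le G_n(F)\}$ and invoking the quantile representation $X\overset{d}{=}F^{-1}(U)$, whereas the paper obtains~\ref{lem:opt-dp-inv} from~\ref{lem:opt-dp-g} via integration by parts and the change of variables $u\mapsto 1-F(u)$ in the tail integral. Both arguments are elementary and equally short; your approach is perhaps more transparent probabilistically, while the paper's makes the link between the two formulas explicit --- a link you note yourself as a sanity check at the end.
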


\begin{proof}
Part \ref{lem:opt-dp-g} is obtained directly as
\begin{align*}
G_{n+1}(F)&=\EE(\max(X,G_n(F)))\\
&=G_n(F)+\EE(\max(0,G_n(F)-X))=\int^{\omega_1(F)}_{\Gmax{n}} {\prn{1 - F(u)} \dif u}.
\end{align*}
Part \ref{lem:opt-dp-inv} is obtained by taking the equality \ref{lem:opt-dp-g}, changing variables, and integrating by parts:
\begin{align*}
G_{n+1}(F) &= G_n(F) - G_n(F) \prn{1 - F(G_n(F))} + \int^{1 - F(G_n(F))}_0 {F^{-1}(1-u) \dif u} \\
&= G_n(F) F(G_n(F)) + \int^{1 - F(G_n(F))}_0 {F^{-1}(1-u) \dif u}.\qedhere
\end{align*}
\end{proof}

Our analysis relies heavily on the theory of \emph{regularly varying functions}, originally developed by Karamata. Regularly varying functions are, roughly speaking, functions that behave asymptotically like power functions. 
We briefly define them here; for more information on the topic, see e.g., \cite{libroextremo,bingham-rv,dehaan-ferreira-evt}.
\begin{definition}\label{def:reg-var}
Let $f: \Rp \to \Rp$ be a Lebesgue measurable function. 
We say that $f$ is regularly varying if, for some $\alpha \in \R$ and every $x > 0$, we have
$\lim_{t \to \infty} f(tx)/f(t) = x^\alpha.$
In this case, we indicate this as $f \in \rva{\alpha}$.
\end{definition}
In the above definition, $\alpha$ is called the \emph{index} of regular variation, and whenever $\alpha = 0$, we say that $f$ is \emph{slowly varying}. Furthermore, we say that $f(x)$ is regularly varying at $0$ if and only if $f(1/x)$ is regularly varying at infinity. If $f \in \rva{\alpha}$, then $L(x) = f(x)/x^\alpha \in \rva{0}$. 
One can represent any $f \in \rva{\alpha}$ as $f(x) = x^\alpha L(x)$, where $L$ is a slowly varying function.
The main tool from the theory of regular variation that we use to analyze the competition complexity is Karamata's Theorem (see, e.g., \cite[Theorem B.1.5]{dehaan-ferreira-evt}), which enables us to simplify the integrals in Lemma~\ref{lem:opt-dp}\ref{lem:opt-dp-inv} and obtain approximation recurrences for the optimal policy when $n$ is large.

We use the notation $a(n) \approx b(n)$ when $\lim_{n \to \infty} (a(n) - b(n)) = 0$. We also extend this notation for functions $f(x)\approx g(x)$ whenever $\lim_{x\to \infty}(f(x)-g(x))=0$. In what follows, for notation simplicity, we denote $\EE(\textstyle\max\{X_1,X_2,\ldots,X_n\})$ by $E_n(F)$. Recall that $\gamma$ partitions the space of all distributions with extreme value into the equivalence classes $D_{\gamma}$ with $\gamma \in \RR$: For $\gamma > 0$, we recover the Fr\'{e}chet family of distributions, for $\gamma = 0$, we get the Gumbel family, and for $\gamma < 0$, we obtain the Reverse Weibull family. We also say that a distribution $F$ is in $D_\gamma$ if it has extreme value index $\gamma$. In other words, $F$ is in $D_\gamma$ for $\gamma > 0$ implies that $F$ has extreme in the Fr\'{e}chet family with parameter $\alpha = 1/\gamma$, $F$ is in $D_\gamma$ for $\gamma = 0$ implies that $F$ has extreme Gumbel, and $F$ is in $D_\gamma$ for $\gamma < 0$ implies that $F$ has extreme in the Reverse Weibull family with parameter $\alpha = -1/\gamma$. Thus, from the extreme value theorem, it follows that $F \in D_\gamma$ implies that there exist sequences $\{a_n\}_{n\in \NN}, \{b_n\}_{n\in \NN}$ such that
\begin{equation}\label{eq:gnedenko-mu-approx-max}
\lim_{n \to \infty} F^n(a_n x + b_n) = G_\gamma(x)= \begin{cases}
\exp\prn{-\prn{1+\gamma x}^{-\f{1}{\gamma}}}, & \text{if } \gamma \neq 0, \\
\exp\prn{-\exp\prn{-x}}, & \text{if } \gamma = 0.
\end{cases}
\end{equation}

%%%%%%%%%%%%%%5
\begin{comment}
\begin{lemma}[{\cite[Theorem 5.3.1]{first-course-order-statistics}}.]\label{lem:mu-recurrence}
For every $i \in \set{1, \dots,n-1}$, we have
$i \mu_{i+1:n} + (n-i) \mu_{i:n} = n \mu_{i:n-1}$.
\end{lemma}
\end{comment}
\begin{restatable}{lemma}{muApprox}\label{lem:mu-approx}
For $\gamma\in \RR$ and every $F\in D_{\gamma}$ the following holds:
\[
E_n(F)\approx \begin{cases}
\Gamma(1 - \gamma) F^{\inv}(1-1/n) & \text{for } \gamma \in (0,1), \\
F^{\inv}\prn{1 - \frac{\exp({-\gamma^\star})}{n}} & \text{for } \gamma = 0, \\
\omega_1(F) - \Gamma(1 - \gamma) \prn{\omega_1(F) - F^{\inv}(1-1/n)} & \text{for } \gamma < 0,
\end{cases}
\]
where $\gamma^\star \approx 0.577$ is the Euler-Mascheroni constant.
\end{restatable}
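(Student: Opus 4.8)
\textbf{Proof proposal for Lemma~\ref{lem:mu-approx}.}

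The plan is to derive the asymptotic expression for $E_n(F)=\EE(M_n)$ directly from the extreme value convergence in \eqref{eq:gnedenko-mu-approx-max}, combined with uniform integrability so that convergence in distribution upgrades to convergence of first moments. Recall that $F\in D_\gamma$ gives sequences $\{a_n\}$, $\{b_n\}$ with $(M_n-b_n)/a_n$ converging in distribution to $G_\gamma$. The first step is to recall the standard choices of normalizing constants: one can take $b_n=F^{\inv}(1-1/n)$ and $a_n=a(b_n)$ for an appropriate auxiliary function $a(\cdot)$ (for $\gamma=0$ this is the Von Mises auxiliary function $\mu$; for $\gamma\neq 0$ one has $a_n$ regularly varying of index $\gamma$, and in fact $a_n\approx \gamma\, b_n$ when $\gamma\neq 0$ by Karamata). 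Then, since the limit law $G_\gamma$ has finite mean exactly when $\gamma<1$ — with $\EE(Z_\gamma)=(\Gamma(1-\gamma)-1)/\gamma$ for $\gamma\neq 0$ and $\EE(Z_0)=\gamma^\star$, the Euler--Mascheroni constant — the convergence-of-moments result from extreme value theory (\cite[Proposition 2.1, p.~77]{libroextremo}) yields
\[
\frac{E_n(F)-b_n}{a_n}\longrightarrow \EE(Z_\gamma).
\]
This is the key input; the rest is bookkeeping to turn it into the three stated forms.

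For $\gamma\in(0,1)$: here $b_n=F^{\inv}(1-1/n)\to\infty$ and $a_n$ is regularly varying of index $\gamma$, so $a_n/b_n\to\gamma$ (this uses $b_n\in\rva{\gamma}$ as the inverse of $1/(1-F)\in\rva{1/\gamma}$, and Karamata's theorem on the tail, exactly as in Lemma~\ref{lem:fre-key} and Lemma~\ref{lem:fre-integral}). Hence $E_n(F)=b_n(1+(a_n/b_n)\EE(Z_\gamma)+o(1))=b_n(1+\gamma\cdot(\Gamma(1-\gamma)-1)/\gamma+o(1))=b_n(\Gamma(1-\gamma)+o(1))$, i.e. $E_n(F)\approx \Gamma(1-\gamma)\,F^{\inv}(1-1/n)$. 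For $\gamma=0$: with $b_n=F^{\inv}(1-1/n)$ and $a_n=\mu(b_n)$, we get $E_n(F)\approx b_n+\gamma^\star a_n = F^{\inv}(1-1/n)+\gamma^\star\mu(b_n)$; to rewrite this as a single quantile $F^{\inv}(1-e^{-\gamma^\star}/n)$ one uses that shifting the level of the quantile by a multiplicative constant $c$ corresponds asymptotically to adding $(\log c)\,\mu(b_n)$ — i.e. $F^{\inv}(1-c/n)\approx F^{\inv}(1-1/n)-(\log c)\,\mu(F^{\inv}(1-1/n))$ for Von Mises $F$ — which is precisely the computation appearing in the proof of Lemma~\ref{lem:mises-to-gumbel} (the shift $b_n^\eta=b_n+a_n\log\eta^\star$ there). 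Taking $c=e^{-\gamma^\star}$ gives $-(\log c)=\gamma^\star$, matching. For $\gamma<0$: now $\omega_1(F)<\infty$, and the natural normalization is $b_n=\omega_1(F)$, $a_n=F^{\inv}(1-1/n)-\omega_1(F)<0$ (or $|a_n|=\omega_1(F)-F^{\inv}(1-1/n)$, regularly varying of index $\gamma$ at zero). Then $E_n(F)\approx \omega_1(F)+a_n\EE(Z_\gamma)=\omega_1(F)-(\omega_1(F)-F^{\inv}(1-1/n))\cdot(-\EE(Z_\gamma))$; since for $\gamma<0$ one checks $-\EE(Z_\gamma)=-(\Gamma(1-\gamma)-1)/\gamma$ and, after combining with the $a_n/|a_n|$ sign and the Karamata constant relating $a_n$ to $\omega_1(F)-F^{\inv}(1-1/n)$ (analogous to the $a_n/b_n\to\gamma$ step but at zero), everything collapses to the factor $\Gamma(1-\gamma)$, giving $E_n(F)\approx\omega_1(F)-\Gamma(1-\gamma)(\omega_1(F)-F^{\inv}(1-1/n))$.

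The main obstacle — and the step deserving the most care — is the passage from convergence in distribution of $(M_n-b_n)/a_n$ to convergence of the \emph{expectations}, i.e. justifying uniform integrability of the sequence $\{(M_n-b_n)/a_n\}$ (equivalently, that no mass escapes to infinity in the first moment). For $\gamma\in(0,1)$ this is exactly where we need $\gamma<1$ (finite mean of the Fréchet limit) and it follows from the tail bound $1-F(t)=t^{-1/\gamma}\ell_F(t)$ together with a Potter-type bound on the slowly varying part, as in Lemma~\ref{lem:fre-integral}; for $\gamma\le 0$ the limit has lighter-than-exponential upper tail and (for $\gamma<0$) bounded support, so uniform integrability is easier but still must be invoked via the same cited proposition. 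The secondary technical point is the precise identification of the Karamata constants linking $a_n$ to $b_n$ (for $\gamma>0$) and to $\omega_1(F)-F^{\inv}(1-1/n)$ (for $\gamma<0$), and the quantile-shift identity for the Gumbel case; each of these is routine given the preliminaries already established (Lemmas~\ref{lem:von-mises-prop}, \ref{prop:Gumbel-prop}, \ref{lem:fre-key}, and the arguments reused from Lemma~\ref{lem:mises-to-gumbel}), so I would state them as short claims and dispatch them with one or two lines each rather than re-deriving Karamata's theorem.
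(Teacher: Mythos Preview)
Your approach is essentially the paper's: both rely on convergence of first moments of the normalized maximum, invoking the same moment-convergence result from extreme value theory. The paper's execution is simply more direct in the choice of normalizing constants. For $\gamma\in(0,1)$ it takes $b_n=0$ and $a_n=F^{\inv}(1-1/n)$, so the limit is the Fr\'echet law $\Phi_{1/\gamma}$ with mean $\Gamma(1-\gamma)$ outright, avoiding your detour through the GEV mean $(\Gamma(1-\gamma)-1)/\gamma$ and the extra Karamata step $a_n/b_n\to\gamma$. For $\gamma=0$ the two arguments coincide: your quantile-shift identity is exactly the paper's use of $a_n\approx\bigl(F^{\inv}(1-1/(xn))-F^{\inv}(1-1/n)\bigr)/\log x$ (from \cite[Theorem~1.1.6]{dehaan-ferreira-evt}) evaluated at $x=e^{\gamma^\star}$.

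Your $\gamma<0$ case, however, has a concrete error, not just missing detail. You center at $b_n=\omega_1(F)$ with $a_n=F^{\inv}(1-1/n)-\omega_1(F)<0$, but then insert the GEV mean $\EE(Z_\gamma)=(\Gamma(1-\gamma)-1)/\gamma$. That mean belongs to the normalization $b_n=F^{\inv}(1-1/n)$, $a_n=-\gamma\bigl(\omega_1(F)-F^{\inv}(1-1/n)\bigr)>0$; with \emph{your} centering the limit is (up to sign) the Reverse Weibull $\Psi_{-1/\gamma}$, whose mean is $-\Gamma(1-\gamma)$, not $(\Gamma(1-\gamma)-1)/\gamma$. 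A sanity check at $\gamma=-1$ exposes the mismatch: the GEV mean is $0$, so your displayed line gives $E_n(F)\approx\omega_1(F)$, whereas the correct answer is $\omega_1(F)-\Gamma(2)\bigl(\omega_1(F)-F^{\inv}(1-1/n)\bigr)=F^{\inv}(1-1/n)$. The ``everything collapses via a Karamata constant'' hand-wave does not repair this --- there is no missing scalar that turns $0$ into the right answer. The fix is precisely what the paper does: take $a_n=\omega_1(F)-F^{\inv}(1-1/n)>0$ and $b_n=\omega_1(F)$, use that the limit $\Psi_{-1/\gamma}$ has mean $-\Gamma(1-\gamma)$, and read off $E_n(F)\approx\omega_1(F)-\Gamma(1-\gamma)\bigl(\omega_1(F)-F^{\inv}(1-1/n)\bigr)$ in one line.
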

Before proving the lemma, we recall a useful fact from extreme value theory.
For $M_n^1 = \max\set{X_1, \dots, X_n}$ and $U(x)=F^{-1}(1-x)$, when \eqref{eq:gnedenko-mu-approx-max} is satisfied, we can take the following sequences (see, e.g., \cite[Corollary 1.2.4]{dehaan-ferreira-evt}):
\begin{enumerate}
    \item $a_n = U(1/n)$ and $b_n = 0$, if $\gamma \in (0,1)$,
    \item $b_n = U(1/n)$ and appropriately chosen $a_n$ if $\gamma = 0$, and
    \item $a_n = \omega_1(F) - U(1/n)$ and $b_n = \omega_1(F)$, if $\gamma < 0$,
\end{enumerate}
Let $Y_n = (M_n^1 - b_n)/a_n$. 
The above implies that the sequence $\{Y_n\}_{n\in \NN}$ converges in distribution to a random variable $Z$ distributed according to $G_\gamma$. Notice that for $\gamma\in (0,1)$ we have $\E[Z] = \Gamma(1 - \gamma)$, for $\gamma = 0$ we have $\E[Z] = \gamma^{\star}$, and for $\gamma < 0$ we have $\E[Z] = - \Gamma(1 - \gamma)$; see, e.g., \cite[Theorem 5.3.1]{dehaan-ferreira-evt}.
Now we have all the ingredients to prove Lemma \ref{lem:mu-approx}.
\begin{proof}[Proof of Lemma \ref{lem:mu-approx}]
For large enough $n$ and $\gamma \in (0,1)$ we have
$\E(Y_n) \approx \Gamma(1 - \gamma)$ if and only if $E_n(F) \approx \Gamma(1 - \gamma) a_n = \Gamma(1 - \gamma) F^{\inv}\prn{1 - {1}/{n}}.$
For $\gamma = 0$, we have $\E(Y_n) \approx \gamma^{\star}$ if and only if
\begin{equation}\label{eq:mu-approx-max-1}
E_n(F) \approx a_n \gamma^{\star} + b_n = a_n \gamma^{\star} + U\prn{{1}/{n}} = a_n \gamma^{\star} + F^{\inv}\prn{1 - {1}/{n}}.
\end{equation}
Also, for $\gamma = 0$, by \cite[Theorem 1.1.6]{dehaan-ferreira-evt}, we have that, for any $x > 1$
\begin{equation}\label{eq:mu-approx-max-2}
a_n \approx \frac{U\prn{{1}/{(x n)}} - U\prn{{1}/{n}}}{\log{x}} = \frac{F^{\inv}\prn{1 - {1}/{(x n)}} - F^{\inv}\prn{1 - {1}/{n}}}{\log{x}}.
\end{equation}
Combining \eqref{eq:mu-approx-max-1} and \eqref{eq:mu-approx-max-2}, we get
\[
E_n(F) \approx \frac{\gamma^{\star}}{\log{x}} F^{\inv}(1-1/(xn)) + \prn{1 - \frac{\gamma^{\star}}{\log{x}}} F^{\inv}(1-1/n),
\]
and setting $x = e^{\gamma^{\star}}$ yields
$E_n(F) \approx F^{\inv}\prn{1 - {e^{-\gamma^{\star}}}/{n}}.$
Finally, for $\gamma < 0$, we have $\E(Y_n) \approx -\Gamma(1 - \gamma)$ if and only if $E_n(F) \approx -\Gamma(1 - \gamma) a_n  + b_n = \omega_1(F) - \Gamma(1 - \gamma) \prn{\omega_1(F) - F^{\inv}(1-1/n)}.$
\end{proof}

\begin{restatable}{lemma}{multQApprox}\label{lem:mult-quantile-approx}
For $\gamma\in (-\infty,1)$, $F\in D_{\gamma}$, and $c > 0$, the following holds:
\begin{enumerate}[label=\normalfont(\roman*)]  
\item For $\gamma \in (0,1)$, we have $F^{\inv}\prn{1 - {c}/{n}} \approx c^{-\gamma} F^{\inv}\prn{1 - {1}/{n}}$.
\item For $\gamma \in (0,1)$, we have $F^{\inv}\prn{1 - \frac{1}{n-c}} \approx \prn{\frac{n}{n-c}}^{-\gamma} F^{\inv}\prn{1 - {1}/{n}}$.
\item For $\gamma < 0$, we have $\omega_1(F) - F^{\inv}\prn{1 - {c}/{n}} \approx c^{-\gamma} \prn{\omega_1(F) - F^{\inv}\prn{1 - {1}/{n}}}$.
\item For $\gamma < 0$, we have $\omega_1(F) - F^{\inv}\prn{1 - \frac{1}{n-c}} \approx \prn{\frac{n}{n-c}}^{-\gamma} \prn{\omega_1(F) - F^{\inv}\prn{1 - {1}/{n}}}$.
\end{enumerate}
\end{restatable}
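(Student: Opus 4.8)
The plan is to derive all four estimates from a single structural fact: for $F\in D_\gamma$ the upper tail quantile function of $F$ is regularly varying, and once this is in hand each of (i)--(iv) is an elementary manipulation of the representation $f(t)=t^{\gamma}L(t)$ with $L$ slowly varying. Concretely, set $U_F(t)=F^{\inv}(1-1/t)$. When $\gamma\in(0,1)$ (the Fr\'echet case, $\alpha=1/\gamma$), the hypothesis $F\in D_\gamma$ is equivalent to $U_F\in\rva{\gamma}$ at infinity; in the present paper this also follows directly from Lemma~\ref{lem:fre-key}, since $1-F(t)=t^{-\alpha}\ell_F(t)$ with $\ell_F$ slowly varying forces $U_F(t)=t^{\gamma}L(t)$ for some slowly varying $L$, with $U_F(t)\to\infty$. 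When $\gamma<0$ (the reversed Weibull case, $\alpha=-1/\gamma$), the corresponding fact is that $\widetilde U_F(t):=\omega_1(F)-F^{\inv}(1-1/t)$ lies in $\rva{\gamma}$ at infinity, i.e.\ $\widetilde U_F(t)=t^{\gamma}L(t)$ with $L$ slowly varying and $\widetilde U_F(t)\to 0$; this is classical (see, e.g., \cite{dehaan-ferreira-evt,libroextremo}) and is the natural dual of the Fr\'echet characterization.

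Granting this, I would carry out the four computations in order. For (i): $F^{\inv}(1-c/n)=U_F(n/c)=(n/c)^{\gamma}L(n/c)$ while $F^{\inv}(1-1/n)=U_F(n)=n^{\gamma}L(n)$, so the ratio of the two sides equals $c^{-\gamma}\,L(n/c)/L(n)$, which tends to $c^{-\gamma}$ as $n\to\infty$ by the defining property of slow variation (here $c>0$ is a fixed constant), giving the claimed approximation. For (ii): $F^{\inv}(1-1/(n-c))=U_F(n-c)=(n-c)^{\gamma}L(n-c)$, so dividing by $U_F(n)$ produces $(n/(n-c))^{-\gamma}\cdot L(n-c)/L(n)$; writing $n-c=n(1-c/n)$ with $1-c/n\to 1$, the slowly varying ratio $L(n-c)/L(n)$ tends to $1$, which yields the estimate in the stated form with the explicit prefactor $(n/(n-c))^{-\gamma}$. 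Parts (iii) and (iv) are verbatim repetitions with $U_F$ replaced by $\widetilde U_F$: for instance $\omega_1(F)-F^{\inv}(1-c/n)=\widetilde U_F(n/c)=(n/c)^{\gamma}L(n/c)$, and division by $\widetilde U_F(n)=\omega_1(F)-F^{\inv}(1-1/n)$ gives $c^{-\gamma}\,L(n/c)/L(n)\to c^{-\gamma}$, and similarly for (iv) with $n-c$ in place of $n/c$.

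The one genuinely delicate point, which I expect to be the main obstacle, is the step in (ii) and (iv) where the argument ratio $n/(n-c)$ drifts to $1$ with $n$: the pointwise definition of slow variation only supplies $L(\lambda t)/L(t)\to 1$ for each \emph{fixed} $\lambda$, not for a moving target. This is resolved by Karamata's uniform convergence theorem for slowly varying functions, which gives $L(\lambda t)/L(t)\to 1$ \emph{uniformly} over $\lambda$ in any compact subset of $(0,\infty)$; since $n/(n-c)$ eventually lies in, say, $[1/2,2]$, uniformity over that interval forces $L(n-c)/L(n)\to 1$. (In (i) and (iii) this subtlety does not arise, as the shift is by the fixed factor $c$.) The only remaining bookkeeping is to keep the prefactors $c^{-\gamma}$ and $(n/(n-c))^{-\gamma}$ explicit rather than absorbing them, since these are precisely the corrections needed when the identities are substituted into the recurrences of Lemma~\ref{lem:opt-dp} and into Lemma~\ref{lem:mu-approx} in the competition-complexity analysis.
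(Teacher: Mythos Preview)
Your proposal is correct and follows essentially the same route as the paper: both reduce everything to the regular variation of $t\mapsto F^{-1}(1-1/t)$ (respectively $\omega_1(F)-F^{-1}(1-1/t)$), handle (i) and (iii) directly from the definition with the fixed scale $1/c$, and identify the drifting scale $(n-c)/n\to 1$ in (ii) and (iv) as the only delicate point requiring a uniform control tool. The sole difference is the device used for that step: the paper invokes Potter bounds on the regularly varying function, whereas you use the Karamata representation $t^{\gamma}L(t)$ together with the uniform convergence theorem for slowly varying $L$; these are interchangeable standard tools and yield the same conclusion.
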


\begin{proof}
(i). Let $\gamma \in (0,1)$. Since $F \in D_\gamma$, we have $U\circ \f{1}{x} \in \text{RV}_{\gamma}$; see, e.g., \cite[Corollary 1.2.10]{dehaan-ferreira-evt}. In particular, we have
\[
\lim_{n \to \infty} {\frac{F^{\inv}(1-1/(xn))}{F^{\inv}(1-1/n)}}=\lim_{n \to \infty} \frac{U({1}/{(x n)})}{U({1}/{n})} = x^{\gamma},
\]
for all $x > 0$. 
In particular, for $x = {1}/{c}$ we get
$F^{\inv}(1-c/n) \approx c^{-\gamma} F^{\inv}(1-1/n)$.

(ii). Next, by the Potter bounds on regularly varying functions (see, e.g., \cite[Theorem 1.5.6]{bingham-rv}), since $U\circ \f{1}{x} \in \text{RV}_{\gamma}$, we have that for every $\delta_1, \delta_2 > 0$ and every $x,y > 0$, there exists a $x_0 = x_0(\delta_1, \delta_2) > 0$ such that for all $x,y \geq x_0$,
\[
(1- \delta_1) \prn{\frac{x}{y}}^{\gamma - \delta_2} \leq \frac{U(1/x)}{U(1/y)} \leq (1+\delta_1) \prn{\frac{x}{y}}^{\gamma + \delta_2},
\]
which implies that
\[
\limsup_{\delta_1, \delta_2} (1- \delta_1) \prn{\frac{n-c}{n}}^{\gamma - \delta_2} \leq \lim_{n \to \infty} \frac{U\prn{\frac{1}{n-c}}}{U\prn{\frac{1}{n}}} \leq \liminf_{\delta_1, \delta_2} (1+\delta_1) \prn{\frac{n-c}{n}}^{\gamma + \delta_2},
\]
and thus
\[
\lim_{n \to \infty} \frac{U\prn{\frac{1}{n-c}}}{U\prn{\frac{1}{n}}} = \prn{\frac{n}{n-c}}^{-\gamma},
\]
yielding $F^{\inv}\prn{1 - \frac{1}{n-c}} \approx \prn{\frac{n}{n-c}}^{-\gamma} F^{\inv}\prn{1 - {1}/{n}}$.

(iii). Next, let $\gamma < 0$. 
Since $F \in D_\gamma$, we have $\omega_1(F) - U\circ \f{1}{x}\in \text{RV}_{\gamma}$: see, e.g., \cite[Corollary 1.2.10]{dehaan-ferreira-evt}. In particular, we have
\[
\lim_{n \to \infty} {\frac{\omega_1(F) - F^{\inv}(1-1/(xn))}{\omega_1(F) - F^{\inv}(1-1/n)}}=\lim_{n \to \infty} \frac{\omega_1(F) - U(1/(xn))}{\omega_1(F) - U(1/n)} =x^{\gamma},
\]
for all $x > 0$. 
In particular, for $x = {1}/{c}$ we get
$x^{\star} - F^{\inv}(1-c/n) \approx c^{-\gamma} \prn{x^{\star} - F^{\inv}(1-1/n)}$.

(iv). Finally, by the Potter bounds on regularly varying functions (see, e.g., \cite[Theorem 1.5.6]{bingham-rv}), since $\omega_1(F) - U\circ \f{1}{x} \in \text{RV}_{\gamma}$, we have that for every $\delta_1, \delta_2 > 0$ and every $x,y > 0$, there exists a $x_0 = x_0(\delta_1, \delta_2) > 0$ such that for all $x,y \geq x_0$,
\[
(1- \delta_1) \prn{\frac{x}{y}}^{\gamma - \delta_2} \leq \frac{\omega_1(F) - U(1/x)}{\omega_1(F) - U(1/y)} \leq (1+\delta_1) \prn{\frac{x}{y}}^{\gamma + \delta_2},
\]
which implies that
\[
\limsup_{\delta_1, \delta_2} (1- \delta_1) \prn{\frac{n-c}{n}}^{\gamma - \delta_2} \leq \lim_{n \to \infty} \frac{\omega_1(F) - U\prn{\frac{1}{n-c}}}{\omega_1(F) - U\prn{\frac{1}{n}}} \leq \liminf_{\delta_1, \delta_2} (1+\delta_1) \prn{\frac{n-c}{n}}^{\gamma + \delta_2},
\]
and thus
\[
\lim_{n \to \infty} \frac{\omega_1(F) - U\prn{\frac{1}{n-c}}}{\omega_1(F) - U\prn{\frac{1}{n}}} = \prn{\frac{n}{n-c}}^{-\gamma},
\]
yielding $\omega_1(F) - F^{\inv}\prn{1 - \frac{1}{n-c}} \approx \prn{\frac{n}{n-c}}^{-\gamma} \prn{\omega_1(F) - F^{\inv}\prn{1 - {1}/{n}}}$.
\end{proof}

Next, we use Lemmas~\ref{lem:mu-approx} and~\ref{lem:mult-quantile-approx} to characterize the ratio of the prophet's expected value for $n-1$ and $n$ as $n \to \infty$. 
\begin{restatable}{lemma}{muRatioApprox}\label{lem:mu-ratio-approx}
For $\gamma\in (-\infty,1)$ and $F\in D_{\gamma}$, the following holds:
\begin{enumerate}[label=\normalfont(\roman*)]  
\item For $\gamma \in (0,1)$, we have $E_{n-1}(F)/E_{n}(F) = 1 - \gamma/n + o({1}/{n})$.
\item For $\gamma < 0$, we have ${(\omega_1(F) - E_{n-1}(F))}/{(\omega_1(F) - E_{n}(F))} = 1 - \gamma/n + o(1/n)$.
\end{enumerate}
\end{restatable}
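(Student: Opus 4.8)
The plan is to reduce everything to the elementary identity
\[
E_n(F)-E_{n-1}(F)=\frac{1}{n}\,\EE\!\left(M_n^1-M_n^2\right),
\]
which holds for every absolutely continuous $F$ with $\EE(M_n^1)<\infty$. To prove it I would condition on the first $n-1$ samples, getting $E_n(F)-E_{n-1}(F)=\EE\big((X_n-M_{n-1}^1)^+\big)$, and then symmetrize: this equals $\tfrac1n\EE\big(\sum_{i=1}^n(X_i-\max_{j\neq i}X_j)^+\big)$, and since ties occur with probability zero, exactly one summand is nonzero and equals $M_n^1-M_n^2$. (Equivalently, this is the classical order-statistics recurrence $\EE(M_n^2)=nE_{n-1}(F)-(n-1)E_n(F)$.) The reason this identity is exactly what is needed: it writes the increment as $\tfrac1n$ times a quantity whose ratio to $E_n(F)$ (resp.\ to $\omega_1(F)-E_n(F)$) I will show \emph{converges}; multiplying a convergent ratio $c+o(1)$ by $\tfrac1n$ automatically yields $\tfrac cn+o(1/n)$, which is the refinement over the plain $\approx$-estimates the statement demands.

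For part (i), $F$ has extreme type Fr\'echet with $\alpha=1/\gamma>1$, and I take the scaling sequence $a_n=F^{-1}(1-1/n)$ of Lemma~\ref{lem:fre-key}. Using the convergence-of-moments theorem for extreme order statistics already invoked in the proof of Lemma~\ref{lem:frechet-technical-2} (\cite[Proposition~2.1, p.~77]{libroextremo}), together with the moment computation carried out there, one gets $\EE(M_n^j)/a_n\to\Gamma(j-\gamma)/\Gamma(j)$ for $j=1,2$. Hence $E_n(F)/a_n\to\Gamma(1-\gamma)$ and $\EE(M_n^1-M_n^2)/a_n\to\Gamma(1-\gamma)-\Gamma(2-\gamma)=\gamma\,\Gamma(1-\gamma)$, so $\EE(M_n^1-M_n^2)/E_n(F)\to\gamma$. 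Plugging this into the identity gives
\[
\frac{E_{n-1}(F)}{E_n(F)}=1-\frac1n\cdot\frac{\EE(M_n^1-M_n^2)}{E_n(F)}=1-\frac{\gamma+o(1)}{n}=1-\frac{\gamma}{n}+o\!\left(\frac1n\right).
\]

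For part (ii), $F$ has extreme type reversed Weibull with $\alpha=-1/\gamma$ and $\omega_1(F)<\infty$, and I use $a_n=\omega_1(F)-F^{-1}(1-1/n)\to 0$, $b_n=\omega_1(F)$. Running the same moment computation at the upper endpoint (the normalized variables $(\omega_1(F)-M_n^j)/a_n$ are nonnegative and converge in distribution and in mean to $-Z_j$ with $\EE(-Z_j)=\Gamma(j-\gamma)/\Gamma(j)$, exactly as in Lemma~\ref{lem:frechet-technical-2} but with $\Psi_\alpha$ in place of $\Phi_\alpha$) yields $(\omega_1(F)-E_n(F))/a_n\to\Gamma(1-\gamma)$ and $\EE(M_n^1-M_n^2)/a_n\to\Gamma(2-\gamma)-\Gamma(1-\gamma)=-\gamma\,\Gamma(1-\gamma)$, hence $\EE(M_n^1-M_n^2)/(\omega_1(F)-E_n(F))\to-\gamma$. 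Writing $\omega_1(F)-E_{n-1}(F)=\big(\omega_1(F)-E_n(F)\big)+\big(E_n(F)-E_{n-1}(F)\big)$ and applying the identity,
\[
\frac{\omega_1(F)-E_{n-1}(F)}{\omega_1(F)-E_n(F)}=1+\frac1n\cdot\frac{\EE(M_n^1-M_n^2)}{\omega_1(F)-E_n(F)}=1+\frac{-\gamma+o(1)}{n}=1-\frac{\gamma}{n}+o\!\left(\frac1n\right).
\]
The one point that needs care, and the only place a naive approach fails, is the $o(1/n)$ versus $o(1)$ distinction: it is essential that the convergent ratio appears multiplied by $1/n$, and in particular one must \emph{not} try to obtain $E_n(F)/a_n\to\Gamma(1-\gamma)$ by dividing the $\approx$-estimate of Lemma~\ref{lem:mu-approx} by $a_n$ (in part~(ii), $a_n\to0$); the ratios $E_n(F)/a_n\to\Gamma(1-\gamma)$ and $(\omega_1(F)-E_n(F))/a_n\to\Gamma(1-\gamma)$ must come straight from moment convergence of order statistics, which is in fact what the proof of Lemma~\ref{lem:mu-approx} establishes en route. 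Beyond this bookkeeping and reusing the already-invoked moment-convergence input, I do not expect a genuine obstacle.
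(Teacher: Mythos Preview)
Your proof is correct and takes a genuinely different route from the paper's. The paper proves Lemma~\ref{lem:mu-ratio-approx} by chaining Lemma~\ref{lem:mu-approx} (the quantile approximation $E_n(F)\approx\Gamma(1-\gamma)\,F^{-1}(1-1/n)$, and its analogue for $\gamma<0$) with Lemma~\ref{lem:mult-quantile-approx}, which uses Potter bounds to relate $F^{-1}(1-1/(n-1))$ to $F^{-1}(1-1/n)$; the ratio $E_{n-1}(F)/E_n(F)$ is then read off as $(1+1/(n-1))^{-\gamma}=1-\gamma/n+o(1/n)$. You instead invoke the exact order-statistics identity $E_n(F)-E_{n-1}(F)=\tfrac1n\,\EE(M_n^1-M_n^2)$ and couple it with the moment limits $\EE(M_n^j)/a_n\to\Gamma(j-\gamma)/\Gamma(j)$ for $j=1,2$ (and their reversed-Weibull counterparts), which are precisely the inputs the paper already uses in proving Lemma~\ref{lem:mu-approx} and Lemma~\ref{lem:frechet-technical-2}. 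The payoff of your route is that the factor $1/n$ appears \emph{exactly} in the identity, so a plain limit of $\EE(M_n^1-M_n^2)/E_n(F)$ (resp.\ of $\EE(M_n^1-M_n^2)/(\omega_1(F)-E_n(F))$) suffices and the $o(1/n)$ error is automatic; the paper's approach is more modular in that it reuses the two preceding lemmas verbatim, but obtaining an $o(1/n)$ remainder from a product of asymptotic equivalences requires tracking error terms that your argument sidesteps entirely.
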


\begin{proof}
Let $\gamma \in (0,1)$. 
By Lemma~\ref{lem:mu-approx} we have $E_{n-1}(F) \approx \Gamma(1 - \gamma) F^{\inv}(1-1/(n-1))$, and using Lemma~\ref{lem:mult-quantile-approx}, we get
\[
E_{n-1}(F) \approx \Gamma(1 - \gamma) \prn{1+\frac{1}{n-1}}^{-\gamma} F^{\inv}(1-1/n).
\]
Again by Lemma~\ref{lem:mu-approx}, we have
$E_{n}(F) \approx \Gamma(1 - \gamma) F^{\inv}(1-1/n)$
and thus
\[
\frac{E_{n-1}(F)}{E_{n}(F)} \approx \prn{1+\frac{1}{n-1}}^{-\gamma} = 1 - {\gamma}/{n} + o(1/n).
\]
Similarly, for $\gamma < 0$, by Lemma~\ref{lem:mu-approx} we have $\omega_1(F) - E_{n-1}(F) \approx \Gamma(1 - \gamma) F^{\inv}(1-1/(n-1))$ and using Lemma~\ref{lem:mult-quantile-approx}, we get
\[
x^{\star} - E_{n-1}(F) \approx \Gamma(1 - \gamma) \prn{1 +\frac{1}{n-1}}^{-\gamma} F^{\inv}(1-1/n).
\]
By Lemma~\ref{lem:mu-approx}, we have
$\omega_1(F) - E_{n}(F) \approx \Gamma(1 - \gamma) F^{\inv}(1-1/n),$
and thus
\[
\frac{\omega_1(F) - E_{n-1}(F)}{\omega_1(F) - E_{n}(F)} \approx \prn{1+\frac{1}{n-1}}^{-\gamma} = 1 - {\gamma}/{n} + o(1/n).\qedhere
\]
\end{proof}
%%%%%%%%%%%%%%%%%%%%%%

\begin{lemma}\label{lem:gn-done}
When $\gamma<1$, for every $F\in D_\gamma$ we have
$\Gmax{n+1} \approx \Gmax{n} \prn{1 + \frac{\gamma}{1 - \gamma} \prn{1 - F(\Gmax{n})}}.$
\end{lemma}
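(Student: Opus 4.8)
The plan is to start from part~\ref{lem:opt-dp-g} of Lemma~\ref{lem:opt-dp}, which already isolates the increment $\Gmax{n+1}-\Gmax{n}=\int_{\Gmax{n}}^{\omega_1(F)}(1-F(u))\,\mathrm du$. Writing $g_n:=\Gmax{n}$ and $q_n:=1-F(g_n)$, the asserted estimate is, after a one–line rearrangement, \emph{equivalent} to showing that the error term
\[
\Delta_n:=\int_{g_n}^{\omega_1(F)}(1-F(u))\,\mathrm du-\frac{\gamma}{1-\gamma}\,g_n q_n
\]
tends to $0$. It is convenient to record that, since $X\ge 0$, one has $\int_{g_n}^{\omega_1(F)}(1-F(u))\,\mathrm du=\EE\big[(X-g_n)^+\big]$, so everything reduces to controlling the tail integral of $1-F$ started at $g_n$ together with the product $g_n q_n$.

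Next I would establish two preliminaries. First, $\{g_n\}$ is nondecreasing (since $\Gmax{n+1}=\EE[\max(X,\Gmax{n})]\ge\Gmax{n}$) and bounded above by $\omega_1(F)$; if its limit were $L<\omega_1(F)$, then $\Gmax{n+1}-\Gmax{n}=\int_{g_n}^{\omega_1(F)}(1-F)\ge\int_{L}^{\omega_1(F)}(1-F)>0$ for all $n$, which would make $\{g_n\}$ unbounded, a contradiction; hence $g_n\uparrow\omega_1(F)$ and $q_n\downarrow 0$. Second, the hypothesis $\gamma<1$ forces $F$ to have finite first moment: this is trivial when $\omega_1(F)<\infty$ (the reversed Weibull family and the bounded–support Gumbel distributions), it is exactly the standing restriction $\alpha>1$ in the Fréchet case, and for Gumbel with $\omega_1(F)=\infty$ it follows from the Von Mises representation of Lemma~\ref{prop:Gumbel-prop}, which forces $1-F(t)$ to decay faster than every negative power of $t$. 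From finite first moment together with $g_n\uparrow\omega_1(F)$ we obtain, by dominated convergence, $\EE[(X-g_n)^+]\to 0$, and also the standard consequence $t\,(1-F(t))\to 0$ as $t\to\omega_1(F)$.

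With these facts in hand I would finish by splitting on the sign of $\gamma$. For the first summand of $\Delta_n$ there is nothing left to do: $\int_{g_n}^{\omega_1(F)}(1-F)=\EE[(X-g_n)^+]\to 0$ in all three families. For the second summand, if $\gamma=0$ it is identically zero; if $\gamma<0$ then $\omega_1(F)<\infty$, so $\big|\tfrac{\gamma}{1-\gamma}g_nq_n\big|\le\big|\tfrac{\gamma}{1-\gamma}\big|\,\omega_1(F)\,q_n\to0$; and if $\gamma\in(0,1)$ then $g_nq_n=g_n(1-F(g_n))\to0$ by the second preliminary, hence $\tfrac{\gamma}{1-\gamma}g_nq_n\to0$ as well. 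This already yields $\Delta_n\to0$. Moreover, in the Fréchet case the value of the coefficient is not accidental: by Lemma~\ref{lem:fre-key}, $1-F(t)=t^{-\alpha}\ell_F(t)$ with $\alpha=1/\gamma>1$ and $\ell_F$ slowly varying, so $1-F$ is regularly varying at infinity with index $-\alpha<-1$, and Karamata's theorem (see, e.g., \cite[Theorem~B.1.5]{dehaan-ferreira-evt}) gives $\int_{g_n}^{\infty}(1-F(u))\,\mathrm du=\tfrac{g_n(1-F(g_n))}{\alpha-1}\,(1+o(1))=\tfrac{\gamma}{1-\gamma}\,g_nq_n\,(1+o(1))$, which is the refined reason behind the stated recurrence; combined with $g_nq_n\to0$ it again gives $\Delta_n\to0$.

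The step I expect to be the most delicate is the Fréchet case: Karamata's theorem delivers the \emph{ratio} asymptotics $\int_{g_n}^{\infty}(1-F)\sim\tfrac{\gamma}{1-\gamma}g_nq_n$, and to turn this into the \emph{difference} statement $\Delta_n\to0$ one has to couple it with the separate fact $g_nq_n\to0$ so that the multiplicative $o(1)$ error is genuinely absorbed. A secondary technicality is the finite–first–moment verification for unbounded–support Gumbel distributions — and, relatedly, checking there that $g_n\to\omega_1(F)=\infty$, so that all the limits above are taken as $t\to\infty$; both are routine once one invokes the Von Mises machinery recalled in Lemmas~\ref{lem:von-mises-prop} and~\ref{prop:Gumbel-prop}.
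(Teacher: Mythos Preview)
Your proof is correct and takes a more elementary route than the paper's. The paper starts from Lemma~\ref{lem:opt-dp}\ref{lem:opt-dp-inv} and applies Karamata's theorem to the quantile integral $\int_0^x F^{-1}(1-u)\,\mathrm du$ to obtain $\int_0^{q_n} F^{-1}(1-u)\,\mathrm du \approx g_n q_n/(1-\gamma)$, then rearranges. You instead start from Lemma~\ref{lem:opt-dp}\ref{lem:opt-dp-g} and observe that, once the claim is unpacked as $\Delta_n\to 0$, each of the two summands of $\Delta_n$ already tends to zero on its own, using only $\EE[X]<\infty$, $g_n\uparrow\omega_1(F)$, and the elementary tail fact $t(1-F(t))\to 0$. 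Your argument is uniform across the three families and in fact bypasses a delicacy in the paper's step: for $\gamma<0$ the function $u\mapsto F^{-1}(1-u)$ converges to $\omega_1(F)>0$ and is therefore slowly varying at $0$, so Karamata would yield the factor $1$ rather than $1/(1-\gamma)$; the paper's conclusion survives only because both sides vanish anyway, which is exactly your observation. What the paper's route buys, in the Fr\'echet range $\gamma\in(0,1)$, is the sharper ratio statement $\int_{g_n}^{\infty}(1-F)\sim\tfrac{\gamma}{1-\gamma}\,g_n q_n$, which explains why this particular coefficient appears---a point you also record but correctly flag as inessential for the lemma as stated.
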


\begin{proof}
By Karamata's theorem (see, e.g., \cite[Theorem B.1.5]{dehaan-ferreira-evt}), when $\gamma <1$ we have
$\int^x_0 {F^{\inv}(1-u) \dif u} \approx {x F^{\inv}(1-x)}/{(1 - \gamma)}.$
In particular, this implies that
\begin{align*}
\int^{1 - F(\Gmax{n})}_0 {F^{\inv}(1 -u) \dif u} &\approx \frac{\prn{1 - F(\Gmax{n})}F^{\inv}\prn{1 - (1 - F(\Gmax{n}))}}{1-\gamma}\\
&= \frac{\Gmax{n}\prn{1 - F(\Gmax{n})}}{1-\gamma}.
\end{align*}
Thus, by Lemma~\ref{lem:opt-dp}\ref{lem:opt-dp-inv},
\begin{align*}
\Gmax{n+1} &\approx \Gmax{n} \prn{F(\Gmax{n}) + \frac{1 - F(\Gmax{n})}{1-\gamma}}= \Gmax{n} \prn{1 + \frac{\gamma}{1 - \gamma} \: \: \prn{1 - F(\Gmax{n})}}.\qedhere
\end{align*}
\end{proof}
In the following lemma, we provide an explicit approximation of $\Gmax{n}$ with respect to the quantile function of the distribution and the value $\gamma$.
\begin{lemma}\label{lem:alg-quantile}
When $\gamma < 1$, for every $F \in D_\gamma$ we have
$\Gmax{n} \approx F^{\inv}\prn{1 - {(1 - \gamma)}/{(n+1)}}.$
\end{lemma}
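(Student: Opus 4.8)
The plan is to turn the multiplicative recursion of Lemma~\ref{lem:gn-done} into a one-dimensional recursion for the tail mass $s_n:=1-F(\Gmax{n})$ of $F$ at the current policy value, solve that recursion asymptotically, and then invert it through the regular variation of the quantile function $U(x):=F^{\inv}(1-x)$. First observe that $\{\Gmax{n}\}_{n\in\N}$ is non-decreasing, since $\Gmax{n+1}=\E(\max(X,\Gmax{n}))\ge \Gmax{n}$, and is bounded by $\omega_1(F)$; by Lemma~\ref{lem:opt-dp}\ref{lem:opt-dp-g} its increments $\Gmax{n+1}-\Gmax{n}=\int_{\Gmax{n}}^{\omega_1(F)}(1-F(u))\dif u$ would stay bounded away from $0$ if the sequence converged below $\omega_1(F)$, so $\Gmax{n}\uparrow\omega_1(F)$ and hence $s_n\downarrow 0$. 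For every distribution in the three extreme-value families (in the standard representations) the density is positive on the upper tail, so $F$ is strictly increasing there and $\Gmax{n}=U(s_n)$ for all large $n$. Since $U(1/\cdot)$ is regularly varying of index $\gamma$ — the fact already used in the proof of Lemma~\ref{lem:mult-quantile-approx} — the uniform convergence theorem for regularly varying functions gives $U(t_n)/U(t_n')\to 1$ whenever $t_n,t_n'\to 0$ with $t_n/t_n'\to 1$. Therefore the target $\Gmax{n}\approx F^{\inv}\!\bigl(1-\tfrac{1-\gamma}{n+1}\bigr)$ is equivalent to the scalar asymptotics $n\,s_n\to 1-\gamma$ (the exact normalization $n$ versus $n+1$ being immaterial, as $\tfrac{1-\gamma}{n}/\tfrac{1-\gamma}{n+1}\to 1$).

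The heart of the argument is the claim that, in all three families,
\[
s_{n+1}=s_n-\frac{1}{1-\gamma}\,s_n^{\,2}\,(1+o(1))\qquad\text{as }n\to\infty .
\]
For the Fr\'echet type ($\gamma\in(0,1)$) write $1-F(x)=x^{-1/\gamma}L(x)$ with $L$ slowly varying (Lemma~\ref{lem:fre-key} with $\alpha=1/\gamma$); Karamata's theorem yields $\int_{\Gmax{n}}^{\infty}(1-F(u))\dif u=\frac{\gamma}{1-\gamma}\,\Gmax{n}\,s_n\,(1+o(1))$, so by Lemma~\ref{lem:opt-dp}\ref{lem:opt-dp-g} we get $\Gmax{n+1}=\Gmax{n}\bigl(1+\frac{\gamma}{1-\gamma}s_n(1+o(1))\bigr)$, and then $s_{n+1}=(\Gmax{n+1})^{-1/\gamma}L(\Gmax{n+1})$; linearizing the power and using $L(\Gmax{n+1})/L(\Gmax{n})\to 1$ (because $\Gmax{n+1}/\Gmax{n}\to 1$) produces the displayed recursion with coefficient $\frac{1}{1-\gamma}$. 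For the reversed-Weibull type ($\gamma<0$) the same computation is carried out after substituting $\omega_1(F)-\Gmax{n}$, using that $\omega_1(F)-U(1/\cdot)\in\rva{\gamma}$ (again as in Lemma~\ref{lem:mult-quantile-approx}) together with the identities $\tfrac{-\gamma}{1-\gamma}=\tfrac{1}{\alpha+1}$ and $\alpha\cdot\tfrac{-\gamma}{1-\gamma}=\tfrac{1}{1-\gamma}$ for $\alpha=-1/\gamma$; the coefficient is again $\frac{1}{1-\gamma}$. For the Gumbel type ($\gamma=0$) use the Von Mises representation $1-F=\eta\,(1-V)$ with auxiliary function $\mu$ (Lemma~\ref{prop:Gumbel-prop} and Lemma~\ref{lem:von-mises-prop}): the standard tail estimate $\int_{\Gmax{n}}^{\omega_1(F)}(1-V(u))\dif u\sim\mu(\Gmax{n})(1-V(\Gmax{n}))$ gives $\Gmax{n+1}-\Gmax{n}=\mu(\Gmax{n})s_n(1+o(1))$, and then $\tfrac{1-F(\Gmax{n+1})}{1-F(\Gmax{n})}=\tfrac{\eta(\Gmax{n+1})}{\eta(\Gmax{n})}\exp\!\bigl(-\!\int_{\Gmax{n}}^{\Gmax{n+1}}\tfrac{\dif u}{\mu(u)}\bigr)=\exp\bigl(-s_n(1+o(1))\bigr)=1-s_n(1+o(1))$, i.e.\ the same recursion with $\frac{1}{1-\gamma}=1$.

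Given the recursion, since $s_n\to 0$ we have $s_{n+1}/s_n\to 1$ and
\[
\frac{1}{s_{n+1}}-\frac{1}{s_n}=\frac{s_n-s_{n+1}}{s_n\,s_{n+1}}=\frac{1}{1-\gamma}\,(1+o(1))\,\frac{s_n}{s_{n+1}}\longrightarrow\frac{1}{1-\gamma},
\]
so by the Stolz--Ces\`aro theorem $\frac{1}{n}\cdot\frac{1}{s_n}\to\frac{1}{1-\gamma}$, that is $n\,s_n\to 1-\gamma$; by the first paragraph this is exactly $\Gmax{n}\approx F^{\inv}\bigl(1-\tfrac{1-\gamma}{n+1}\bigr)$.

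The main obstacle is the middle paragraph: deriving the scalar recursion with error control sharp enough to survive the limit. The subtlety is that Lemma~\ref{lem:gn-done} pins down $\Gmax{n}$ only up to a lower-order additive term, whereas $s_n=1-F(\Gmax{n})$ is itself of order $1/n$, so one genuinely needs a second-order description of the tail of $F$ near $\omega_1(F)$; this is where Karamata's theorem and the slowly varying / Von Mises representations do the work, and it is precisely the Stolz--Ces\`aro formulation that makes a \emph{relative} $o(1)$ error on the quadratic term of the $s_n$-recursion sufficient.
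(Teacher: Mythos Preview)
Your approach is correct at the same heuristic level as the paper's proof, but it follows a genuinely different route. The paper argues via the Kennedy--Kertz result that $\Gmax{n}/E_n(F)$ converges to a nonzero constant, then combines this with Lemma~\ref{lem:gn-done} and the growth rate of $E_n(F)$ from Lemma~\ref{lem:mu-ratio-approx} to back out $1-F(\Gmax{n})\approx (1-\gamma)/(n+1)$. You avoid that black box entirely: you turn Lemma~\ref{lem:gn-done} into a one-step recursion for $s_n=1-F(\Gmax{n})$, observe that the Stolz--Ces\`{a}ro step only needs the \emph{relative} error on the quadratic term to vanish, and thereby obtain $ns_n\to 1-\gamma$ directly. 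Your argument is more self-contained and makes transparent that the analytic content is a Riccati-type difference equation; the paper's version has the advantage of tying the quantile to the competitive ratio explicitly, at the cost of importing an external convergence result.

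Both proofs share the soft spot you flag in your last paragraph: passing from $\Gmax{n+1}/\Gmax{n}=1+\tfrac{\gamma}{1-\gamma}s_n(1+o(1))$ to $s_{n+1}/s_n=1-\tfrac{1}{1-\gamma}s_n(1+o(1))$ implicitly uses $L(\Gmax{n+1})/L(\Gmax{n})=1+o(s_n)$ (respectively $\eta(\Gmax{n+1})/\eta(\Gmax{n})=1+o(s_n)$ in the Gumbel case), and that rate does not follow from regular variation alone. The paper hides this behind ``ignoring lower-order terms'' and the tacit assumption that the convergence $\Gmax{n}/E_n(F)\to\text{const}$ is informative at order $1/n$; you are at least explicit that this is where the work lies.
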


\begin{proof}
From \cite{KK91}, we know that the competitive ratio of the prophet inequality for distributions with extreme value is equal to a non-zero constant, which implies that
${\Gmax{n+1}}/{E_{n+1}(F)} - {\Gmax{n}}/{E_{n}(F)} \approx 0.$
Thus, for $\gamma \in (0,1)$, we obtain
\begin{align*}
& \frac{\Gmax{n+1}}{E_{n+1}(F)} - \frac{\Gmax{n}}{E_{n}(F)} \approx 0 \\
&\iff \frac{\Gmax{n}}{E_{n}(F)} \prn{\frac{E_{n}(F)}{E_{n+1}(F)} \prn{1 + \frac{\gamma}{1 - \gamma} \prn{1 - F(\Gmax{n})}} - 1} \approx 0 \\
&\iff \prn{1 - \frac{\gamma}{n+1} + o\left(\frac{1}{n+1}\right)} \prn{1 + \frac{\gamma}{1 - \gamma} \prn{1 - F(\Gmax{n})}} - 1 \approx 0 \\
&\iff \prn{1 - F(\Gmax{n})} \frac{\gamma}{1 - \gamma} \approx \frac{\gamma}{n+1} \\
&\iff 1 - F(\Gmax{n}) \approx \frac{1 - \gamma}{n+1}\iff \Gmax{n} \approx F^{\inv}\prn{1 - \frac{1 - \gamma}{n+1}},
\end{align*}
where the first equivalence follows from Lemma~\ref{lem:gn-done}, the second equivalence follows from Lemma~\ref{lem:mu-ratio-approx} and the third equivalence follows by ignoring lower-order terms.
%%%%%%%%%%%%%%%%%%%%%%%%%%%
For $\gamma = 0$, we need the following claim.
\begin{claim}\label{clm:g-approx-for-zero}
For $F \in D_0$ we have
$\int^{1 - F(\Gmax{n})}_0 {F^{\inv}\prn{1-u} \dif u} \approx F^{\inv}\prn{1 - \frac{1}{n+1}} \prn{1 - F(\Gmax{n})}.$
\end{claim}
\begin{proof}
The claim follows directly from \cite[Corollary 1.2.15]{dehaan-ferreira-evt} for $t = e/{(1 - F(\Gmax{n}))}$.
\end{proof}
Combining Claim~\ref{clm:g-approx-for-zero} with Lemma~\ref{lem:opt-dp}\ref{lem:opt-dp-inv} and Lemma~\ref{lem:gn-done}, we get
\[
\Gmax{n}\prn{1 - F(\Gmax{n})} \approx F^{\inv}\prn{1 - \frac{1}{n+1}} \prn{1 - F(\Gmax{n})} \iff \Gmax{n} \approx F^{\inv}\prn{1 - \frac{1}{n+1}}.
\]
%%%%%%%%%%%%%%%%%%%%%%%%%%%%%%%%%%%%
When $\gamma<0$, we necessarily have $\omega_1(F) < \infty$. 
The analysis follows similarly to the $\gamma \in (0,1)$ case, but for the value $\omega_1(F) - \Gmax{n}$ instead of $\Gmax{n}$.
Notice that, by Lemma~\ref{lem:opt-dp}\ref{lem:opt-dp-g} we have
\begin{align*}
%\Gmax{n+1} &= \Gmax{n} + \int^{x^{\star}}_{\Gmax{n}} {\prn{1 - F(u)} \dif u} \\
%&= \Gmax{n} + x^{\star} - \Gmax{n} - \int^{x^{\star}}_{\Gmax{n}} F(u) \dif u \iff \\
\omega_1(F) - \Gmax{n+1} &= \int^{\omega_1(F)}_{\Gmax{n}} {F(u) \dif u}.
\end{align*}
%Next, we analyze the integral on the right-hand side.
\begin{claim}\label{lem:max-neg-g-int}
For $\gamma < 0$  we have
$\int^{\omega_1(F)}_{\Gmax{n}} {F(u) \dif u} \approx \prn{1 + \frac{\gamma}{1-\gamma} \prn{1 - F(\Gmax{n})} } \prn{\omega_1(F) - \Gmax{n}}.$
\end{claim}
\begin{proof}
Let $y = 1 - F(u)$, which implies that $\dif u = \prn{F^{\inv}\prn{1-y}}' \dif y$. Then,
\begin{align*}
\int^{\omega_1(F)}_{\Gmax{n}} {F(u) \dif u} 
%= \int^0_{1 - F(\Gmax{n})} {(1 - y) \prn{F^{\inv}\prn{1-y}}' \dif y} \\
%&= \brk{(1 - y) F^{\inv}\prn{1-y}}^0_{1 - F(\Gmax{n})} + \int^0_{1 - F(\Gmax{n})} {F^{\inv}\prn{1-y} \dif y} \\
&= \omega_1(F) - \Gmax{n} F(\Gmax{n}) - \int^{1 - F(\Gmax{n})}_0 {F^{\inv}\prn{1-y} \dif y} \\
%&= x^{\star} - \Gmax{n} F(\Gmax{n}) + \int^{1 - F(\Gmax{n})}_0 {\prn{x^{\star} - x^{\star}} \dif y} - \\
%&\qquad - \int^{1 - F(\Gmax{n})}_0 {F^{\inv}\prn{1-y} \dif y} \\
&= F(\Gmax{n}) \prn{\omega_1(F) - \Gmax{n}} + \int^{1 - F(\Gmax{n})}_0 {\prn{\omega_1(F) - F^{\inv}\prn{1-y}} \dif y}.
\end{align*}
Next, let $z = {1}/{y}$, which implies $\dif y = -z^{-2} \dif z$, and thus
\begin{align*}
\int^{\omega_1(F)}_{\Gmax{n}} {F(u) \dif u} &=  F(\Gmax{n}) \prn{\omega_1(F) - \Gmax{n}} + \int^\infty_{{1}/{(1 - F(\Gmax{n}))}} {z^{-2} \prn{\omega_1(F) - F^{\inv}\prn{1-{1}/{z}}} \dif z}.
\end{align*}
For $\gamma < 0$, the function $g(z) = \omega_1(F) - F^{\inv}\prn{1-{1}/{z}} \in \text{RV}_\gamma$ by \cite[Corollary 1.2.10]{dehaan-ferreira-evt} and $1+\gamma < 1$ which, by the general form of Karamata's theorem \cite[Theorem 1.5.11]{bingham-rv} for $\sigma = -2$ implies that
\[
\int^\infty_{{1}/{(1 - F(\Gmax{n}))}} {z^{-2} \prn{\omega_1(F) - F^{\inv}\prn{1 - {1}/{z}}} \dif z} \approx \frac{1 - F(\Gmax{n})}{1 - \gamma} \prn{\omega_1(F) - \Gmax{n}}.
\]
Therefore,
\begin{align*}
\int^{\omega_1(F)}_{\Gmax{n}} {F(u) \dif u} &\approx \prn{F(\Gmax{n}) + \frac{1 - F(\Gmax{n})}{1 - \gamma}} \prn{\omega_1(F) - \Gmax{n}} \\
&= \prn{1 - \prn{1 - F(\Gmax{n})} + \frac{1 - F(\Gmax{n})}{1 - \gamma}} \prn{\omega_1(F) - \Gmax{n}} \\
&= \prn{1 - \prn{1 - F(\Gmax{n})} \prn{1 - \frac{1}{1 - \gamma}}} \prn{\omega_1(F) - \Gmax{n}} \\
&= \prn{1 + \frac{\gamma}{1-\gamma} \prn{1 - F(\Gmax{n})} } \prn{\omega_1(F) - \Gmax{n}}.\qedhere
\end{align*}
\end{proof}
Thus,
\[
\frac{\omega_1(F) - \Gmax{n+1}}{\omega_1(F) - \Gmax{n}} \approx 1 + \frac{\gamma}{1-\gamma} \prn{1 - F(\Gmax{n})}.
\]
However, notice that
\begin{align*}
\frac{\omega_1(F) - \Gmax{n+1}}{\omega_1(F) - \Gmax{n}} &= \frac{\omega_1(F) - \Gmax{n+1}}{\omega_1(F) - E_{n+1}(F)} \cdot \frac{\omega_1(F) - E_{n}(F)}{\omega_1(F) - \Gmax{n}} \cdot \frac{\omega_1(F) - E_{n+1}(F)}{\omega_1(F) - E_{n}(F)}\\
&\approx \frac{\omega_1(F) - E_{n+1}(F)}{\omega_1(F) - E_{n}(F)},
\end{align*}
where the last asymptotic equality follows from the fact that the competitive ratio of the prophet inequality converges asymptotically to a non-zero constant \cite{KK91}. By Lemma~\ref{lem:mu-ratio-approx}, we have
$${(\omega_1(F) - E_{n+1}(F))}/{(\omega_1(F) - E_{n}(F))} \approx 1 - {\gamma}/{(n+1)} + o(1/n),$$
and thus, ignoring lower-order terms, we obtain
\[
1 + \frac{\gamma}{1-\gamma} \prn{1 - F(\Gmax{n})} \approx 1 - \frac{\gamma}{n+1},
\]
which implies that $\Gmax{n} \approx F^{\inv}\prn{1 - \frac{1-\gamma}{n+1}}$.
\end{proof}
We are now ready to prove Theorem~\ref{thm:comp-complexity-evt}.
\begin{proof}[Proof of Theorem \ref{thm:comp-complexity-evt}]
First, let $\gamma > 0$. For every $c>0$ we have
\[
\Gmax{cn} \approx F^{\inv}\prn{1 - \frac{1-\gamma}{c \: n+1}} \approx F^{\inv}\prn{1 - \frac{1-\gamma}{c \: n}} \approx \prn{\frac{1-\gamma}{c}}^{-\gamma} F^{\inv}(1-1/n),
\]
where the first asymptotic equality follows from Lemma~\ref{lem:alg-quantile}, and the second one follows from Lemma~\ref{lem:mult-quantile-approx}. 
From Lemma~\ref{lem:mu-approx} we have
$E_{n}(F) \approx \Gamma(1-\gamma) F^{\inv}(1-1/n)$and observe that
${((1-\gamma)/{c})}^{-\gamma} \geq \Gamma(1-\gamma)$ if and only if $c \geq (1-\gamma) \prn{\Gamma(1-\gamma)}^{\f{1}{\gamma}}$ and thus,
$$\ACCMmax = \inf\set{c>0 :\prn{\frac{1-\gamma}{c}}^{-\gamma} \geq \Gamma(1-\gamma)} = (1-\gamma) \prn{\Gamma(1-\gamma)}^{\f{1}{\gamma}}.$$
Next, let $\gamma = 0$. We have
$\Gmax{cn} \approx F^{\inv}\prn{1 - {1}/{(cn+1)}} \approx F^{\inv}\prn{1 - {1}/{(cn)}}$,
where the first asymptotic equality follows from Lemma~\ref{lem:alg-quantile}. 
From Lemma~\ref{lem:mu-approx} we have
$E_{n}(F) \approx F^{\inv}\prn{1 - {e^{-\gamma^\star}}/{n}},$
where $\gamma^\star \approx 0.577$ is the Euler-Mascheroni constant. Notice that since $F$ is a monotonically increasing function, we have
$F^{\inv}\prn{1 - {1}/{(cn)}} \geq F^{\inv}\prn{1 - {e^{-\gamma^\star}}/{n}}$ if and only if $c \geq e^{\gamma^{\star}}$.
Thus,
$\ACCMmax = e^{\gamma^{\star}}.$
%Also notice that $\lim_{\gamma \to 0} {(1-\gamma) \prn{\Gamma(1-\gamma)}^{\f{1}{\gamma}}} = e^{\gamma^\star}$.
Finally, let $\gamma < 0$ and recall that, in this case, $\omega_1(F) < \infty$.
We have
$\Gmax{cn} \approx F^{\inv}\prn{1 - {(1-\gamma)}/{(cn+1)}} \approx F^{\inv}\prn{1 - {(1-\gamma)}/{(cn)}}.$
From Lemma~\ref{lem:mu-approx}, we know that
$E_{n}(F) \approx \omega_1(F) - \Gamma(1-\gamma) \prn{\omega_1(F) - F^{\inv}(1-1/n)}$, and therefore, by Lemma~\ref{lem:alg-quantile}, $\Gmax{cn} \geq E_{n}(F)$ if and only if
\begin{align*}
\prn{\frac{1-\gamma}{c}}^{-\gamma} \prn{\omega_1(F) -  F^{\inv}(1-1/n)} &\leq \Gamma(1-\gamma) \prn{\omega_1(F) - F^{\inv}(1-1/n)}.
\end{align*} 
We conclude that
$\ACCMmax = \inf\set{c>0 : \prn{\frac{1-\gamma}{c}}^{-\gamma} \geq \Gamma(1-\gamma)} = (1-\gamma) \prn{\Gamma(1-\gamma)}^{\f{1}{\gamma}}.$
\end{proof}

\end{document}